\documentclass[%
 reprint, 
superscriptaddress,
 amsmath,amssymb,
 aps, prl, 
]{revtex4-2}

\usepackage{graphicx} 
\usepackage{dcolumn}
\usepackage{bm}
\usepackage{amsthm}
\usepackage[colorlinks=true]{hyperref}
\usepackage{mathtools}
\usepackage{dsfont}
\usepackage{color}
\usepackage{subfig}
\usepackage{comment}

\newtheorem{definition}{Definition}

\newtheorem{lemma}{Lemma}
\newtheorem{theorem}{Theorem}
\newtheorem{proposition}[theorem]{Proposition}

\newtheorem*{remark}{Remark}
\newtheorem*{protocol}{Protocol}

\newcommand{\mbb}{\mathbb}
\newcommand{\mc}{\mathcal}
\newcommand{\N}{\mathcal{N}}
\newcommand{\M}{\mathcal{M}}

\newcommand{\tr}{\textrm{Tr}}

\newcommand{\ket}[1]{|#1\rangle}
\newcommand{\bra}[1]{\langle #1|}
\newcommand{\op}[2]{|#1\rangle\langle #2|}

\newcommand{\1}{\mathds{1}}

\newcommand{\wt}{\widetilde}
\newcommand{\state}[1]{\ket{#1}\bra{#1}}

\definecolor{cool_green}{rgb}{0.0, 0.5, 0.0}

\begin{document}
\title{Universal Improvement of Channel Fidelity via Entanglement Assistance}
\author{Xinan Chen}\email{xinan.chen.137@gmail.com}
\affiliation{Department of Electrical and Computer Engineering, University of Illinois at Urbana-Champaign, Urbana, Illinois 61801, USA}
\author{Eric Chitambar}\email{echitamb@illinois.edu}
\affiliation{Department of Electrical and Computer Engineering, University of Illinois at Urbana-Champaign, Urbana, Illinois 61801, USA}
\date{\today}

\begin{abstract}
    Communication using quantum channels generally requires encoding and decoding on many identical uses of a quantum channel. In practical settings, decoherence may severely limit our ability to do so, potentially rendering each channel use nonidentical. Given $n$ nonidentical channels, we present an entanglement-assisted encoding and decoding strategy that yields a channel with higher worst-case fidelity than all of the $n$ given channels. The strategy is universal, in the sense that the improvement holds regardless of what channels are given, as long as they satisfy mild assumptions on their initial entanglement fidelities. Our protocol is optimal when the given channels are depolarizing channels, and is essentially the unique one that is universal. This idea can also be extended to classical channels, where shared randomness between the sender and the receiver allows universal enhancement of the probability of correct transmission. We then consider general quantum resource theories, providing a necessary condition and a sufficient condition for the possibility of such universal improvement of quantum resources.
\end{abstract}

\maketitle

\section{Introduction}
Communication using quantum channels is a central topic in quantum information theory. Following the conventional communication framework due to Claude Shannon \cite{shannon1948mathematical}, classical or quantum information is transmitted from a sender to a receiver by many uses of a quantum channel. To transmit a quantum state $\psi$, the sender encodes $\psi$ into $n$ registers, which are sent to the receiver by using the channel $n$ times. The channel incurs independent and identically distributed (i.i.d.) noise on each of the registers. Upon receiving the channel output, the receiver performs a decoding operation on the $n$ received registers, such that the decoded quantum state $\psi'$ is close to $\psi$. Moreover, this problem is usually studied in the asymptotic setting where the number of channel uses tends to infinity. The notion of quantum capacity of a channel, for instance, is defined precisely in this limit \cite{lloyd1997capacity,shor2002quantum,devetak2005private}. 

In practical settings, however, many factors, such as decoherence and experimental imperfections, will break the i.i.d. assumption. At the same time, arbitrary manipulation of large quantum systems is also unattainable in the near future. Therefore, quantum communication beyond the asymptotic and i.i.d. regime has garnered significant interest within the field. For example, Refs. \cite{buscemi2010quantum,beigi2016decoding,tomamichel2016quantum,wang2018semidefinite} gave bounds on the one-shot quantum capacity of a quantum channel, i.e., the amount of quantum information that can be transmitted through a single use of a quantum channel up to $\epsilon$ error. Along the same lines, Ref. \cite{datta2013one} assumes the sender and receiver are allowed to share entanglement resources and bounds the entanglement-assisted one-shot quantum capacity. 

All of these works require us to have precise knowledge of the quantum channels we intend to use. However, acquiring this knowledge by process tomography \cite{chuang1997prescription} can be prohibitive, since estimating a channel up to $\epsilon$ error requires at least  $\Omega(d_{in}^2d_{out}^2/\epsilon^2)$ queries \cite{haah2016sample,mele2025optimal,chen2026quantum}, where $d_{in}$ and $d_{out}$ are the dimensions of the input and output systems. Furthermore, environmental factors can affect the action of a channel in ways that are difficult to account for. For example, polarization drift in optical fibers is time-dependent and highly susceptible to factors such as temperature, wind speed, and physical stress \cite{neumann2022continuous,bersin2024development,craddock2024automated,kapoor2025public}. As such, in the present work, we ask an alternative question: Given $n$ potentially distinct quantum channels, is there a universal, one-size-fits-all method to extract a channel that has higher quality than all $n$ given channels? If so, we can blindly apply this method to any set of quantum channels without needing any information about the channels, regardless of how much they can vary. Similar questions have recently been raised in other contexts. In Ref. \cite{zang2025no}, the authors investigated whether a universal entanglement purification protocol exists and proved a no-go theorem for two-qubit entanglement. Later, it was shown that universal state purification cannot be achieved using only classically simulable operations \cite{he2025no}, while Ref. \cite{liu2025no} showed that universal state purification is impossible if a pure state output is required.

The quality of a quantum channel $\mc{N}$ with input system $A_0$ and output system $A_1$ such that $A_0\cong A_1\cong \mathbb{C}^d$ can be measured by its worst-case fidelity, defined as 
\begin{align}
    F(\mc{N}) = \inf_{\psi^{RA_0}}\ F\left(\mathrm{id}^R\otimes\mc{N}^{A_0\to A_1}(\psi^{RA_0}), \psi^{RA_1}\right)
\end{align}
where $F(\rho,\sigma)=\|\sqrt{\rho}\sqrt{\sigma}\|_1^2$ is the fidelity between $\rho$ and $\sigma$, and the infimum is over all reference systems $R$ and all pure states $\psi^{RA_0}$. We will henceforth simply call this quantity the fidelity of a quantum channel. In our work, we demonstrate that universal improvement of channel fidelity is possible by direct construction. To be more precise, given $n$ unknown and arbitrarily given quantum channels $\{\mc{N}_i\}_{i=1}^{n}$ with $F(\mc{N}_i)>1/d^2$, our entanglement-assisted protocol, by jointly acting on the $n$ channels, probabilistically produces a channel $\mc{N}_{\mathrm{out}}$ whose worst-case fidelity $F(\mc{N}_\mathrm{out}) > F(\mc{N}_i)$ for all $i$. We emphasize that our goal differs from the standard channel coding problem. Our main focus is that our protocol works for arbitrarily given $\{\N_i\}_{i=1}^n$ as long as they satisfy a weak fidelity constraint. 

Our approach is resource-theoretic at its core. In what follows, our problem will be phrased in the language of the resource theory of communication \cite{takagi2020application}. As a resource theory of quantum channels (i.e., a dynamical resource theory \cite{gour2019comparison,gour2020dynamicalentanglement,gour2020dynamicalresources}), it is defined by a set of free channels $\mbb{F}_{C}$ and a set of free superchannels $\mbb{F}_{SC}$. We will use $A_0, A_1, B_0, B_1$, etc. to denote quantum systems in which the subscript $0$ or $1$ indicates whether the system is the input or output of some quantum channel, respectively. A dynamical system $A_0\to A_1$ with input system $A_0$ and output system $A_1$ will be denoted simply by $A$. A superchannel $\Theta^{A\to B}$ is a completely CPTP-preserving linear map transforming channels on one dynamical system $A$ to another $B$. The action of any superchannel admits the physical decomposition \cite{chiribella2008transforming,gour2019comparison}
\begin{align}
    \Theta^{A\rightarrow B}[\mc{N}^A]=\mc{F}^{EA_1\rightarrow B_1}\circ\mc{N}^A\circ\mc{E}^{B_0\rightarrow A_0E} \label{eq:SC-structure}
\end{align}
where $\mc{E}$ and $\mc{F}$ are CPTP maps. Thus, a superchannel indeed represents a physical transformation of quantum channels, with a preprocessing (or an encoding operation) given by $\mc{E}$ and a postprocessing (or a decoding operation) given by $\mc{F}$. This is illustrated in Fig. \ref{fig:superchannel}. 

In our paper, we will consider probabilistic transformations of quantum channels, which are termed quantum superinstruments \cite{gour2020dynamicalresources,burniston2020necessary}. Recall that a probabilistic transformation of quantum states is modeled by a quantum instrument $\{\mc{N}_0,\mc{N}_1\}$, where $\mc{N}_0$ and $\mc{N}_1$ are completely positive and trace-non-increasing (CPTNI) maps, and $\mc{N}_0+\mc{N}_1$ is trace-preserving (TP). The quantum instrument acts on a quantum state $\rho$ by $\mc{N}_0(\rho)\otimes\state{0}+\mc{N}_1(\rho)\otimes\state{1}$, where classical registers $\state{0}$ and $\state{1}$ signal if the transformation succeeded or failed. Likewise, a probabilistic transformation of quantum channels is modeled as a quantum superinstrument $\{\Theta_0,\Theta_1\}$ where $\Theta_0$ and $\Theta_1$ are completely CPTNI-preserving supermaps, or equivalently, $\Theta_i^{A\to B}[\mc{N}^{RA}]$ is a CPTNI map whenever $\mc{N}^{RA}$ is a CPTP map. Furthermore, $\Theta_0+\Theta_1$ is required to be completely CPTP-preserving. The action of the superinstrument $\{\Theta_0,\Theta_1\}$ on a channel $\mc{N}$ is given by $\Theta_0[\mc{N}]\otimes\state{0}+\Theta_1[\mc{N}]\otimes\state{1}$. Similar to a superchannel, the action of a superinstrument $\{\Theta_x\}_x$ admits the physical decomposition
\begin{align}
    \Theta_x[\mc{N}] = \mc{F}^{EA_1\to B_1}_x \circ \mc{N}^A \circ \mc{E}^{B_0\to A_0E} 
\end{align}
where $\mc{E}$ is a CPTP map, while $\{\mc{F}_x\}_x$ is a quantum instrument (see Fig. \ref{fig:superinstrument}).

\begin{figure}
    \centering
    \subfloat[]{
    \includegraphics[width=\linewidth]{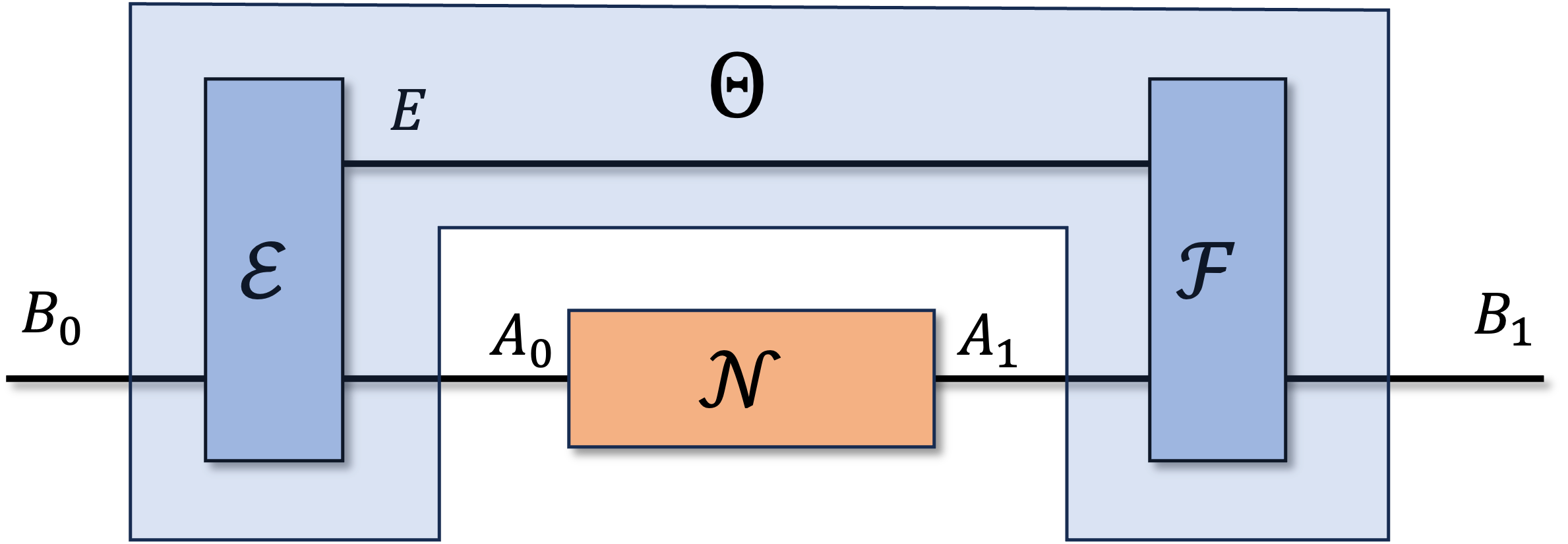}
    \label{fig:superchannel}
    } \\
    \subfloat[]{
    \includegraphics[width=\linewidth]{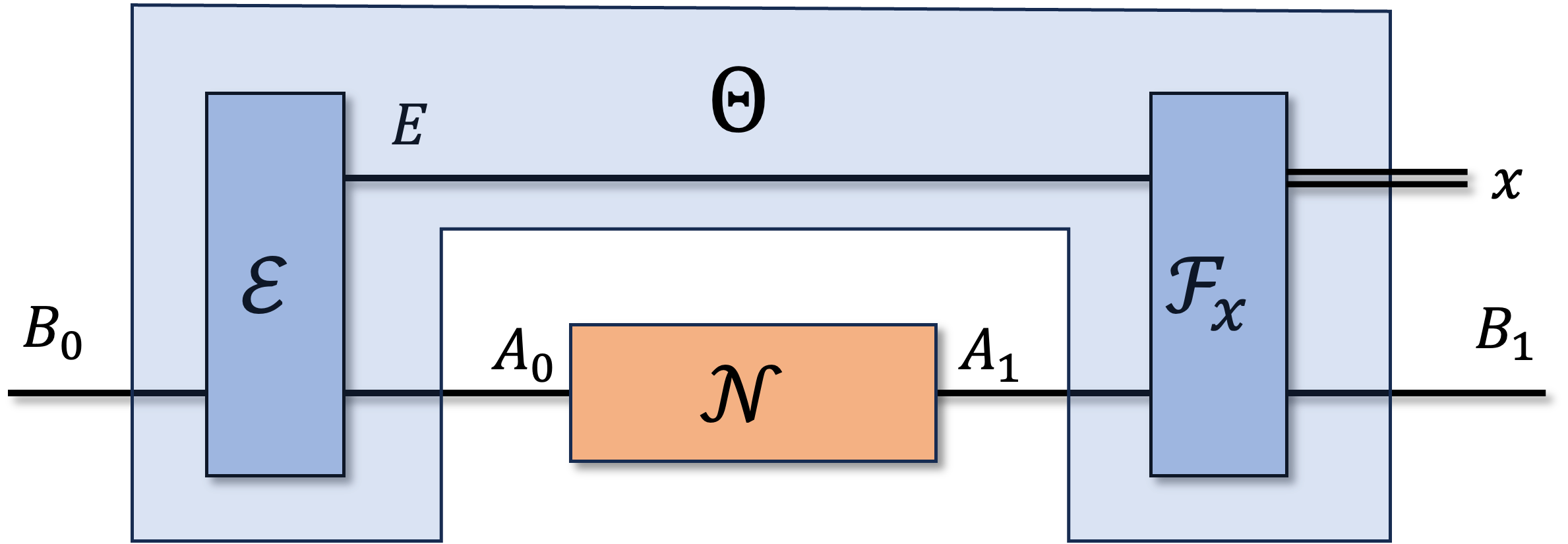}
    \label{fig:superinstrument}
    }
    \caption{A diagrammatic representation of (a) a superchannel and (b) a superinstrument.}
\end{figure}

In a communication scenario, the separation between the sender and the receiver restricts the superchannels and superinstruments they can apply to a quantum channel. This setting is formalized by the resource theory of communication~\cite{takagi2020application}, in which the two parties cannot communicate on their own. Instead, a quantum channel connecting them is provided as a resource, and they may only pre- and post-process it in a non-communicative fashion. A natural class of superchannels that encapsulates such operations is that of no-signaling (NS) superchannels, previously considered in Refs.~\cite{duan2016no,leung2015power,wang2018semidefinite,ji2024postselected}. Formally, a superchannel $\Theta^{A\to B}$ can be regarded as a bipartite channel from $B_0A_1$ to $A_0B_1$. We say that $\Theta$ is NS if~\cite{duan2016no}
\begin{align}
    \tr_{A_0}\left[\Theta\left(\rho^{A_1}\otimes\sigma^{B_0}\right)\right] &= \tr_{A_0}\left[\Theta\left(\rho^{A_1}\otimes\sigma'^{B_0}\right)\right], \\
    \tr_{B_1}\left[\Theta\left(\rho^{A_1}\otimes\sigma^{B_0}\right)\right] &= \tr_{B_1}\left[\Theta\left(\rho'^{A_1}\otimes\sigma^{B_0}\right)\right]
\end{align}
for all states $\rho,\rho'$ on $A_1$ and $\sigma,\sigma'$ on $B_0$. That is, the reduced output on the receiver's side ($B_1$) depends only on the receiver's input ($A_1$), and the reduced output on the sender's side ($A_0$) depends only on the sender's input ($B_0$); hence, neither party can signal to the other through $\Theta$. Under these constraints, the only channels the parties can implement without a resource channel are replacement channels $\mathcal{R}_\omega(\cdot)=\tr(\cdot)\,\omega$, which output a fixed state $\omega$ regardless of the input; these are therefore the free channels of the theory. Moreover, a superchannel is NS if and only if it maps every replacement channel to a replacement channel~\cite{takagi2020application}. In other words, the NS superchannels are exactly the resource-non-generating (RNG) superchannels of this theory. 

In this language, our goal is to find an appropriate NS superinstrument $\Theta=\{\Theta_0,\Theta_1\}$ such that given an arbitrary state $\psi$ that the sender wishes to send to the receiver and $n$ channels $\{\mc{N}_i\}_{i=1}^n$ arbitrarily chosen from some set $S$, the output state conditioned on success
\begin{align}
    \rho_{out}^{RB_1} = \frac{\Theta_0[\mc{N}_1\otimes\cdots\otimes\mc{N}_n](\psi^{RB_0})}{\tr\Theta_0[\mc{N}_1\otimes\cdots\otimes\mc{N}_n](\psi^{RB_0})}
\end{align}
has higher fidelity with respect to the initial state $\psi$ than the fidelity of the given channels $F(\mc{N}_i)$ (see Fig. \ref{fig:output-state} for an illustration). In other words, 
\begin{align}
    \inf_\psi F(\rho_{out}^{RB_1},\psi^{RB_0}) > F(\mc{N}_i) \quad\forall i.
\end{align}
Importantly, we reiterate that the superinstrument $\Theta$ does not allow for any additional communication; all communication from the sender to the receiver must come from the channels $\mc{N}_i$. However, they are allowed arbitrary quantum resources that do not convey information, including but not limited to quantum entanglement. It turns out that we can without loss of generality choose $\Theta_0$ such that the protocol succeeds with a probability that is fixed given the $n$ channels and is independent of the input state $\rho$ (see the Supplementary Material for details). That is,
\begin{align}
    \tr\Theta_0[\mc{N}_1\otimes\cdots\otimes\mc{N}_n](\rho)=p(\Theta_0;\mc{N}_1,\cdots,\mc{N}_n) \quad \forall\,\rho.
\end{align}
In this case, an output channel is well-defined and is given by 
\begin{align}\label{Eq:output-channel}
    \mc{N}_{out} = \frac{\Theta_0[\mc{N}_1\otimes\cdots\otimes\mc{N}_n]}{p(\Theta_0;\mc{N}_1,\cdots,\mc{N}_n)}.
\end{align}
Then, we can interpret our goal to be the task of universal channel distillation such that 
\begin{align}\label{eq:universal-def}
    F(\mc{N}_\mathrm{out}) > \max_i F(\mc{N}_i),
\end{align}
for arbitrary $\mc{N}_i\in S$. 
\begin{figure}
    \centering
    \includegraphics[width=\linewidth]{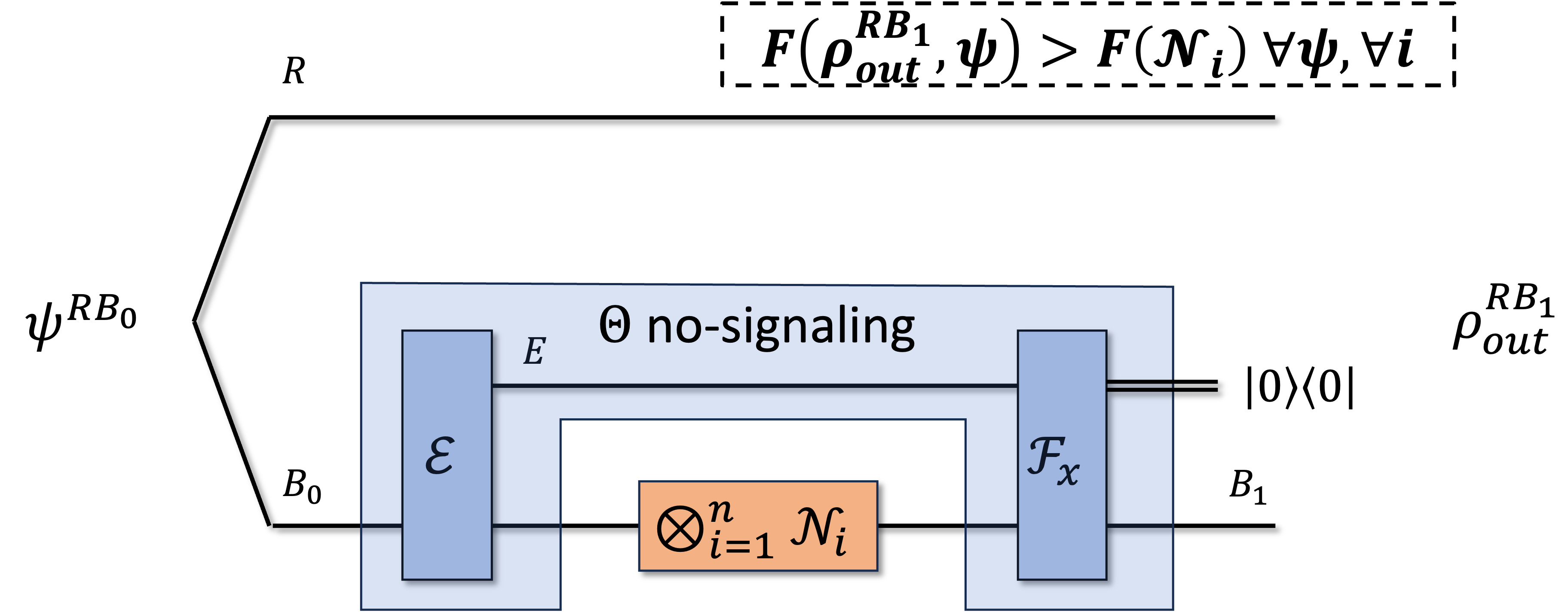}
    \caption{The task of universal channel fidelity improvement.}
    \label{fig:output-state}
\end{figure}

Even though, as we mentioned previously, strong constraints have been demonstrated for universal entanglement distillation and universal state purification, we show that such universal distillation is possible in the context of the resource theory of communication by constructing an explicit entanglement-assisted protocol. We will show that this protocol in fact achieves the optimal fidelity and success probability when the given channels are depolarizing channels. That is, even if we know precisely the depolarizing parameters for these channels, we cannot achieve a higher fidelity or succeed with higher probability in doing so. Moreover, our construction is essentially unique: all other possible constructions must be equivalent to our construction up to twirling the input and output channels and rescaling success probabilities. Our entanglement-assisted protocol can also be adapted to universally enhance the correct transmission probability of fully classical channels. Finally, we will generalize our construction to abstract resource theories, as we provide conditions for the existence of a universal resource distillation protocol.

\section{Results}

\subsection{The entanglement-assisted protocol}
\begin{figure*}
    \centering
    \includegraphics[width=0.8\linewidth]{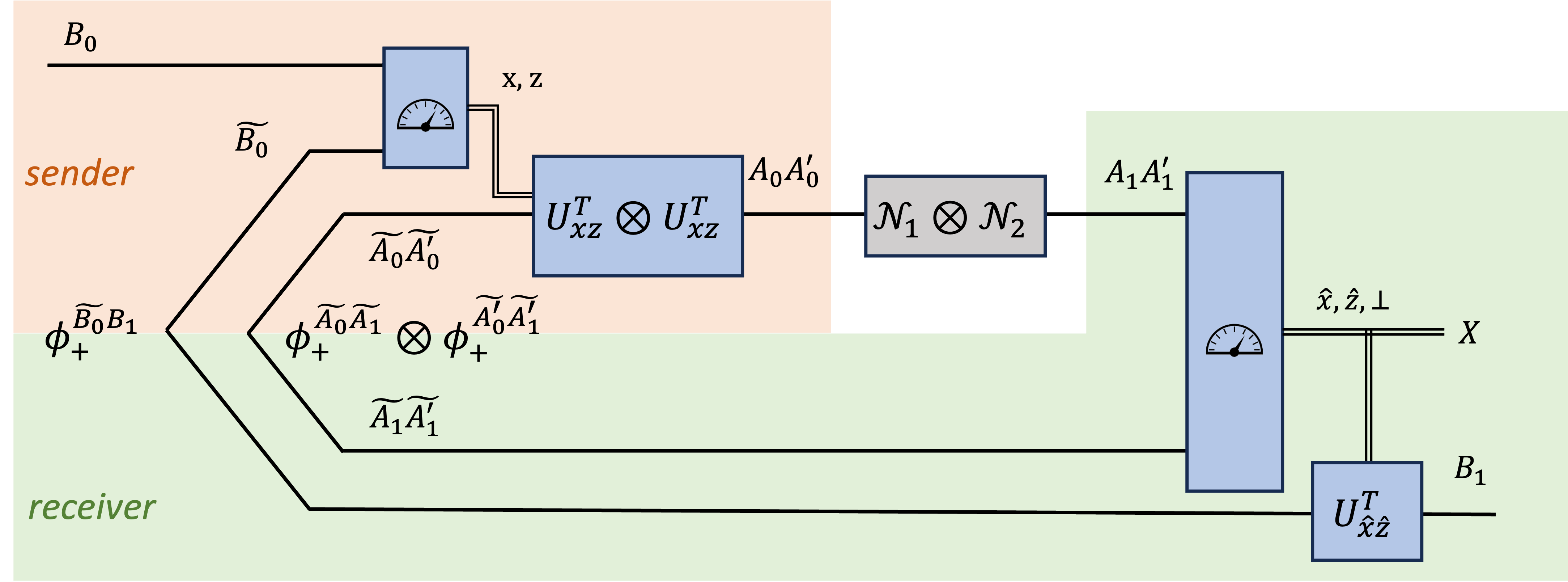}
    \caption{Constructing entanglement-assisted superinstrument using a teleportation-based protocol.}
    \label{fig:protocol}
\end{figure*}

Our first result is the existence of a protocol that universally improves the worst-case fidelity of the given channels, formally summarized by the following theorem. First, recall that the entanglement fidelity of a quantum channel $\mc{N}$ is defined as \cite{horodecki1999general}
\begin{align}
    F_e(\mc{N}) = F\left(\mathrm{id}^R\otimes\mc{N}^A(\phi_+^{RA_0}), \phi_+^{RA_1}\right),
\end{align}
where $\phi_+=\state{\phi_+}$ with $\ket{\phi_+} = \frac{1}{\sqrt{d}}\sum_{i=0}^{d-1}\ket{ii}$ is the $d$-dimensional maximally entangled state. Clearly, $F(\mc{N})\leq F_e(\mc{N})$. 
\begin{theorem}\label{thm:ent-assisted-performance}
    There exists an entanglement-assisted superinstrument $\{\Theta_0,\Theta_1\}$ such that the output channel $\mc{N}_{out} = \Theta_0[\mc{N}_1\otimes\cdots\otimes\mc{N}_n]/p(\Theta_0;\mc{N}_1,\cdots,\mc{N}_n)$ has worst-case fidelity
    \begin{align}
        F(\mc{N}_{out}) &= \frac{\prod_i F_e(\N_i)}{\prod_i F_e(\N_i) + \left(\frac{1}{d^2-1}\right)^{n-1}\prod_i(1-F_e(\N_i))}.
    \end{align}
    Moreover, $F(\N_{out}) > F_e(\N_i) \geq F(\N_i)$ for all $i$ whenever all $F_e(\N_i)$ satisfy $1/d^2<F_e(\N_i)<1$.
\end{theorem}

Before stating the protocol, let us recall the concept of generalized Bell states with local dimension $d$. Consider the shift and clock operators
\begin{align}
    X = \sum_{k=0}^{d-1} \ket{k\oplus1}\bra{k}, \qquad Z = \sum_{k=0}^{d-1} e^{i2\pi k/d} \state{k},
\end{align}
where $\oplus$ denotes addition modulo $d$. The set of generalized Pauli operators is $\{U_{xz}=X^xZ^z: x,z\in\{0,1,\cdots,d-1\}\}$. Now, the generalized Bell states are 
\begin{align}
    \ket{\phi_{xz}} = \mathbb{I}\otimes U_{xz}\ket{\phi_+},
\end{align}
and they form an orthonormal basis for $\mathbb{C}^d\otimes\mathbb{C}^d$. With this in mind, let us present the protocol.

\begin{protocol}
The entanglement-assisted protocol proceeds as follows (see Fig. \ref{fig:protocol} for an illustration).
\begin{enumerate}
    \item \textit{Shared resource.} The sender and the receiver share $\phi_+^{\widetilde{A_0}\widetilde{A_1}}\otimes\phi_+^{\widetilde{A_0'}\widetilde{A_1'}}\otimes\phi_+^{\widetilde{B_0}B_1}$.
    \item \textit{Pre-processing.} The sender measures with POVM $\{\phi_{xz}^{B_0\tilde{B_0}}\}_{x,z}$. Conditioned on the measurement outcome, the sender applies the unitary $U_{xz}^T\otimes U_{xz}^T$ on the $A_0A_0'$ systems. 
    \item \textit{Transmission.} System $A_0$ is sent through $\mc{N}_1$ and system $A_0'$ is through $\mc{N}_2$.
    \item \textit{Post-processing.} The receiver performs measurement $\{\phi_{xz}^{A_1\widetilde{A_1}}\otimes\phi_{xz}^{A_1'\widetilde{A_1'}},M_\perp^{A_1\widetilde{A_1}A_1'\widetilde{A_1'}}\}$ where $M_\perp = \mathbb{I} - \sum_{x,z}\phi_{xz}\otimes\phi_{xz}$. If $x,z$ is measured, the receiver declares success and applies the correction unitary $U_{xz}^T$. If, however, $\perp$ is measured, the receiver declares failure. 
\end{enumerate}
\end{protocol}
This construction extends the one given in Ref. \cite{ji2024postselected}. In essence, this construction combines the ideas of teleportation, super-dense coding, and error detection. The sender and the receiver attempt to perform a teleportation protocol, where the sender's measurement outcomes $x$ and $z$ will inform the receiver which correction unitary should be applied. However, as they are not allowed free classical communication, the sender sends the measurement outcomes through the given quantum channels by super-dense coding. Since they are given two quantum channels $\mc{N}_1$ and $\mc{N}_2$, they can use the redundancy to perform error detection. That is, $(x,z)$ are encoded on both $\phi_+^{\widetilde{A_0}\widetilde{A_1}}$ and $\phi_+^{\widetilde{A_0'}\widetilde{A_1'}}$, resulting in the state $\phi_{xz}^{\widetilde{A_0}\widetilde{A_1}}\otimes\phi_{xz}^{\widetilde{A_0}\widetilde{A_1}}$. The receiver, upon receiving the two output states, measures each in the generalized Bell basis. If the two measurement results $(\hat{x}_1,\hat{z}_1)$ and $(\hat{x}_2,\hat{z}_2)$ are different (in other words, the measurement result is $\perp$), the receiver knows that an error has occurred and declares failure. If $(\hat{x}_1,\hat{z}_1) = (\hat{x}_2,\hat{z}_2)\eqqcolon (\hat{x},\hat{z})$, the receiver assumes that no error occurred and performs the unitary correction $U_{\hat{x}\hat{z}}^T$ on $B_1$ accordingly. Note that this can also be regarded as a one-shot version of the entanglement-assisted quantum communication protocol introduced in Ref. \cite{devetak2004family}.

Let us now analyze the performance of this protocol. The protocol succeeds when $(\hat{x}_1,\hat{z}_1)=(\hat{x}_2,\hat{z}_2)$, which happens with probability
\begin{align}
    P = \sum_{x,z,\hat{x},\hat{z}} &\frac{1}{d^2} \tr\left\{\left[\mc{N}_1\otimes\mathrm{id}(\phi_{xz})\right]\phi_{\hat{x}\hat{z}}\right\} \notag\\
    &\times \tr\left\{\left[\mc{N}_2\otimes\mathrm{id}(\phi_{xz})\right]\phi_{\hat{x}\hat{z}}\right\}.
\end{align}
Now,
\begin{align}
    &\tr\left\{\left[\mc{N}_i\otimes\mathrm{id}(\phi_{xz})\right]\phi_{\hat{x}\hat{z}}\right\} = \tr\left\{\left[\mc{N}_i\otimes\mathrm{id}(\phi_{00})\right]\phi_{\hat{x}-x,\hat{z}-z}\right\}.
\end{align}
Let $F_{xz}^{(i)}=\tr\left\{\left[\mc{N}_i\otimes\mathrm{id}(\phi_{00})\right]\phi_{xz}\right\}$. The success probability of the protocol is
\begin{align}
    P = \sum_{x,z} F_{xz}^{(1)}F_{xz}^{(2)}.
\end{align}

Suppose the initial state on $B_0$ is $\rho$. The state on $B_1$ before the receiver applies the unitary correction is $U_{xz}^*\rho U_{xz}^T$. With probability $F_{\hat{x}-x,\hat{z}-z}^{(1)}F_{\hat{x}-x,\hat{z}-z}^{(2)}$, the receiver measures $(\hat{x},\hat{z})=(\hat{x}_1,\hat{z}_1)=(\hat{x}_2,\hat{z}_2)$ and the unitary correction $U_{\hat{x}\hat{z}}^T$ is applied. The final state on $B_1$ is then $U_{\hat{x}\hat{z}}^T U_{xz}^* \rho U_{xz}^T U_{\hat{x}\hat{z}}^* = U_{\hat{x}-x,\hat{z}-z}^T\rho U_{\hat{x}-x,\hat{z}-z}^*$. This can be seen using the fact that $Z^zX^x = X^xZ^z$ up to a global phase factor. Therefore, the final channel conditioned on protocol success is given by
\begin{align}
    \N_{out}(\rho) = \sum_{xz} \frac{F_{xz}^{(1)}F_{xz}^{(2)}}{P} U_{xz}^T\rho U_{xz}^*.
\end{align}
It is not hard to see that the worst case fidelity of this channel is attained by $\phi_+$, and therefore the fidelity of this channel is simply 
\begin{align}
    F(\mc{N}_{out}) = F_e(\mc{N}_{out}) = \frac{F_{00}^{(1)}F_{00}^{(2)}}{\sum_{x,z}F_{xz}^{(1)}F_{xz}^{(2)}}.
\end{align}
Now, whenever the input channels are identity-biased, that is, $F_{00}^{(i)}>F_{xz}^{(i)}$ for all $(x,z)\neq(0,0)$, 
\begin{align}
    F(\mc{N}_{out})>\frac{F_{00}^{(1)}F_{00}^{(2)}}{\sum_{x,z}F_{00}^{(1)}F_{xz}^{(2)}} = F_{00}^{(2)},
\end{align}
which is precisely the entanglement fidelity of $\mc{N}_2$ and is greater than or equal to the worst-case fidelity $F(\mc{N}_2)$. By the same reasoning $F(\mc{N}_{out})>F(\mc{N}_1)$. Therefore, the protocol universally improves channel fidelity if the channels are promised to be identity-biased. 

This can be readily generalized to the case where we are given $n>2$ channels by running super-dense coding $n$ times in parallel. The receiver will only accept when all $n$ measurement outcomes coincide. The protocol will succeed with probability 
\begin{align}
    P = \sum_{x,z} \prod_i F_{xz}^{(i)}
\end{align}
and conditioned on success achieves fidelity
\begin{align}
    F(\mc{N}_{out}) = \frac{\prod_i F_{00}^{(i)}}{\sum_{x,z} \prod_i F_{xz}^{(i)}}.
\end{align}

The protocol can be made universal for a wider class of channels by slightly sacrificing the performance. To do so, we first perform channel twirling \cite{horodecki1999general} on each given channel $\mc{N}_i$:
\begin{align}
    \mc{T}[\mc{N}](\rho) &= \int_{U\sim\mathrm{Haar}} U\mc{N}(U^\dagger\rho U)U^\dagger dU. \label{eq:quantum-twirling}
\end{align}
After twirling, each channel becomes a depolarizing channel:
\begin{align}
    \mc{T}[\mc{N}_i](\rho) = \mc{D}_{F_i}(\rho) \coloneqq F_i\rho + (1-F_i)\frac{d\mathbb{I}_d-\rho}{d^2-1}
\end{align}
where $F_i=F_e(\mc{N}_i)$. Then, we perform our protocol on the twirled channels $\mc{T}[\mc{N}_i]$, achieving fidelity
\begin{align}
    F(\mc{N}_{out}) = \frac{\prod_i F_i}{\prod_i F_i + \left(\frac{1}{d^2-1}\right)^{n-1}\prod_i(1-F_i)} \label{eq:output-fidelity}
\end{align}
with probability
\begin{align}
    P = \prod_i F_i + \left(\frac{1}{d^2-1}\right)^{n-1}\prod_i(1-F_i).
\end{align}
The achieved fidelity $F(\mc{N}_{out})>F_i$ for all $i$ when $1/d^2<F_i<1$. To see this, note that $F(\mc{N}_{out})$ monotonically increases with $F_i$. So, 
\begin{align}
    &F(\N_{out}) \notag\\
    &> \frac{\frac{F_i}{(d^2)^{n-1}}}{\frac{F_i}{(d^2)^{n-1}} + \left(\frac{1}{d^2-1}\right)^{n-1}(1-F_i)\left(1-\frac{1}{d^2}\right)^{n-1}} \\
    &= F_i =F_e(\mc{N}_i) \geq F(\N_i)
\end{align}

Let us now illustrate the performance with an example. Suppose we are given two qubit channels with $F_1\coloneqq F_e(\mc{N}_1)=0.9$ and $F_2\coloneqq F_e(\mc{N}_2)=0.8$. The protocol succeeds with probability
\begin{align}
    P = F_1F_2+\frac{1}{d^2-1}(1-F_1)(1-F_2) \approx 0.727,
\end{align}
while the fidelity of the output channel is
\begin{align}
    F = \frac{F_1F_2}{F_1F_2+\frac{1}{d^2-1}(1-F_1)(1-F_2)} \approx 0.991.
\end{align}
That is, with approximately 73\% probability we get a channel that is almost noiseless. The protocol performs even better when $d$ is large, achieving fidelity $1-O(1/d^{2n})$ with probability $F_1F_2+O(1/d^{2n})$.

Furthermore, recall that a depolarizing channel $\mc{D}$ is entanglement-breaking if $F_e(\mc{D})\leq1/d$. Thus, an entanglement-breaking channel, which cannot transmit any quantum information, can be used to elevate the ability of an entanglement-preserving channel to transmit quantum information. Perhaps more strikingly, even if neither channel can transmit any quantum information, an output channel that is entanglement-preserving may be extracted by using our encoding and decoding strategy. For instance, suppose we are given two qubit depolarizing channels with $F_1=0.4$ and $F_2=0.35$, our entanglement-assisted strategy succeeds with probability 0.27, and the entanglement fidelity of the output channel is $14/27\approx 0.519 > 0.5$.

\subsection{Optimality}
It turns out that our protocol achieves the optimal fidelity when the given channels are all depolarizing channels. That is, even if we know precisely the depolarizing parameters of these channels and are allowed to optimize over all possible NS superinstruments, the achievable fidelity is still upper bounded by the right hand side of Eq.~\eqref{eq:output-fidelity}. More generally, we have the following fidelity upper bound when the given channels are generalized Pauli channels.
\begin{proposition}\label{prop:optimality}
    Given $n$ generalized Pauli channels $\{\mc{N}_i\}_{i=1}^n$ with $\mc{N}_i(\rho) = \sum_{x,z} p_{xz}^{(i)} U_{xz}\rho U_{xz}^\dagger$, let $p_{max}^{(i)}=\max_{x,z}\{p_{xz}^{(i)}\}$ and $p_{min}^{(i)}=\min_{x,z}\{p_{xz}^{(i)}\}$. For any NS-assisted superinstrument $\{\Theta_0,\Theta_1\}$ such that the output channel can be defined as $\mc{N}_{out}=\displaystyle\frac{\Theta_0[\mc{N}_1\otimes\cdots\otimes\mc{N}_n]}{p(\Theta_0;\mc{N}_1,\cdots,\mc{N}_n)}$, we have
    \begin{align}
        F\left(\N_{out}\right) \leq F_e(\mc{N}_{out}) \leq \frac{\prod_i p_{max}^{(i)}} {\prod_i p_{max}^{(i)}+(d^2-1)\prod_i p_{min}^{(i)}}.
    \end{align}
\end{proposition}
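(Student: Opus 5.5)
We first note that $F(\mc{N}_{out})\le F_e(\mc{N}_{out})$ holds trivially, since $\phi_+$ is one admissible input in the infimum. The whole argument therefore reduces to bounding the entanglement fidelity of the output channel. Our plan is to write the unnormalized success branch as a linear functional of the input channel and then dominate the generalized Pauli channels by simpler channels.

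\textbf{Step 1: express both numerator and denominator as linear functionals.} Write $\mc{M}=\mc{N}_1\otimes\cdots\otimes\mc{N}_n$. Since $\Theta_0$ is linear, the quantity
\begin{align}
a(\mc{M})=\tr\left[\phi_+\,(\mathrm{id}\otimes\Theta_0[\mc{M}])(\phi_+)\right]
\end{align}
is linear and nonnegative on CP maps. The success probability is input-independent, so
\begin{align}
p(\mc{M})=\tr\Theta_0[\mc{M}](\mathbb{I}/d),
\end{align}
which is also linear. Thus $F_e(\mc{N}_{out})=a/p$.

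Expanding each Pauli channel gives a sum over strings $\mathbf{s}=(s_1,\dots,s_n)$ of Pauli labels:
\begin{align}
\mc{M}=\sum_{\mathbf{s}} \Big(\prod_i p^{(i)}_{s_i}\Big)\,\mc{U}_{\mathbf{s}}, \qquad \mc{U}_{\mathbf{s}}=\bigotimes_i U_{s_i}\cdot U_{s_i}^\dagger .
\end{align}
Each $\mc{U}_{\mathbf{s}}$ is a unitary channel, so $\Theta_0[\mc{U}_{\mathbf{s}}]$ is CPTNI. Set $a_{\mathbf{s}}=a(\mc{U}_{\mathbf{s}})\ge 0$ and $b_{\mathbf{s}}=p(\mc{U}_{\mathbf{s}})-a_{\mathbf{s}}\ge 0$; the latter is nonnegative because the overlap of a subnormalized state with $\phi_+$ is at most its trace.

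\textbf{Step 2: find the key constraint from no-signaling.} Averaging all the $\mc{U}_{\mathbf{s}}$ uniformly produces the tensor product of completely depolarizing channels, which is a replacement channel. An NS superinstrument branch maps a replacement channel to an unnormalized replacement channel $q\,\mc{R}_\omega$, and any such map has $F_e\le 1/d^2$, since $\tr[\phi_+(\mathbb{I}/d\otimes\omega)]=1/d^2$. Summing over the $d^{2n}$ strings therefore gives
\begin{align}
\sum_{\mathbf{s}} a_{\mathbf{s}}\le \frac{1}{d^2}\sum_{\mathbf{s}}(a_{\mathbf{s}}+b_{\mathbf{s}}), \quad\text{i.e.}\quad (d^2-1)\sum a_{\mathbf{s}}\le \sum b_{\mathbf{s}}.
\end{align}
We need to justify that $\Theta_0$ maps replacement channels to (unnormalized) replacement channels. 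This follows from the NS constraints applied to the $\Theta_0$ branch, which is the same argument as for superchannels in \cite{takagi2020application}.

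\textbf{Step 3: optimize.} Write $P_{\mathbf{s}}=\prod_i p^{(i)}_{s_i}$, and let $P_{\max}=\prod_i p^{(i)}_{max}$ and $P_{\min}=\prod_i p^{(i)}_{min}$, so that $P_{\min}\le P_{\mathbf{s}}\le P_{\max}$. Then
\begin{align}
F_e(\mc{N}_{out})=\frac{\sum P_{\mathbf{s}}a_{\mathbf{s}}}{\sum P_{\mathbf{s}}a_{\mathbf{s}}+\sum P_{\mathbf{s}}b_{\mathbf{s}}}\le \frac{P_{\max}A}{P_{\max}A+P_{\min}B},
\end{align}
where $A=\sum a_{\mathbf{s}}$ and $B=\sum b_{\mathbf{s}}$. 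The inequality holds because $x/(x+y)$ is increasing in $x$ and decreasing in $y$. Applying Step 2, $B\ge(d^2-1)A$, gives exactly
\begin{align}
F_e(\mc{N}_{out})\le\frac{P_{\max}}{P_{\max}+(d^2-1)P_{\min}} .
\end{align}

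\textbf{Main obstacle.} The delicate point is Step 2: we must rigorously show that a NS superinstrument component sends the replacement channel to something proportional to a replacement channel, including in the presence of the reference system. We would argue this directly from the bipartite NS conditions on the Choi operator of $\Theta_0$. Using the decomposition $\Theta_0=\Theta-\Theta_1$ does not work on its own, because positivity of each branch is needed in addition. Instead we would invoke the fact that each branch of an NS superinstrument is itself a no-signaling CP map: its Bob-side marginal cannot depend on Alice's input, which forces a product output once the input channel is replacement.
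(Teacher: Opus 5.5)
Your proof is correct, and it is the paper's projective-robustness argument written out directly. The paper computes $\Omega(\mc{N}_i)=p^{(i)}_{max}/p^{(i)}_{min}$ using the witness $\sigma=\mathbb{I}/d$. It bounds $\Omega(\bigotimes_i\mc{N}_i)$ by sub-multiplicativity (Prop.~\ref{prop:submultiplicativity}) and transfers this to $\mc{N}_{out}$ by monotonicity (Thm.~\ref{thm:proj-rob-nec-condition}). It then converts the result into a fidelity bound via Prop.~\ref{prop:fidelity-bound} with $F_c=1/d^2$.

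Your expansion into product unitaries gives the sandwich $P_{\min}d^{2n}\Delta^{\otimes n}\leq_{c.p.}\bigotimes_i\mc{N}_i\leq_{c.p.}P_{\max}d^{2n}\Delta^{\otimes n}$ directly. This is exactly the product witness behind the paper's sub-multiplicativity step. Your estimates $\sum_{\mathbf{s}}P_{\mathbf{s}}a_{\mathbf{s}}\leq P_{\max}A$ and $\sum_{\mathbf{s}}P_{\mathbf{s}}b_{\mathbf{s}}\geq P_{\min}B$ are the two inequalities in the proof of Prop.~\ref{prop:fidelity-bound}. You apply them to $\Theta_0[\cdot]$ before normalizing, rather than after passing through $\Omega$.

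What your version gains is self-containment. It needs only a feasible witness rather than the exact value of $\Omega$, and no sub-multiplicativity lemma. It also makes explicit that the free-branch property is used only at $\Delta^{\otimes n}$. What the paper's modular version gains is reuse: the same monotone yields Prop.~\ref{prop:classical-optimality} and the general resource-theoretic statements.

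On your \emph{main obstacle}: the paper does not derive the branch-wise property from the bipartite NS conditions either. Its proof of Thm.~\ref{thm:proj-rob-nec-condition} simply assumes that $\Theta_0[\mc{M}]$ is proportional to a free channel whenever $\mc{M}$ is free, so your Step 2 is on the same footing. The reference-system worry is vacuous. A linear map acting as $\rho\mapsto\tr(\rho)\,\omega_0$ on all states automatically acts as $X\mapsto\tr_{B_0}(X)\otimes\omega_0$ after tensoring with $\mathrm{id}^R$.

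Finally, treat $A=0$ separately, since then $F_e(\mc{N}_{out})=0$; this avoids a $0/0$ in Step 3.
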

We will prove this proposition in the Supplementary Material. From this expression, we immediately conclude that our protocol achieves the optimal fidelity when the given channels are depolarizing channels $\mc{D}_{F_i}$ with $F_i\geq1/d^2$. This also shows that it is not possible to increase the fidelity of a single depolarizing channel by NS superinstruments. Moreover, our protocol achieves this optimal fidelity with the largest possible probability.
\begin{proposition}\label{prop:optimal-prob}
    The probability of transforming depolarizing channels $\mc{D}_{F_1}\otimes\mc{D}_{F_2}$ into an output channel with fidelity 
    \begin{align}
        F = \frac{F_1F_2}{F_1F_2+\frac{1}{d^2-1}(1-F_1)(1-F_2)}
    \end{align}
    using NS superinstruments is upper bounded by 
    \begin{align}
        P \leq F_1F_2 + \frac{1}{d^2-1}(1-F_1)(1-F_2).
    \end{align}
\end{proposition}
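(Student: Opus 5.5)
We will argue by linear programming over NS superinstruments, using the covariance of depolarizing channels and the Choi-matrix characterization of NS maps. Since $\mc{D}_{F_1}\otimes\mc{D}_{F_2}$ is invariant under conjugation by $U\otimes U$ for all $U$ on each factor, and the target quantity (entanglement fidelity of the normalized output) is invariant under twirling the output by $U\otimes U^*$, we may symmetrize any feasible superinstrument. Concretely, we average $\Theta_0$ over conjugation by local unitaries on the inputs $A_0A_0'$, on the outputs $A_1A_1'$, and on the final system $B$. This preserves the NS conditions, the success probability and the achieved fidelity. It therefore suffices to bound $P$ over symmetric $\Theta_0$, whose Choi operator $J_{\Theta_0}$ on $B_0A_1A_1'\,A_0A_0'B_1$ lies in a small commutant algebra.

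The key reduction is to the Choi picture. Write $J_i$ for the Choi matrix of $\mc{D}_{F_i}$, which is a combination of $\phi_+$ and $\mathbb{I}-\phi_+$. Let $J_{out}$ be the (unnormalized) Choi matrix of $\Theta_0[\mc{D}_{F_1}\otimes\mc{D}_{F_2}]$ on $R B_1$. Its weight on $\phi_+$ is $pF$, and its total weight is $P$. Pairing $J_{\Theta_0}$ with $J_1\otimes J_2$ and then with $\phi_+^{RB_1}$ (respectively with the identity), both $PF$ and $P$ become linear functionals of the symmetric $J_{\Theta_0}$. In the twirled algebra each channel contributes only two projectors, $\phi_+$ and $\mathbb{I}-\phi_+$ on each $A_0^{(i)}A_1^{(i)}$, so the relevant coordinates are a few scalars. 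Denote by $a_{s_1s_2}$ the weight of $\Theta_0$ on the pattern where the output is $\phi_+$ (resp.\ not) and channel $i$ acts in the "$\phi_+$" branch ($s_i=0$) or the "$\perp$" branch ($s_i=1$); similarly $b_{s_1s_2}$ for the orthogonal output.

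Next we translate the constraints. CP gives nonnegativity of these weights. The condition that $\Theta_0+\Theta_1$ is a completely CPTP-preserving NS superchannel implies $J_{\Theta_0}\le J_{\Theta}$ for some NS superchannel $\Theta$. Applying $\Theta$ to replacement channels shows that each of the NS marginal conditions yields linear inequalities. The most useful one is this: feeding the completely depolarizing channel into either slot, together with an arbitrary channel in the other slot, must yield success probability at most $1$, and the output must be a replacement channel in the appropriate sense. In particular, inserting $\mc{R}_{\mathbb{I}/d}$ in both slots gives an output that is a (subnormalized) replacement channel, whose $\phi_+$-weight is therefore at most $1/d^2$ of its trace.

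Finally, we solve the resulting small LP. We maximize $P=F_1F_2(a_{00}+b_{00})+\tfrac{F_1(1-F_2)}{d^2-1}(\cdots)+\dots$ subject to the fidelity equality $PF=\sum$ (output-$\phi_+$ terms) and the NS inequalities. We expect the optimum to have all weight on the "both correct" and "both wrong" patterns, with mixed patterns zero. Dual feasibility is then checked with an explicit dual certificate, reproducing $P\le F_1F_2+\tfrac{1}{d^2-1}(1-F_1)(1-F_2)$, which is attained by the protocol. The fact that the fidelity equals the maximal value from Proposition~\ref{prop:optimality} should force this structure: any weight on mixed patterns lowers fidelity, so it must be compensated, which costs probability. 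The main obstacle is extracting the correct linear NS constraints on the twirled Choi coordinates; failing that, the dual certificate is hard to guess. I would first derive the constraints by evaluating $\Theta$ on products of replacement channels and on $\mathrm{id}$ in a single slot, then guess the dual from the protocol's complementary slackness.
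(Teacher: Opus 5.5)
Your plan follows the paper's route. The paper also twirls and reduces to an LP whose variables are, for each input pattern $k\in\{00,01,10,11\}$, the output-$\phi_+$ weight $w_k$ and total weight $z_k$ of the success branch; these are the coefficients of its $W=\Delta^\dagger_{\Theta_0}(\phi_+)$ and $Z=\Delta^\dagger_{\Theta_0}(\mathbb{I})$. The LP has constraints $0\le w\le z\le 1$, the fidelity equation and the two-slot replacement constraint, and the paper bounds it with an explicit dual solution. The gap is that you never bound this LP. The decisive step is only announced (``we expect\dots'', ``the dual certificate is hard to guess''), so as written nothing is proved.

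It closes in a few lines, without a certificate or symmetrization. Let $\mathcal{C}_{00}=\mathrm{id}\otimes\mathrm{id}$, $\mathcal{C}_{01}=\mathrm{id}\otimes\mathcal{D}_0$, $\mathcal{C}_{10}=\mathcal{D}_0\otimes\mathrm{id}$, $\mathcal{C}_{11}=\mathcal{D}_0\otimes\mathcal{D}_0$. Let $z_k$ and $w_k$ be the trace and the $\phi_+$-overlap of the normalized Choi operator of $\Theta_0[\mathcal{C}_k]$, and set $u_k=z_k-w_k\ge 0$. We have $\mathcal{D}_{F_1}\otimes\mathcal{D}_{F_2}=\sum_k f_k\mathcal{C}_k$ and $\mathcal{R}_{\mathbb{I}/d}\otimes\mathcal{R}_{\mathbb{I}/d}=d^{-4}\sum_k a_k\mathcal{C}_k$, with $f=(F_1F_2,F_1(1-F_2),(1-F_1)F_2,(1-F_1)(1-F_2))$ and $a=(1,d^2-1,d^2-1,(d^2-1)^2)$. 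Linearity then gives $f\cdot w=FP$, $f\cdot u=(1-F)P$ and $z_{00}\le 1$, and the replacement constraint becomes $(d^2-1)\,a\cdot w\le a\cdot u$. For $1/d^2<F_i<1$ the ratios $a_k/f_k$ are proportional to $(1,r_2,r_1,r_1r_2)$ with $r_i=(d^2-1)F_i/(1-F_i)>1$. They are therefore strictly smallest at $00$ and strictly largest at $11$, so
\[
\frac{(d^2-1)F}{F_1F_2}\,P\;\le\;(d^2-1)\,a\cdot w\;\le\;a\cdot u\;\le\;\frac{(d^2-1)^2(1-F)}{(1-F_1)(1-F_2)}\,P .
\]
At the stated $F$ the two ends are equal, so every inequality is tight. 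This forces $w$ to be supported on $00$ (and $u$ on $11$, confirming your guess). Hence $FP=F_1F_2w_{00}\le F_1F_2z_{00}\le F_1F_2$, i.e.\ $P\le F_1F_2/F=F_1F_2+\frac{1}{d^2-1}(1-F_1)(1-F_2)$. This is the precise form of your heuristic: at the extremal $F$ it is the replacement constraint, not the fidelity, that rules out weight on the mixed patterns.

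Two of the constraints you list also need correcting.

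First, the one-slot version (replacement channel in one slot, arbitrary channel in the other, output ``a replacement channel in the appropriate sense'') is not valid. The channel $\mathcal{R}\otimes\mathcal{N}_2$ is not free on $AA'$, since it still transmits through the second slot, so $\Theta_0$ may map it to a signaling channel. Imposing this constraint shrinks the feasible set, and the resulting number would no longer bound all NS superinstruments. It is also unnecessary: only the two-slot constraint and $z_{00}\le 1$ are used above.

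Second, the two-slot constraint does not follow from $J_{\Theta_0}\le J_\Theta$ with $\Theta$ an NS superchannel. Postselecting the sender's Bell outcome $(0,0)$ in plain teleportation gives $\Theta_0[\mathcal{N}]=\frac{1}{d^2}\mathrm{id}$ for every channel $\mathcal{N}$. This branch is dominated by the NS superchannel $\mathcal{N}\mapsto\mathcal{R}_{\mathbb{I}/d}$. Adding such sender-side postselected branches to the paper's protocol reaches fidelity $F$ with probability strictly above the bound. So the requirement that $\Theta_0$ itself map replacement channels to multiples of replacement channels (e.g.\ success declared by the receiver) must be part of the definition of an NS superinstrument. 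This is exactly the RNG assumption the paper's SDP bound rests on.
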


Our construction is also essentially the unique NS superinstrument that is universal for channels with entanglement fidelity greater than $1/d^2$. All other universal superinstruments reduce to our construction upon twirling the channels and rescaling the success probability.
\begin{proposition}\label{prop:uniqueness}
    Let the superinstrument that describes our construction be $\Theta^\mathrm{EA}=\{\Theta_0^{\mathrm{EA}},\Theta_1^{\mathrm{EA}}\}$. Any NS-assisted superinstrument $\{\Theta_0,\Theta_1\}$ that is universal for arbitrary channels with fidelity greater than $1/d^2$ satisfies
    \begin{align}
        C\Theta_0^{\mathrm{EA}}[\cdot] = \mc{T}^B\circ\Theta_0^{AA'\to B}\left[\mc{T}^A\otimes\mc{T}^{A'}[\cdot]\right].
    \end{align}
    for some constant $C\in(0,1]$ that reduces the success probability.
\end{proposition}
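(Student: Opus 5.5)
The plan is to symmetrize an arbitrary universal superinstrument and then show that universality pins down the symmetrized map completely. Write $\wt\Theta_0[\cdot]\coloneqq\mc{T}^B\circ\Theta_0\left[\mc{T}^A\otimes\mc{T}^{A'}[\cdot]\right]$ (for $n=2$; the general case is identical with $n$ twirls).

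\textbf{Reduction to depolarizing inputs.} Twirling commutes with composition in the relevant sense, and $\mc{T}\circ\mc{T}=\mc{T}$ on channels. Hence $\wt\Theta_0$ is determined by its action on products of depolarizing channels $\mc{D}_{F_1}\otimes\mc{D}_{F_2}$, and its output is always a depolarizing channel on $B$. Moreover, $\wt\Theta_0[\mc{D}_{F_1}\otimes\mc{D}_{F_2}]$ is multi-affine in $(F_1,F_2)$. Its trace (the success probability $p$) and its unnormalized entanglement fidelity $q\coloneqq p\,F_e$ are therefore bilinear polynomials in $F_1,F_2$. The same holds for $\Theta_0^{\mathrm{EA}}$, with $p^{\mathrm{EA}}=F_1F_2+\frac{(1-F_1)(1-F_2)}{d^2-1}$ and $q^{\mathrm{EA}}=F_1F_2$. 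A twirled NS supermap acting on depolarizing inputs is fully described by these two functions, which I would confirm through the Choi representation: the twirls project the Choi operator of $\Theta_0$ onto the $U\otimes U\otimes U$-invariant span, which is spanned by products of $\phi_+$ and its complement on each wire. Consequently the claimed identity reduces to $(p,q)=C\,(p^{\mathrm{EA}},q^{\mathrm{EA}})$ as polynomials.

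\textbf{Pinning down the polynomials.} Write $G_i=1-F_i$. Expanding gives $q=aF_1F_2+bF_1G_2+cG_1F_2+eG_1G_2$ and $p-q=a'F_1F_2+\dots$, where complete positivity forces all coefficients to be nonnegative. The coefficient $a'$ equals the unnormalized infidelity obtained from $\mathrm{id}\otimes\mathrm{id}$. Using the quasi-extremal points, I would compare against Proposition~\ref{prop:optimality}. For example, taking $\mc{N}_1=\mathrm{id}$ forces the output fidelity to be at most the optimal value for $\mathrm{id}\otimes\mc{D}_{F_2}$ while still exceeding $F_2$. Universality then requires $F_{out}>\max(F_1,F_2)$ for all $F_i\in(1/d^2,1)$. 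Pushing $F_2\to1^-$ with $F_1$ near $1/d^2$, and using monotonicity and boundary limits, forces $b'=c'=0$ and $e$ small. Similarly, letting $F_1\to1/d^2$ makes the first channel a replacement channel $\mc{D}_{1/d^2}$, which is free. In that case the output fidelity must be at least $F_2$ but, by Proposition~\ref{prop:optimality} (no improvement of a single depolarizing channel), can be no greater than $F_2$. Imposing this equality for all $F_2$ (and symmetrically) gives linear constraints on the coefficients. The target is that these constraints leave exactly $q\propto F_1F_2$ and $p-q\propto G_1G_2/(d^2-1)$, with a common constant $C$. $C\le1$ follows because the resulting $p$ must be at most $1$ at $F_1=F_2=1$, or equivalently from Proposition~\ref{prop:optimal-prob}.

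\textbf{Main obstacle.} The hardest step is the boundary argument: turning the strict inequality $F_{out}>\max_iF_i$, which holds only on an open region, into exact vanishing of coefficients. To do this I would take limits $F_1\to1/d^2$ and use continuity to obtain a non-strict inequality $F_{out}\ge F_2$ on the boundary. Combined with the NS upper bound, this inequality becomes an equality of bilinear forms in $F_2$. Matching its coefficients then kills every cross term, and the remaining ratio $d^2-1$ between the two surviving terms follows from the $F_1=F_2=1/d^2$ replacement point, where the output must be a free, fully depolarized channel with $F_e=1/d^2$.
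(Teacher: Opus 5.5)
Your argument is correct, but its key step differs from the paper's. The paper (proof of Thm.~\ref{thm:uniqueness}) writes the twirled $\Theta_0$ with a five-parameter ansatz. It uses only the fact that $\Theta_0$ sends the single free point $\N_{F_c}\otimes\N_{F_c}$ to a free channel. At $F_2=F_c$ it combines that inequality with the elementary bound $\frac{1-F_1}{F_1}\le\frac{1-F_c}{F_c}$ to get $\langle\N_{out},\Phi\rangle\le F_1$, with strict inequality unless the cross terms vanish and the freeness inequality is tight. You instead get the boundary upper bound $F_{out}\le F_2$ at $F_1=1/d^2$ from Proposition~\ref{prop:optimality}, which rests on projective-robustness monotonicity plus submultiplicativity; with $p^{(1)}_{max}=p^{(1)}_{min}=1/d^2$ that bound is exactly $F_2$. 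You then match it against the lower bound from continuity.

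The paper's route is more elementary and works directly in the abstract setting of Thm.~\ref{thm:uniqueness}. Yours uses heavier machinery but gives a more complete coefficient analysis. Your eight coefficients include the $\Lambda$-weights on the $P\otimes P$, $P\otimes(\mbb{I}-P)$ and $(\mbb{I}-P)\otimes P$ sectors, which the paper's ansatz omits without comment.

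You can also streamline your own plan. At $F_1=1/d^2$ the identity $q\,(1-F_2)=F_2\,(p-q)$ is quadratic in $F_2$. Matching the coefficients of $F_2^2$, $F_2(1-F_2)$ and $(1-F_2)^2$, and using nonnegativity, gives $b=e=a'=c'=0$ and $a/d^2+c(1-1/d^2)=b'/d^2+e'(1-1/d^2)$. The mirror identity at $F_2=1/d^2$ gives $c=b'=0$ and $a/d^2+b(1-1/d^2)=c'/d^2+e'(1-1/d^2)$. Hence $e'=a/(d^2-1)$, i.e.\ $C=a$.

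As a result, three pieces of your plan can be dropped:
\begin{itemize}
\item the $F_2\to1$ limit, which in any case kills $a',c'$, not $b',c'$;
\item the ``$e$ small'' step;
\item the comparison with $\mc{N}_1=\mathrm{id}$, where Proposition~\ref{prop:optimality} is vacuous because $p_{min}=0$.
\end{itemize}

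Two small points to fix when you write it up. First, the three independent twirls impose invariance under $U_1\otimes U_1^*$, $U_2\otimes U_2^*$ and $U_3\otimes U_3^*$ separately, not under a common $U\otimes U\otimes U$; your conclusion about the eight-dimensional invariant span is still right. Second, say explicitly why universality of $\Theta_0$ passes to the twirled map on depolarizing inputs: $F(\mc{D}_F)=F$ and $F_e(\mc{T}[\N_{out}])=F_e(\N_{out})\ge F(\N_{out})$.
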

This result is a special case of Thm. \ref{thm:uniqueness}, which will be presented later and proven in the Supplementary Material.

\subsection{Classical channels}
The same idea applies to classical channels as well. A classical channel $\mc{P}$ given by transition probabilities $\mc{P}(y|x)$ can be written in bra-ket notation as
\begin{align}
    \mc{P}(W) = \sum_{x,y} \mc{P}(y|x)\op{y}{x}W\op{x}{y}.
\end{align}  
We can analogously define its fidelity as the worst-case correct transmission probability:
\begin{align}
    F(\mc{P}) = \min_{x} \mc{P}(x|x).
\end{align}
The following shared-randomness-assisted protocol universally improves the fidelity of classical channels.

\begin{protocol}
The shared-randomness protocol proceeds as follows (see Fig. \ref{fig:SR-protocol} for an illustration).
\begin{figure*}
    \centering
    \includegraphics[width=0.8\linewidth]{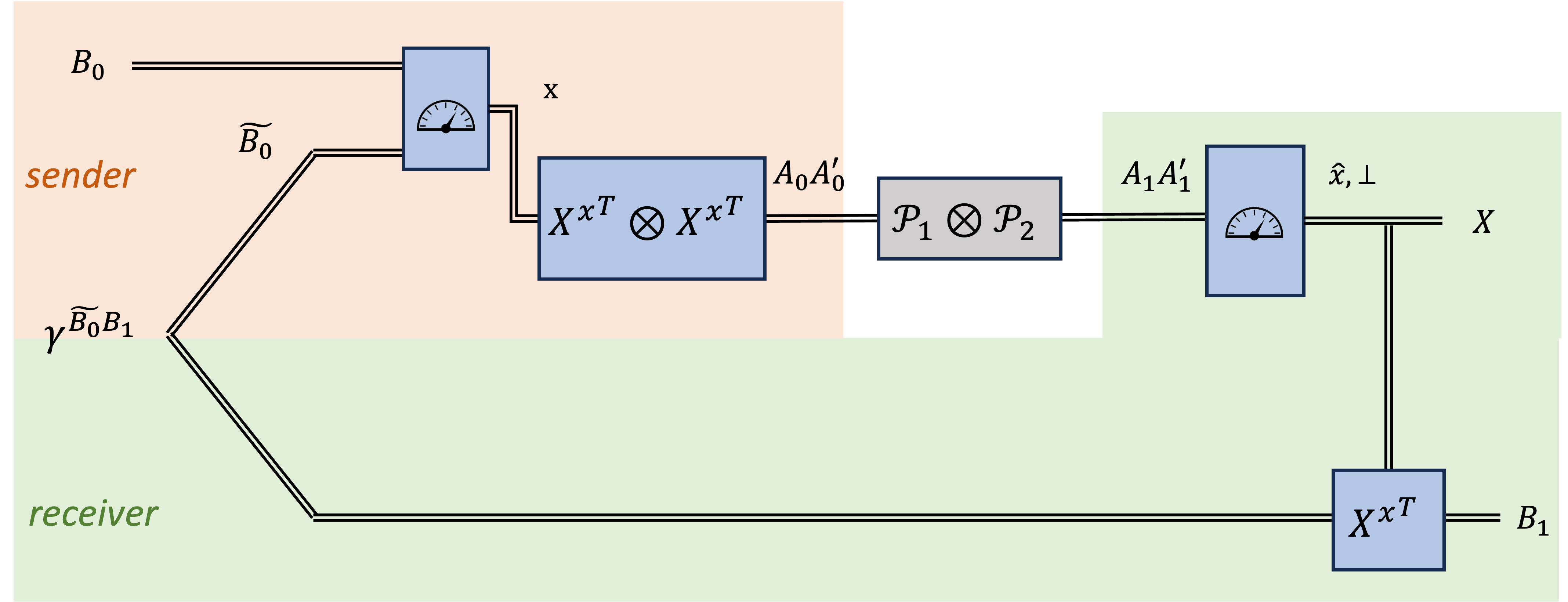}
    \caption{Implementation of the shared-randomness protocol.}
    \label{fig:SR-protocol}
\end{figure*}
\begin{enumerate}
    \item \textit{Shared resource.} The sender and the receiver share $\gamma^{\widetilde{B_0}B_1}$, where $\gamma$ is the classically correlated state $\gamma=\sum_i\frac{1}{d}\state{ii}$.
    \item \textit{Pre-processing.} The sender measures with POVM $\{\Gamma_x^{B_0\tilde{B_0}}\}_x$ where $\Gamma_x=(\mbb{I}\otimes X^x)(\sum_i \state{ii})(\mbb{I}\otimes X^x)^\dagger$. The sender applies $(X^x\otimes X^x)^T$ conditioned on the measurement outcome. 
    \item \textit{Transmission.} System $A_0$ is sent through $\mc{P}_1$ and system $A_0'$ is through $\mc{P}_2$.
    \item \textit{Post-processing.} The receiver measures $\{\state{xx},M_\perp\}$ where $M_\perp=\mbb{I}-\sum_x\state{xx}$. If $x$ is measured, the receiver declares success and applies the correction unitary $(X^x)^T$. If $\perp$ is measured, the receiver declares failure.
\end{enumerate}
\end{protocol}
The idea is completely analogous to the entanglement-assisted protocol. Suppose the input at $B_0$ is some classical state $W=\sum_i w_i\state{i}$. After the sender performs the measurement $\{\Gamma_x\}_x$, the output state on $B_1$ is $X^x W (X^x)^\dagger$. The measurement outcome is communicated to the receiver by using the two channels $\mc{P}_1\otimes\mc{P}_2$. The probability that the receiver measures $\hat{x}\coloneqq\hat{x}_1=\hat{x}_2$ is $\sum_x \frac{1}{d}\mc{P}_1(\hat{x}|x)\mc{P}_2(\hat{x}|x)$, and so the probability that the receiver declares success is
\begin{align}
    P = \sum_{x,\hat{x}} \frac{1}{d}\mc{P}_1(\hat{x}|x)\mc{P}_2(\hat{x}|x).
\end{align}
Upon measuring $\hat{x}$, the receiver applies the correction unitary $(X^{\hat{x}})^T$. So, the final state on $B_1$ is $(X^{\hat{x}})^{T} X^{x}W(X^x)^\dagger(X^{\hat{x}})^* = X^{x-\hat{x}}W(X^{x-\hat{x}})^\dagger$. The resulting channel is therefore
\begin{align}
    \mc{P}_{out}(W) = \sum_{x,\hat{x}} \frac{\mc{P}_1(\hat{x}|x)\mc{P}_2(\hat{x}|x)}{dP} X^{x-\hat{x}}W(X^{x-\hat{x}})^\dagger
\end{align}
The worst-case fidelity of $\mc{P}_{out}$ is precisely
\begin{align}
    F(\mc{P}_{out}) = \frac{\sum_x\mc{P}_1(x|x)\mc{P}_2(x|x)}{\sum_{x,\hat{x}}\mc{P}_1(\hat{x}|x)\mc{P}_2(\hat{x}|x)}.
\end{align}
If we are given $n>2$ channels, the fidelity of $\mc{P}_{out}$ is
\begin{align}
    F(\mc{P}_{out}) = \frac{\sum_x \prod_i\mc{P}_i(x|x)}{\sum_{x,\hat{x}}\prod_i \mc{P}_i(\hat{x}|x)}.
\end{align}
Analogous to the entanglement-assisted protocol, we conclude that the protocol is universal for all identity-biased classical channels, which are channels such that $\mc{P}(x|x) > \mc{P}(\hat{x}|x)$ for all $\hat{x}\neq x$. This is because
\begin{align}
    F(\mc{P}_{out}) &> \frac{\sum_x \mc{P}_i(x_{min}|x_{min})\prod_{j\neq i}\mc{P}_j(x|x)}{\sum_{x,\hat{x}} \mc{P}_i(\hat{x}|x)\prod_{j\neq i}\mc{P}_j(x|x)} \\
    &= \frac{\sum_x F(\mc{P}_i)\prod_{j\neq i}\mc{P}_j(x|x)}{\sum_{x} \prod_{j\neq i}\mc{P}_j(x|x)} \\
    &= F(\mc{P}_i),
\end{align}
where $x_{min}$ is $\mathrm{argmin}_x \mc{P}_i(x|x)$. Again, we can enlarge the set over which the protocol is universal by twirling the input channels classically:
\begin{align}
    \mc{T}_{cl}[\mc{P}](W) \coloneqq \frac{1}{d!}\sum_{\sigma\in\text{Sym}(d)} U_\sigma^\dagger\mc{P}(U_\sigma W U_\sigma^\dagger)U_\sigma \label{eq:classical-twirling}
\end{align}
where $U_\sigma$ is the unitary representation of permutation $\sigma$. The twirled channel becomes a $d$-ary symmetric channel (dSC)
\begin{align}
    \mc{T}[\mc{P}_i](W) = \mathrm{dSC}_{F_i}(W) \coloneqq F_iW + (1-F_i)\frac{\mathbb{I}_d-W}{d-1}.
\end{align}
where $F_i=F_{avg}(\mc{P}_i)\coloneqq \sum_{x}\frac{1}{d}\mc{P}_i(x|x)$. Performing our protocol on the twirled channels yields an output channel with fidelity
\begin{align}
    F(\mc{P}_{out}) = \frac{\prod_i F_i}{\prod_i F_i + \left(\frac{1}{d-1}\right)^{n-1}\prod_i (1-F_i)}.
\end{align}
One can readily show that $F(\mc{P}_{out})>F_i$ when $F_i>1/d$. 
The fidelity upper bound in the previous section follows analogously.
\begin{proposition}\label{prop:classical-optimality}
    Given $n$ classical channels $\{\mc{P}_i\}_{i=1}^n$ with $\mc{P}_i(W) = \sum_{x,y} \mc{P}_i(y|x)\op{y}{x}W\op{x}{y} $, let $\Omega_i = \max_y \displaystyle\frac{\max_x\mc{P}_i(y|x)}{\min_x\mc{P}_i(y|x)}$. For any NS-assisted superinstrument $\{\Theta_0,\Theta_1\}$ such that the output channel can be defined as $\mc{P}_{out}=\displaystyle\frac{\Theta_0[\mc{P}_1\otimes\cdots\otimes\mc{P}_n]}{p(\Theta_0;\mc{P}_1,\cdots,\mc{P}_n)}$, we have
    \begin{align}
        F\left(\mc{P}_{out}\right) \leq \frac{\prod_i \Omega_i} {\prod_i \Omega_i+(d-1)}.
    \end{align}
\end{proposition}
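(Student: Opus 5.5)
The bound rests on a ``commuting through'' trick: an NS superinstrument cannot distinguish which input it is fed except through the resource channels themselves. I would write each classical channel as a positive combination of a replacement channel and a remainder whose weight is controlled by $\Omega_i$, and then exploit the fact that an NS superinstrument maps replacement channels to (subnormalized) replacement channels.

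\textbf{Step 1: decompose each channel.} For each $i$ and each output symbol $y$, set $m_i(y)=\min_x \mc{P}_i(y|x)$ and let $\omega_i=\sum_y m_i(y)\state{y}/\mu_i$ with $\mu_i=\sum_y m_i(y)$. The map $\mc{P}_i-\mu_i\mc{R}_{\omega_i}$ is then completely positive, since in the classical form its coefficients $\mc{P}_i(y|x)-m_i(y)$ are nonnegative. It is not the bound we want, though. What the statement needs is a reverse, operator-type inequality comparing $\mc{P}_i$ to $\mc{R}_{\omega_i}$ in both directions. Because $\mc{P}_i(y|x)\le \max_x\mc{P}_i(y|x)\le \Omega_i m_i(y)$, we get, in the CP ordering of the Choi operators,
\[
\mc{P}_i \le \Omega_i\,\mu_i\,\mc{R}_{\omega_i}, \qquad \mc{P}_i \ge \mu_i\mc{R}_{\omega_i}.
\]
Both orderings are preserved under tensor products. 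It follows that $\mu\,\mc{R}\le \mc{P}_1\otimes\cdots\otimes\mc{P}_n\le \Omega\mu\,\mc{R}$, where $\mc{R}=\bigotimes_i\mc{R}_{\omega_i}$, $\mu=\prod_i\mu_i$ and $\Omega=\prod_i\Omega_i$.

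\textbf{Step 2: push the orderings through $\Theta_0$.} $\Theta_0$ is completely CP-preserving, so it is monotone with respect to the CP ordering. This gives
\[
\mu\,\Theta_0[\mc{R}]\le \Theta_0[\otimes_i\mc{P}_i]\le \Omega\mu\,\Theta_0[\mc{R}].
\]
Since $\Theta$ is NS, $\Theta_0[\mc{R}]=q\,\mc{R}_\omega$ for some state $\omega$ and some $q\ge0$. Because the output channel is well-defined, it is sandwiched as $c\,\mc{R}_\omega\le \mc{P}_{out}\le \Omega c\,\mc{R}_\omega$ for a constant $c=q\mu/p>0$. I expect this to be the delicate step. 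It requires justifying that the NS property of the superchannel $\Theta_0+\Theta_1$ forces each component $\Theta_0$ to map replacement channels to multiples of replacement channels, which I would establish directly from the NS marginal conditions applied to the Choi operator of $\Theta_0$. It also requires handling the case $q=0$: then the upper bound forces $\mc{P}_{out}=0$, which contradicts $p>0$.

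\textbf{Step 3: extract the fidelity bound.} Write $\mc{P}_{out}(y|x)$ for the diagonal transition probabilities of the output channel; if the output channel is not classical, dephase it first, which does not change $F$. From the sandwich, $c\,\omega(y)\le \mc{P}_{out}(y|x)\le \Omega c\,\omega(y)$ for all $x,y$. Summing over $y$ gives $c\le1$. For $y\neq x$ we have $\mc{P}_{out}(y|x)\ge c\,\omega(y)\ge \mc{P}_{out}(y|y)/\Omega$. Then for every $x$,
\[
1=\sum_y \mc{P}_{out}(y|x)\ge \mc{P}_{out}(x|x)+\frac{1}{\Omega}\sum_{y\ne x}\mc{P}_{out}(y|y)\ge F+\frac{(d-1)F}{\Omega},
\]
using $\mc{P}_{out}(y|y)\ge F$. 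Rearranging yields $F\le \Omega/(\Omega+d-1)$, which is exactly the claimed bound.
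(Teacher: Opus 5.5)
Your argument is correct and is essentially the paper's proof with the projective-robustness machinery written out by hand: the sandwich $\mu\,\mc{R}\le\bigotimes_i\mc{P}_i\le\Omega\mu\,\mc{R}$ is a feasible point certifying $\Omega(\bigotimes_i\mc{P}_i)\le\prod_i\Omega_i$ (the paper's formula for $\Omega(\mc{P})$ plus sub-multiplicativity), and pushing it through $\Theta_0$ is the paper's monotonicity theorem. Your summation over $y\neq x$ is an equivalent substitute for the paper's pigeonhole choice of a column $y^*$, which gives $\Omega(\mc{P}_{out})\ge (d-1)F/(1-F)$. One caution on the step you flagged: the property that $\Theta_0$ maps replacement channels to multiples of replacement channels does not follow from no-signaling of $\Theta_0+\Theta_1$ alone, since a success flag correlated with the sender's input and shared randomness would then carry information for free. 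It has to come from imposing the no-signaling condition on each branch, with the flag counted as part of the receiver's output, which is what the paper tacitly assumes when it calls $\Theta$ free.
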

For symmetric channels $\mathrm{dSC}_{F_i}$, $\Omega_i = \frac{F_i}{1-F_i}(d-1)$. So, as a corollary of this proposition, our protocol is optimal when the $n$ channels are all symmetric channels. We also show that our shared-randomness protocol is essentially the unique universal protocol for classical channels, as formally stated below. This is again a special case of Thm. \ref{thm:uniqueness}.
\begin{proposition}\label{prop:classical-uniqueness}
    Let the superinstrument that describes our shared-randomness protocol be $\Theta^\mathrm{SR}=\{\Theta_0^{\mathrm{SR}},\Theta_1^{\mathrm{SR}}\}$. Any NS-assisted superinstrument $\{\Theta_0,\Theta_1\}$ that is universal for arbitrary classical channels with fidelity greater than $1/d$ satisfies
    \begin{align}
        C\Theta_0^{\mathrm{SR}}[\cdot] = \mc{T}_{cl}^B\circ\Theta_0^{AA'\to B}\left[\mc{T}_{cl}^A\otimes\mc{T}_{cl}^{A'}[\cdot]\right].
    \end{align}
    for some constant $C\in(0,1]$ that reduces the success probability.
\end{proposition}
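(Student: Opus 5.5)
The plan is to reduce the general NS superinstrument to a symmetrized one and then show that universality forces its Choi operator into the form of $\Theta_0^{\mathrm{SR}}$. The reduction is a classical twirl: replace $\Theta_0$ by $\wt\Theta_0[\cdot]\coloneqq\mc{T}_{cl}^B\circ\Theta_0[\mc{T}_{cl}^A\otimes\mc{T}_{cl}^{A'}[\cdot]]$. Three facts justify this.
\begin{itemize}
\item The twirls are implementable with shared randomness, so they preserve the NS property.
\item For any input channels $\mc{P}_1,\mc{P}_2$ with fidelity greater than $1/d$, the channels $\mc{T}_{cl}[\mc{P}_i]$ are symmetric channels $\mathrm{dSC}_{F_i}$ whose fidelity is the average fidelity $F_{avg}(\mc{P}_i)>1/d$, hence still admissible inputs.
\item The output twirl does not decrease worst-case fidelity below the average.
\end{itemize}
The key observation is that $\wt\Theta_0$ is determined entirely by its action on pairs $\mathrm{dSC}_{F_1}\otimes\mathrm{dSC}_{F_2}$. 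Each such channel is an affine combination of the identity $\mathrm{id}$ and the completely depolarizing-type channel $\mc{J}(W)=(\mathbb{I}-W)/(d-1)$. Hence $\wt\Theta_0[\mathrm{dSC}_{F_1}\otimes \mathrm{dSC}_{F_2}]$ is bilinear in $(F_1,F_2)$. It is therefore fixed by the four "vertex" outputs $\wt\Theta_0[\mc{A}\otimes\mc{B}]$ with $\mc{A},\mc{B}\in\{\mathrm{id},\mc{J}\}$, and each vertex output is itself a symmetric map $a\,\mathrm{id}+b\,\mc{J}$ after the output twirl.

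Next, universality gives a constraint on these vertex outputs. The output $\mc{P}_{out}$ is a symmetric channel with fidelity $f(F_1,F_2)=\alpha(F_1,F_2)/(\alpha+\beta)$, where $\alpha$ and $\beta$ are bilinear forms in $(F_1,1-F_1)\otimes(F_2,1-F_2)$ built from the $\mathrm{id}$- and $\mc{J}$-coefficients. Universality requires $f(F_1,F_2)>\max(F_1,F_2)$ for all $F_i\in(1/d,1]$, with the weak inequality holding at the boundary by continuity. Evaluating near corners pins the coefficients down:
\begin{itemize}
\item At $F_1=1$ with $F_2\to1/d$, and symmetrically at $F_2=1$ with $F_1\to1/d$, the output fidelity must be at least $1$. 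This forces the $\mc{J}$-coefficient of $\wt\Theta_0[\mathrm{id}\otimes\mathrm{dSC}_{1/d}]$ to vanish, and since that channel is a convex combination of the vertices, it kills the $\mc{J}$-coefficient in the $\mathrm{id}\otimes\mathrm{id}$ output and the mixed terms.
\item Taking $F_1=F_2=1/d$ requires $f\ge 1/d$. Combined with the previous constraint, this forces a specific ratio between the remaining coefficients (those of $\mc{J}\otimes\mathrm{id}$, $\mathrm{id}\otimes\mc{J}$ and $\mc{J}\otimes\mc{J}$), so that $\beta\propto(1-F_1)(1-F_2)/(d-1)$.
\end{itemize}
Matching against the explicit formula $\prod F_i/(\prod F_i+\frac{1}{d-1}\prod(1-F_i))$ for $\Theta^{\mathrm{SR}}$ then shows that the two symmetrized maps agree on all symmetric inputs up to an overall scalar $C$, the ratio of success probabilities. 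Because $\Theta_0+\Theta_1$ is TP-preserving, $C\le 1$.

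Finally, the identity must be lifted from symmetric inputs to all inputs. Both sides of the claimed equation involve a twirl on the inputs, so each depends only on the twirled input channels. Every bipartite input must also be handled, not only product ones. For that step I would use the Choi-operator picture: the input twirl projects onto the commutant of $U_\sigma\otimes U_\sigma$ on each of $A$ and $A'$, which is spanned by the $\mathrm{id}$ and $\mc{J}$ components. The equality on product symmetric channels, extended by linearity, then holds on that whole span.

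The main obstacle is step two: extracting exact equalities of coefficients rather than inequalities from a strict-inequality universality condition. I would first try limits at the boundary $F_i\to 1/d$ and $F_i\to1$. A risk is that positivity constraints are needed so that no coefficient can be negative to compensate, and that input channels with fidelity exactly $1/d$ lie at the boundary; I would handle these using complete positivity of $\Theta_0$ applied to $\mc{J}$.
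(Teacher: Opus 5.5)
Your reduction (twirling, bilinearity in $(F_1,F_2)$, and using $F_1\to1$, $F_2\to1$ plus positivity to kill the $\mc{J}$-coefficients of the $\mathrm{id}\otimes\mathrm{id}$, $\mathrm{id}\otimes\mc{J}$ and $\mc{J}\otimes\mathrm{id}$ outputs) is sound. It matches the implicit setup of the paper's proof of Thm.~\ref{thm:uniqueness}. The gap is in your second bullet.

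Apart from noting that twirling preserves it, you never use the no-signaling property. Universality alone only gives lower bounds on the output fidelity, and lower bounds cannot pin the coefficients down. After your first bullet you already have $\beta=b_4(1-F_1)(1-F_2)$. What remains is $\alpha=a_1F_1F_2+a_2F_1(1-F_2)+a_3(1-F_1)F_2+a_4(1-F_1)(1-F_2)$ and the ratio $b_4/a_1$, and you must show $a_2=a_3=a_4=0$ and $b_4=a_1/(d-1)$.

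The condition $f(1/d,1/d)\ge 1/d$ is a single inequality pointing the wrong way. Raising any $a_i$ or lowering $b_4$ only increases $f=\alpha/(\alpha+\beta)$. Concretely, take $a_1=1$, $a_2=a_3=a_4=0$, $b_4=\epsilon/(d-1)$ with $0\le\epsilon<1$, or take $\Theta_0^{\mathrm{SR}}$ plus any positive $a_2,a_3,a_4$. Either choice satisfies every constraint you list, positivity and universality included, yet is not proportional to $\Theta_0^{\mathrm{SR}}$. Such maps are excluded only because they signal.

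The missing ingredient is the upper bound supplied by NS. Since $\mathrm{dSC}_{1/d}=\mathcal{R}_{\mathbb{I}/d}$ is a replacement channel, $\mathrm{dSC}_{1/d}^{\otimes 2}$ is free. So $\wt\Theta_0$ must map it to a multiple of a replacement channel, whose twirl is again $\mathrm{dSC}_{1/d}$. This gives
\[
a_1+(d-1)(a_2+a_3)+(d-1)^2a_4=(d-1)b_4 .
\]
You then need universality along the whole edge $F_2\to 1/d$ with $F_1\in(1/d,1)$, not just at the corner. By continuity, $f(F_1,1/d)\ge F_1$. After substituting the identity above, this becomes $(a_3+(d-1)a_4)(1-dF_1)\ge 0$, which forces $a_3=a_4=0$ by nonnegativity. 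The other edge forces $a_2=0$. The NS identity then gives $b_4=a_1/(d-1)$, i.e.\ proportionality to $\Theta_0^{\mathrm{SR}}$ on symmetric inputs.

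This is exactly the paper's mechanism. The RNG condition on $\N_{F_c}^{\otimes 2}$ bounds the coefficients from above. Evaluating at $F_2=F_c$ then shows the output fidelity is at most $F_1$, and universality, by continuity, forces equality term by term.
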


\subsection{Generic resource theories}
Finally, we will prove general statements on the possibility of universal distillation of a resource. For full generality, we will state and prove our result for dynamical resource theories, which will reduce to static resource theories (i.e., resource theories of quantum states) by fixing the input system to be a trivial one-dimensional system. In the following, we will always assume that the resource theory is equipped with a free generalized channel twirling, defined as follows.
\begin{definition}[\cite{takagi2022one}]
    A generalized channel twirling is a superchannel defined by an operator $0\leq P\leq\mathbb{I}$ as
    \begin{align}
        \mc{T}[\mc{N}]=\tr(J_\mc{N} P)\Phi+\tr(J_\mc{N}(1-P))\Lambda
    \end{align}
    where $\Phi$ is a resource channel, $\tr(J_\Phi P)=1$, and $\tr(J_\Lambda J_\Phi)=\tr(J_\Lambda P)=0$.
\end{definition}
Generalized channel twirlings are indeed superchannels because they admit a physical decomposition shown in Fig. \ref{fig:twirling-superchannel}. In fact, they are a type of measure-and-prepare superchannels \cite{chen2020entanglement}. In static resource theories, they reduce to generalized twirling channels. Most standard resource theories are equipped with such a free generalized twirling. This includes, but is not limited to, the static resource theories of entanglement, coherence \cite{aberg2006quantifying,streltsov2017colloquium}, thermodynamics \cite{brandao2013resource}, and asymmetry \cite{bartlett2007reference}, as well as the dynamical resource theories of entanglement \cite{gour2020dynamicalentanglement}, coherence \cite{saxena2020dynamical}, and communication. The resource channel $\Phi$ is precisely the target channel that we will try to distill, and a channel of the form 
\begin{align}
    \N_F = F\Phi+(1-F)\Lambda
\end{align}
will be referred to as an isotropic channel.

\begin{figure}
    \centering
    \includegraphics[width=\linewidth]{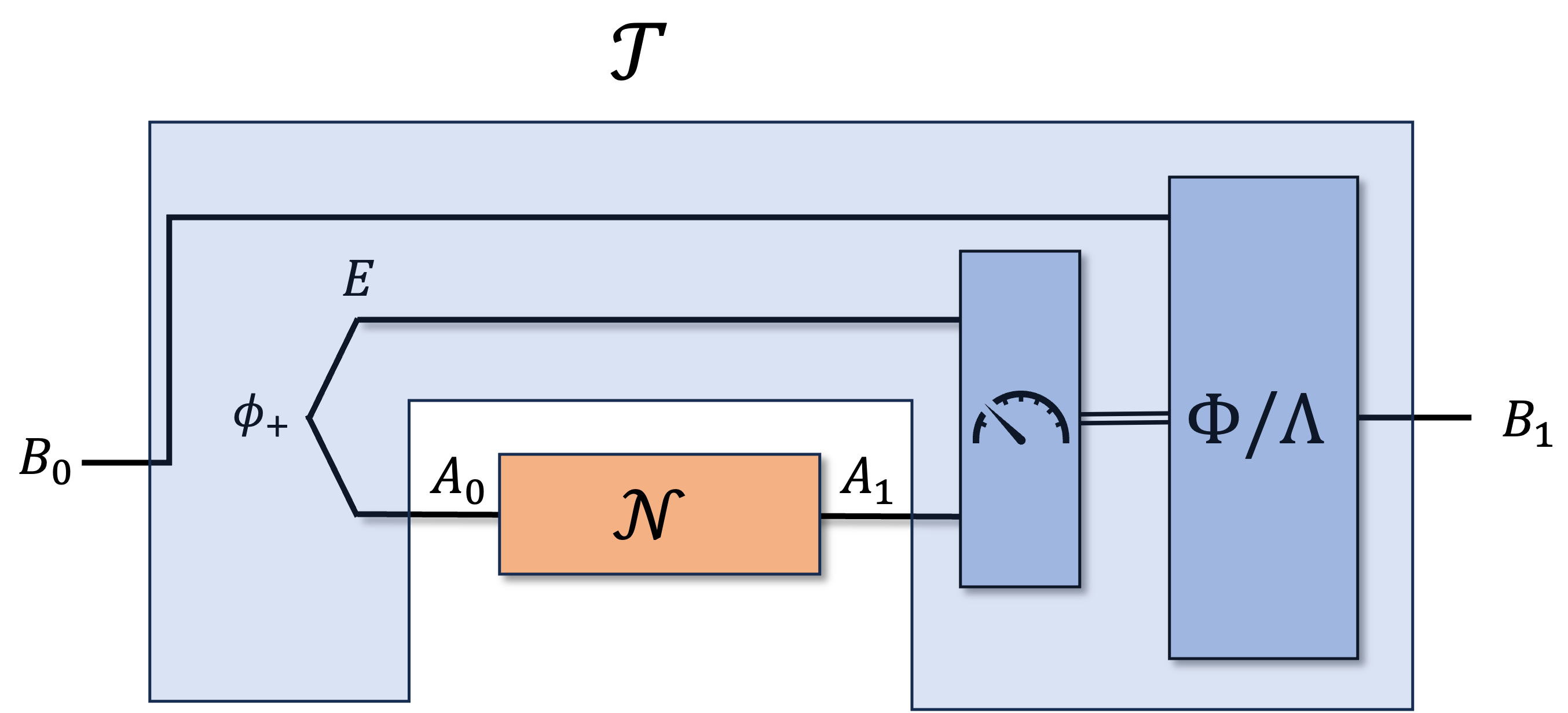}
    \caption{A physical decomposition of the twirling superchannel. The state $\phi_+^{A_0E}$ is prepared and sent through the channel, and a POVM $\{P,\mbb{I}-P\}$ is performed on $A_1E$. The quantum state on $B_0$ is sent through $\Phi$ or $\Lambda$ depending on the measurement outcomes.}
    \label{fig:twirling-superchannel}
\end{figure}

To state the theorem, a few more definitions are in order. Given a resource theory with free channels $\mbb{F}_C$, the standard robustness and generalized robustness are \cite{vidal1999robustness,datta2009min} 
\begin{align}
    &R_s(\N) = \min\left\{1+s:\frac{\N+s\M}{1+s}\in\mbb{F}_C,\ \M\in\mbb{F}_C\right\}, \\
    &R_g(\N) = \min\left\{1+s:\frac{\N+s\M}{1+s}\in\mbb{F}_C,\ \M\in\text{CPTP}\right\}.
\end{align}
In words, the standard robustness quantifies the amount of free disturbance required to make $\N$ free, while the generalized robustness quantifies the amount of arbitrary disturbance that is required. These can be rephrased as
\begin{align}
    &R_s(\N) = \min\left\{\lambda: \N\leq_{\mbb{F}_C}\lambda\M,\ \M\in\mbb{F}_C\right\}, \\
    &R_g(\N) = \min\left\{\lambda:\N\leq_{c.p.}\lambda\M,\ \M\in\mbb{F}_C\right\},
\end{align}
where $\N\leq_{\mbb{F}_C}\N'$ if $\N'-\N\in\text{cone}(\mbb{F}_C)=\{\lambda\M:\M\in\mbb{F}_C\}$ and $\N\leq_{c.p.}\N'$ if $\N'-\N$ is completely positive. The projective robustness of a resource is defined as \cite{regula2022probabilistic,regula2022tight} 
\begin{align}\label{eq:proj-rob}
    \Omega(\mc{N})=\inf\{\gamma:\mc{N}\leq_{c.p.}\lambda\mc{M}\leq_{c.p.}\gamma\mc{N},\ \mc{M}\in\mathbb{F}_C\}.
\end{align}
Finally, following Ref. \cite{gour2019comparison}, we will define the inner product between two completely positive maps by $\langle\mc{N},\mc{M}\rangle = \tr(J_\mc{N}J_{\mc{M}})$, where $J_\N$ is the normalized Choi state of $\N$ defined as $\mathrm{id}\otimes\N(\phi_+)$. Let $F_c \coloneqq \max_{\M\in\mbb{F}_C}\langle\Phi,\M\rangle$. The following theorem provides a sufficient condition for the existence of a resource-non-generating superinstrument for universal resource distillation.
\begin{theorem}\label{thm:general-rt-suff-condition}
    In any resource theory with a free generalized channel twirling $\mc{T}$: $\mc{T}[\mc{N}]=\tr(PJ_\mc{N})\Phi+\tr[(\mathbb{I}-P)J_\mc{N}]\Lambda$, if 
    \begin{enumerate}
        \item for some isotropic channel $\mc{N}_F=F\Phi+(1-F)\Lambda$ with $F\in(F_c,1)$, the projective robustness is multiplicative, i.e., $\Omega(\mc{N}_F^{\otimes 2})=\Omega(\mc{N}_F)^2$, and
        \item the resource theory is either
        \begin{enumerate}
            \item affine, or
            \item convex with $R_s(\Phi)=R_g(\Phi)$,
        \end{enumerate}
    \end{enumerate}
    then the supermap 
    \begin{align}\label{eq:general-supermap}
        \Theta_0[\mc{N}] =& \tr[P\otimes P J_\mc{N}]\Phi \notag\\
        &+ \frac{F_c}{1-F_c}\tr[(\mbb{I}-P)\otimes(\mbb{I}-P)J_\mc{N}]\Lambda
    \end{align}
    is resource-non-generating (RNG), i.e., $\Theta_0[\M]$ is proportional to a free channel for every $\M\in\mbb{F}_C$. Moreover, $\Theta_0[\N_1\otimes\N_2]=p\N_{out}$ with $\langle\N_{out},\Phi\rangle>\max \{\langle\N_1,\Phi\rangle,\langle\N_2,\Phi\rangle\}$ whenever $\langle\mc{N}_i,\Phi\rangle>F_c$. 
\end{theorem}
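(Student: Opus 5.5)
The proof has two parts: first that $\Theta_0$ maps free channels to (multiples of) free channels, then the fidelity computation. Everything will be reduced to isotropic channels through the free twirling $\mc{T}$. Write $Q=\mathbb{I}-P$ and $c=F_c/(1-F_c)$. I would also take the normalization $\langle\Phi,\Phi\rangle=1$, so that $\langle\N,\Phi\rangle=\tr(PJ_\N)$ for any $\N$. Justifying this is part of Step 1.

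\textbf{Step 1: which isotropic channels are free.} Because $\tr(J_\Lambda J_\Phi)=\tr(J_\Lambda P)=0$, the channel $\N_F=F\Phi+(1-F)\Lambda$ satisfies $\langle \N_F,\Phi\rangle=F\langle\Phi,\Phi\rangle$. Applying the free twirl to any free $\M$ gives a free isotropic channel with parameter $\tr(PJ_\M)$, so $\tr(PJ_\M)\le F_c$ for every free $\M$. Conversely, the twirl of a free channel attaining $F_c$ is the free channel $\N_{F_c}$. The set of free isotropic channels is convex (case (a) or (b)), and I also need $\Lambda$ to be free. In case (b) I would get this from $R_s(\Phi)=R_g(\Phi)$: an optimal free decomposition of $\Phi$ can be twirled into one using $\Lambda$ as the perturbing channel. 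In case (a), affineness plus twirling gives it directly. Together these show $\N_F$ is free if and only if $F\le F_c$. The same argument gives the projective robustness explicitly,
\begin{align}
\Omega(\N_F)=\frac{F}{1-F}\cdot\frac{1-F_c}{F_c}\qquad (F>F_c),
\end{align}
by sandwiching $\N_F$ between multiples of $\N_{F_c}$ in the c.p. order.

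\textbf{Step 2: resource non-generation, the main obstacle.} Let $\M$ be free on the bipartite system. Then $\Theta_0[\M]=a\Phi+b\Lambda$, where $a=\tr[(P\otimes P)J_\M]$ and $b=c\,\tr[(Q\otimes Q)J_\M]$. By Step 1 this is proportional to a free channel exactly when $a/(a+b)\le F_c$, that is, when
\begin{align}
\tr[(P\otimes P)J_\M]\le c^2\,\tr[(Q\otimes Q)J_\M].
\end{align}
To prove this I would apply $\mc{T}\otimes\mc{T}$ to $\M$. This yields a free channel of the form $q_{PP}\Phi\otimes\Phi+q_{PQ}\Phi\otimes\Lambda+q_{QP}\Lambda\otimes\Phi+q_{QQ}\Lambda\otimes\Lambda$. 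Suppose for contradiction that $q_{PP}>c^2q_{QQ}$. Then I would build a free channel from this twirled $\M$ (mixing in $\N_{F_c}\otimes\N_{F_c}$ and its partial versions if needed) that sandwiches $\N_F^{\otimes2}$ with ratio $\gamma<\Omega(\N_F)^2$. The estimate comes from comparing the coefficient ratios on the $\Phi\otimes\Phi$ and $\Lambda\otimes\Lambda$ components. This contradicts hypothesis 1, the multiplicativity $\Omega(\N_F^{\otimes2})=\Omega(\N_F)^2$. Getting the cross terms $q_{PQ},q_{QP}$ to fit the sandwich is the delicate part. My first attempt would use the fact that the c.p. order decomposes componentwise over the four orthogonal Choi supports $\{P,Q\}^{\otimes2}$. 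That reduces the condition $\N_F^{\otimes 2}\le\lambda\M'\le\gamma\N_F^{\otimes2}$ to four scalar inequalities, and the extreme ones are exactly $PP$ versus $QQ$.

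\textbf{Step 3: improvement.} For $\N_1\otimes\N_2$ the expectation factorizes, $\tr[(P\otimes P)(J_{\N_1}\otimes J_{\N_2})]=f_1f_2$, where $f_i=\langle\N_i,\Phi\rangle$. Likewise the $Q\otimes Q$ term is $(1-f_1)(1-f_2)$. Hence $\N_{out}$ is isotropic with
\begin{align}
\langle\N_{out},\Phi\rangle=\frac{f_1f_2}{f_1f_2+c(1-f_1)(1-f_2)}.
\end{align}
This exceeds $f_1$ if and only if $f_2(1-f_1)>c(1-f_1)(1-f_2)$, i.e. $f_2/(1-f_2)>c$, which is equivalent to $f_2>F_c$. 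The symmetric argument gives the bound against $f_2$, completing the proof.
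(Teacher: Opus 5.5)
There is a genuine error in Step 1, and it breaks your argument for case (a). You assert that in an affine theory ``affineness plus twirling'' makes $\Lambda$ free, so that $\N_F$ is free iff $F\le F_c$. The opposite is true. If $\N_{F_1}$ and $\N_{F_2}$ with $F_1\neq F_2$ were both free, the affine line through them would contain $\Phi=\N_1$, which is a channel and would therefore be free. So in an affine theory $\N_{F_c}$ is the \emph{unique} free isotropic channel; the paper uses exactly this in the proposition following the theorem. In the communication theory, for example, $\Lambda(\rho)=(d\,\mbb{I}-\rho)/(d^2-1)$ is not a replacement channel.

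Consequently your RNG criterion in Step 2 is wrong in case (a). There, $\Theta_0[\M]$ is proportional to a free channel only if $\tr[(P\otimes P)J_\M]=c^2\,\tr[(Q\otimes Q)J_\M]$ for every free $\M$. Your contradiction with multiplicativity delivers only ``$\le$''. The paper gets the equality from the affine form of the dual program, whose feasibility constraint is $\langle\mc{B}-\mc{A},\M\rangle=0$ rather than $\ge 0$. You can repair this within your framework by using affineness once more. Suppose some free twirled $\M'$ satisfied the inequality strictly. For small $s>0$, the map $(1+s)\N_{F_c}^{\otimes 2}-s\M'$ still has nonnegative coefficients on $\Phi\otimes\Phi,\dots,\Lambda\otimes\Lambda$, since those of $\N_{F_c}^{\otimes2}$ are strictly positive for $0<F_c<1$. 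It is therefore a channel in the affine hull of free channels, hence free, and it violates ``$\le$''.

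Apart from this, your Step 2 takes a genuinely different route from the paper, and it does go through. The paper passes to the dual SDP of $\Omega(\N_F^{\otimes2})$. It uses twirl invariance and complementary slackness to pin down $\mc{A}^*\propto\mc{P}\otimes\mc{P}$ and $\mc{B}^*\propto(d^2\Delta-\mc{P})^{\otimes 2}$, then reads the inequality off dual feasibility. You argue in the primal by contradiction, and the cross terms you flag as delicate are harmless. Put $\M_t=(1-t)\N_{F_c}^{\otimes2}+t\M'$, with coefficients $m_{XY}=(1-t)k_{XY}+tq_{XY}$, where $k_{XY}$ are those of $\N_{F_c}^{\otimes2}$. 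At $t=0$ the four ratios $m_{XY}/n_{XY}$ against $\N_F^{\otimes2}$ are strictly ordered $PP<PQ=QP<QQ$, because $F>F_c$. So for small $t$ the sandwich constant is $\frac{m_{QQ}}{m_{PP}}\cdot\frac{F^2}{(1-F)^2}$. This is strictly below $\Omega(\N_F)^2=\frac{(1-F_c)^2F^2}{F_c^2(1-F)^2}$ for every $t\in(0,1]$ precisely when $q_{PP}>c^2q_{QQ}$. Your route needs the same standing assumptions as the paper (convexity, and $\N_{F_c}^{\otimes2}$ and $\mc{T}\otimes\mc{T}$ free) and avoids strong duality and slackness. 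However, it only ever produces an inequality, which is why the affine case needs the extra step above.

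Finally, in Step 3, $\langle\Phi,\Phi\rangle=1$ does not give $\tr(PJ_\N)=\langle\N,\Phi\rangle$. Purity of $J_\Phi$ together with $\tr(PJ_\Phi)=1$ only yields $P\ge J_\Phi$, i.e.\ $\tr(PJ_{\N_i})\ge f_i$. You should therefore finish with the monotonicity of $xy/[xy+c(1-x)(1-y)]$ in $x$ and $y$, as the paper does. Purity of $J_\Phi$ is itself an implicit standing assumption, not something Step 1 can prove.
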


The proof of Thm. \ref{thm:general-rt-suff-condition} will be provided in the Supplementary Material. Instead, we offer a few remarks on the applicability of the theorem. We believe that multiplicativity of projective robustness needs to be checked on a case-by-case basis for each resource theory. However, the following result shows that resource theories with a certain structure must have multiplicative projective robustness for isotropic channels. While not all resource theories satisfy the following condition, we do not know of a resource theory in which multiplicativity fails to hold for isotropic channels. 
\begin{proposition}
    Let $\mbb{F}_C(A)$ denote the set of free channels on dynamical system $A$, and let $A'$ be a dynamical system isomorphic to $A$. If $\mbb{F}_C(A\otimes A')$ is the convex hull or affine hull of $\mbb{F}_C(A)\otimes\mbb{F}_C(A')$, then $\Omega(\N_F^{\otimes2}) = \Omega(\N_F)^2$ for any $F>F_c$.
\end{proposition}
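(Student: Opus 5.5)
We prove $\Omega(\N_F^{\otimes 2})=\Omega(\N_F)^2$ by establishing two inequalities. The direction $\Omega(\N_F^{\otimes2})\le\Omega(\N_F)^2$ holds whenever $\mbb{F}_C(A\otimes A')$ contains $\mbb{F}_C(A)\otimes\mbb{F}_C(A')$, which both the convex-hull and affine-hull assumptions guarantee. Take a near-optimal feasible pair $(\lambda,\M)$ in Eq.~\eqref{eq:proj-rob} for $\N_F$, so that $\N_F\leq_{c.p.}\lambda\M\leq_{c.p.}\gamma\N_F$. Tensoring preserves the c.p. ordering: $\N\le\N'$ and $\mc{K}\le\mc{K}'$ give $\N\otimes\mc{K}\le\N'\otimes\mc{K}\le\N'\otimes\mc{K}'$. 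Hence
\[
\N_F^{\otimes2}\leq_{c.p.}\lambda^2\M^{\otimes2}\leq_{c.p.}\gamma^2\N_F^{\otimes2},
\]
with $\M^{\otimes 2}$ free. This direction therefore uses nothing beyond closure of free channels under tensor products.

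The reverse inequality is the main obstacle, and it is where the hull structure and the twirling $\mc{T}$ must be used. The plan is to symmetrize the optimization. Since $\mc{T}$ is a free superchannel of measure-and-prepare form, $\mc{T}\otimes\mc{T}$ (acting on $A\otimes A'$) should preserve both the c.p. ordering and the free set. It fixes $\N_F^{\otimes2}$, because $\mc{T}[\Phi]=\Phi$ and $\mc{T}[\Lambda]=\Lambda$, using $\tr(J_\Phi P)=1$ and $\tr(J_\Lambda P)=0$.

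Applying $\mc{T}\otimes\mc{T}$ to any feasible pair $(\lambda,\M)$ for $\N_F^{\otimes2}$ yields another feasible pair with the same $\gamma$. We may therefore assume $\M$ lies in the span of $\{\Phi\otimes\Phi,\ \Phi\otimes\Lambda,\ \Lambda\otimes\Phi,\ \Lambda\otimes\Lambda\}$, say
\[
\M=\sum_{a,b\in\{\Phi,\Lambda\}}q_{ab}\,a\otimes b .
\]
Two facts need verifying here. First, the image of a free channel under $\mc{T}\otimes\mc{T}$ stays free. Under the hull hypothesis this reduces to $\mc{T}$ mapping $\mbb{F}_C(A)$ into $\mbb{F}_C(A)$, extended by linearity over the convex or affine combinations. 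Second, the twirled free channels in this span are exactly combinations of $\N_G\otimes\N_{G'}$ with $G,G'\le F_c$. For this I use $\tr(J_\Lambda J_\Phi)=0$: then $\langle\Phi,\mc{T}[\M]\rangle=\tr(PJ_\M)$, and freeness forces $\tr(PJ_{\M})\le F_c$.

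With $\M$ reduced to this four-parameter form, the two c.p. inequalities become scalar inequalities on the coefficients. I will take $\Phi$ and $\Lambda$ to have Choi operators $J_\Phi$ and $J_\Lambda$ with orthogonal supports (from $\tr(J_\Lambda J_\Phi)=0$ and positivity). Then $X\le_{c.p.}Y$ within the span reduces to coefficientwise comparison:
\[
F^2\le\lambda q_{\Phi\Phi}\le\gamma F^2,\qquad F(1-F)\le\lambda q_{\Phi\Lambda}\le\gamma F(1-F),
\]
and similarly for the remaining coefficients. These imply
\[
\gamma\ge\frac{\lambda q_{\Phi\Phi}}{F^2}\cdot\frac{(1-F)^2}{\lambda q_{\Lambda\Lambda}}=\frac{q_{\Phi\Phi}}{q_{\Lambda\Lambda}}\Big(\frac{1-F}{F}\Big)^{2}\cdot\frac{F^2\cdot F^2}{F^2(1-F)^2}\,,
\]
which simplifies to $\gamma\ge\frac{q_{\Phi\Phi}(1-F)^2}{q_{\Lambda\Lambda}F^2}$ after taking the tightest choice of $\lambda$.

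The remaining step is to minimize the ratio $q_{\Phi\Phi}/q_{\Lambda\Lambda}$ over the twirled free set, which by the hull hypothesis is the convex (or affine) hull of products $(G\Phi+(1-G)\Lambda)\otimes(G'\Phi+(1-G')\Lambda)$. Here the hull structure does the work: a ratio of linear functionals over a convex hull is extremized at the extreme points, i.e. on product elements. There the ratio factorizes as $\frac{G}{1-G}\cdot\frac{G'}{1-G'}$, and the same optimization in the single-copy case gives $\Omega(\N_F)$. Hence $\Omega(\N_F^{\otimes2})\ge\Omega(\N_F)^2$, where $F>F_c$ ensures the relevant extremum is attained in the regime where the single-copy bound is active.

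In the affine case the argument is more delicate, because affine combinations with negative weights could a priori make the ratio smaller. I plan to handle this by restricting to affine combinations that are still CPTP, which keeps every coefficient nonnegative. I would then check that the extremum is again attained at products, by linear-fractional monotonicity in each coordinate separately.
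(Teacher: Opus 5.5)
Your first inequality is fine, and so is the reduction of $\M$ to the twirled form $\sum_{a,b}q_{ab}\,a\otimes b$ with coefficientwise comparison. The key estimate, however, pairs the wrong halves of the sandwich. Combining $\lambda q_{\Phi\Phi}\le\gamma F^2$ with $(1-F)^2\le\lambda q_{\Lambda\Lambda}$ yields only $\gamma\ge\frac{q_{\Phi\Phi}(1-F)^2}{q_{\Lambda\Lambda}F^2}$. That is true but useless: at $\M=\N_{F_c}^{\otimes2}$ it equals $\Omega(\N_F)^{-2}<1$, and minimizing $q_{\Phi\Phi}/q_{\Lambda\Lambda}$ over free channels, as you propose, only lowers it. The correct pairing is $F^2\le\lambda q_{\Phi\Phi}$ together with $\lambda q_{\Lambda\Lambda}\le\gamma(1-F)^2$, which gives
\[
\gamma\;\ge\;\frac{F^2}{(1-F)^2}\cdot\frac{q_{\Lambda\Lambda}}{q_{\Phi\Phi}}.
\]
After this you must \emph{maximize} $q_{\Phi\Phi}/q_{\Lambda\Lambda}$. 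On products it equals $\frac{GG'}{(1-G)(1-G')}\le\frac{F_c^2}{(1-F_c)^2}$, and the mediant inequality extends this bound to convex combinations. With $\Omega(\N_F)=\frac{F(1-F_c)}{(1-F)F_c}$ for $F>F_c$ (attained by $\N_{F_c}$), you then get $\Omega(\N_F^{\otimes2})\ge\Omega(\N_F)^2$. This repaired version is essentially the paper's convex-case argument, which bounds $a_1\le F_c^2$ and $a_4\ge(1-F_c)^2$ separately.

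The affine case has a genuine missing idea. Requiring the affine combination to be CPTP makes the $q_{ab}$ nonnegative, but it says nothing about the signs of the weights. Your argument only uses $G,G'\le F_c$ for the twirled factors, and under that information alone no monotonicity argument can work. Suppose some $\N_G$ with $G<F_c$ were also free. Then $2\N_{F_c}^{\otimes2}-\N_G^{\otimes2}$ is CPTP for $G$ near $F_c$, and its ratio exceeds $F_c^2/(1-F_c)^2$. Worse, $\N_F^{\otimes2}$ would itself be an affine combination of products of $\N_G$ and $\N_{F_c}$, hence free, so multiplicativity would fail outright.

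What is needed is the paper's observation that $\mbb{F}_C(A)$ inherits affinity, being a marginal of $\mbb{F}_C(A\otimes A')$. It therefore contains exactly one isotropic channel: two distinct ones would put $\Phi$ on their affine line, making $\Phi$ free. Consequently $\mc{T}[\mc{A}_i]=\mc{T}[\mc{B}_i]=\N_{F_c}$ for every term. Hence $\mc{T}\otimes\mc{T}[\M]=\sum_i a_i\,\N_{F_c}\otimes\N_{F_c}=\N_{F_c}^{\otimes2}$, because $\sum_i a_i=1$, regardless of the signs of the $a_i$.
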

\begin{proof}
    Let us first assume $\mbb{F}_C(A\otimes A')$ is the convex hull of $\mbb{F}_C(A)\otimes\mbb{F}_C(A')$. Since $\N_F$ is invariant under generalized twirling, for any feasible solution $\mc{M}_1^A$ of $\Omega(\mc{N}_F)$, $\mc{T}[\mc{M}_1]$ is also a feasible solution which attains the same $\gamma$. Therefore, we can without loss of generality choose $\mc{M}_1$ to be a free isotropic channel. Then, we can readily see that the optimal solution is $\N_{F_c}$, and $\Omega(\N_F)=\frac{F}{1-F}\frac{1-F_c}{F_c}$. Now, since $\mc{N}_F^{A}\otimes\mc{N}_F^{A'}$ is invariant under $\mc{T}\otimes\mc{T}$, we can also assume that the optimal solution $\mc{M}_2^{AA'}$ for $\Omega(\N_F^{\otimes 2})$ is invariant under $\mc{T}\otimes\mc{T}$. Now, writing such a free $\mc{M}_2$ as a convex hull of tensor products of free channels $\mc{M}_2^{AA'}=\sum_i p_i\mc{A}_i^A\otimes\mc{B}_i^{A'}$ where $\mc{A}_i$ and $\mc{B}_i$ are free, we obtain
    \begin{align}
        &\mc{T}\otimes\mc{T}[\mc{M}_2^{AA'}] \notag\\
        &= \sum_i p_i\mc{T}[\mc{A}_i]\otimes\mc{T}[\mc{B}_i] \\
        &= \sum_i p_i \mc{N}_{F_i}\otimes\mc{N}_{F'_i} \\
        &= a_1\Phi\otimes\Phi + a_2\Phi\otimes\Lambda + a_3\Lambda\otimes\Phi + a_4\Lambda\otimes\Lambda
    \end{align}
    where $a_1=\sum_i p_i F_iF'_i$, $a_2=\sum_i p_iF_i(1-F_i')$, $a_3=\sum_i p_i(1-F_i)F'_i$, $a_4=\sum_i p_i(1-F_i)(1-F'_i)$ and $F_i,F_i'\leq F_c\ \forall i$. The smallest $\lambda$ such that $\N_F\otimes\N_F \leq_{c.p.} \lambda\M_2$ is 
    \begin{align}
        \lambda^* = \min\max\left\{\frac{F^2}{a_1}, \frac{F(1-F)}{a_2}, \frac{F(1-F)}{a_3}, \frac{(1-F)^2}{a_4}\right\}
    \end{align}
    It is not hard to see that 
    \begin{align}
        \frac{a_1}{a_2} &= \frac{\sum_i p_i F_iF'_i}{\sum_i p_i F_i(1-F'_i)} \leq \frac{\sum_i p_iF_iF_c}{\sum_i p_iF_i(1-F_c)} \\
        &= \frac{F_c}{1-F_c} < \frac{F^2}{F(1-F)}.
    \end{align}
    Therefore, $F^2/a_1 > F(1-F)/a_2$. Similarly $F^2/a_1 > F(1-F)/a_3$ and $F^2/a_1 > (1-F)^2/a_4$. This shows that $\lambda=\min\{F^2/a_1\} \geq F^2/F_c^2$. Likewise, we can show that the smallest $\mu$ such that $\M_2\leq\mu\N_F\otimes\N_F$ is $\mu^*=\min\{a_4/(1-F)^2\}\geq (1-F_c)^2/(1-F)^2$. Therefore, $\Omega(\N_F\otimes\N_F)=\frac{F^2(1-F_c)^2}{(1-F)^2F_c^2}$. This proves the multiplicativity of projective robustness. 

    The proof for when $\mbb{F}_C(A\otimes A')$ is the affine hull of $\mbb{F}_C(A)\otimes\mbb{F}_C(A')$ proceeds analogously. First, note that since $\mbb{F}_C(A\otimes A')$ is an affine set, both $\mbb{F}_C(A)$ and $\mbb{F_C}(A')$ must also be affine sets. This is because $\mbb{F}_C(A)$ can be obtained by tracing out the $A'$ system of channels in $\mbb{F_C}(A\otimes A')$. The same is true for $\mbb{F}_C(A')$. Consider a free isotropic channel in $\mbb{F}_C(A)$, which exists because we can choose an arbitrary free channel in $\mbb{F}_C(A)$ and perform the free generalized twirling to obtain a free isotropic channel. However, there can only be one such free isotropic channel: Suppose there exist two free isotropic channels $F_1\Phi+(1-F_1)\Lambda$ and $F_2\Phi+(1-F_2)\Lambda$. By affinity, $\Phi$ must also be in $\mbb{F_C}(A)$, which contradicts our assumption that $\Phi$ is a resource channel.
    
    With this in mind, the optimal solution for $\Omega(\N_F)$ can be chosen to be the unique free isotropic channel $\N_{F_c}$ without loss of generality. let us again write the optimal solution $\M_2$ for $\Omega(\N_F^{\otimes 2})$ as an affine hull $\M_2^{AA'}=\sum_i a_i\mc{A}_i^A\otimes\mc{B}_i^{A'}$ where $\mc{A}_i$ and $\mc{B}_i$ are free, we have, by the uniqueness of the free isotropic channel
    \begin{align}
        \mc{T}\otimes\mc{T}[\mc{M}_2^{AA'}] &= \sum_i a_i\mc{T}[\mc{A}_i]\otimes\mc{T}[\mc{B}_i] \\
        &= \sum_i a_i \N_{F_c}\otimes\N_{F_c} = \N_{F_c}\otimes\N_{F_c}.
    \end{align}
    In other words, we can without loss of generality choose the optimal solution for $\Omega(\N_F^{\otimes 2})$ to be $\N_{F_c}\otimes\N_{F_c}$. This proves multiplicativity.
\end{proof}

The condition $R_s(\Phi)=R_g(\Phi)$ is satisfied by the maximally resourceful state in a wide range of convex resource theories, including quantum entanglement, genuine multipartite entanglement \cite{contreras2019resource}, and multi-level coherence \cite{johnston2018evaluating}. In the resource theory of non-stabilizerness, there also exist states where this condition holds, for instance the three-qubit Hoggar state, the qutrit Strange state, and the qutrit Norrell state \cite{takagi2022one}. 

In the case of resource theory of communication, the free generalized twirling is precisely the channel twirling operations in Eq.~\eqref{eq:quantum-twirling} and Eq.~\eqref{eq:classical-twirling}. The target resource is the quantum or classical identity channel, with $\langle\mathrm{id},\N\rangle=\tr[\phi_+(\mathrm{id}\otimes\N)(\phi_+)]=F_e(\N)$. Furthermore, it is not hard to see that the resource theory is affine, and that $\mbb{F}_C(A\otimes A')$ is an affine hull of $\mbb{F_C}(A)\otimes\mbb{F}_C(A')$. The conditions in Thm. \ref{thm:general-rt-suff-condition} are thus satisfied, and $F_c=\max_{\M = \tr(\cdot)\omega}\langle\mathrm{id},\M\rangle = \max_{\omega}\tr[\phi_+(\mbb{I}/d\otimes\omega)] = 1/d^2$. Thus, Thm. \ref{thm:general-rt-suff-condition} implies the existence of an NS superinstrument that is universal for channels with entanglement fidelity greater than $1/d^2$. It does not, however, tell us that the superinstrument turns out to be implementable physically by entanglement assistance as well.

As we have alluded to in previous sections, the superinstrument constructed in Thm. \ref{thm:general-rt-suff-condition} is essentially unique.
\begin{theorem}\label{thm:uniqueness}
    The supermap in Eq.~\eqref{eq:general-supermap} is unique up to twirling the input channels and the output channel and rescaling the success probability. More concretely, any other $\Theta_0'$ such that $\Theta_0'[\N_1\otimes\N_2]=p\N_{out}$ with $\langle\N_{out},\Phi\rangle>\max \{\langle\N_1,\Phi\rangle,\langle\N_2,\Phi\rangle\}$ whenever $\langle\mc{N}_i,\Phi\rangle>F_c$ must satisfy
    \begin{align}
        C\Theta_0 = \mc{T}\circ\Theta'_0\circ\mc{T}^{\otimes 2}
    \end{align}
    for some positive number $C\in(0,1]$.
\end{theorem}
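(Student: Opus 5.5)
The plan is to reduce everything to the isotropic subspace, where $\mc{T}\circ\Theta_0'\circ\mc{T}^{\otimes 2}$ is determined by finitely many numbers, and then show that universality forces those numbers into the ratio appearing in Eq.~\eqref{eq:general-supermap}.

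\textbf{Reducing to four numbers.} Because $\mc{T}^{\otimes 2}$ projects any input onto the span of $\Phi\otimes\Phi$, $\Phi\otimes\Lambda$, $\Lambda\otimes\Phi$ and $\Lambda\otimes\Lambda$, with coefficients $\tr[P\otimes P J]$, $\tr[P\otimes(\mbb{I}-P)J]$, and so on, linearity gives
\begin{align}
    \mc{T}\circ\Theta_0'\circ\mc{T}^{\otimes 2}[\N] = \sum_{a,b\in\{P,\mbb{I}-P\}} \tr[a\otimes b\,J_\N]\,\left(c_{ab}\Phi+e_{ab}\Lambda\right),
\end{align}
where $c_{ab}\Phi+e_{ab}\Lambda \coloneqq \mc{T}\circ\Theta_0'[\Phi_a\otimes\Phi_b]$, with $\Phi_P=\Phi$ and $\Phi_{\mbb{I}-P}=\Lambda$. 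All $c_{ab},e_{ab}\ge 0$, since $\Theta_0'$ is CPTNI-preserving and $\mc{T}$ is a measure-and-prepare superchannel. The claim is therefore that $c_{P,\mbb{I}-P}=c_{\mbb{I}-P,P}=c_{\mbb{I}-P,\mbb{I}-P}=0$, that $e_{PP}=e_{P,\mbb{I}-P}=e_{\mbb{I}-P,P}=0$, and that $e_{\mbb{I}-P,\mbb{I}-P}/c_{PP}=F_c/(1-F_c)$. We then set $C=c_{PP}$ after normalizing so that $C\le 1$.

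\textbf{Extracting constraints from universality.} Twirling preserves $\langle\cdot,\Phi\rangle$, so $\mc{T}[\N_{out}]$ has the same overlap with $\Phi$ as $\N_{out}$. The hypothesis on $\Theta_0'$ applied to the isotropic inputs $\N_{F_1}\otimes\N_{F_2}$ with $F_1,F_2\in(F_c,1]$ then yields, after expanding, the inequality
\begin{align}
    \frac{\sum_{ab} w_{ab}(F_1,F_2)\, c_{ab}}{\sum_{ab} w_{ab}(F_1,F_2)\,(c_{ab}+e_{ab})} > \max(F_1,F_2),
\end{align}
where $w_{PP}=F_1F_2$, $w_{P,\mbb{I}-P}=F_1(1-F_2)$, and so on. I would analyse this in three limits.
\begin{itemize}
\item \emph{Both inputs near $\Phi$.} Taking $F_1=1$ and letting $F_2\to 1$ forces the output fidelity to $1$, so $e_{PP}=0$ and $c_{PP}>0$. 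Because the inequality is strict for $F_1=1$, $F_2<1$, the output must equal $\Phi$ exactly, which forces $e_{P,\mbb{I}-P}=0$, and symmetrically $e_{\mbb{I}-P,P}=0$.
\item \emph{Mixed terms.} Again with $F_1=1$, the output is $\Phi$ only if the $\Phi\otimes\Lambda$ contribution carries no $\Lambda$ part; this is the constraint just used. The vanishing of $c_{P,\mbb{I}-P}$ and $c_{\mbb{I}-P,P}$ should instead come from the boundary $F_2\to F_c^{+}$.
\item \emph{Both inputs near the free channel.} Taking $F_1=F_2\to F_c^{+}$, the output fidelity must stay at least $F_c$, and it must exceed $F_c$ for every $F>F_c$. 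Since $\N_{F_c}$ is free, we should be able to use that $\Theta_0'$ maps free channels to (multiples of) free channels.
\end{itemize}
We may assume $\Theta_0'$ is resource-non-generating, as it arises from an NS or RNG superinstrument; this is implicit in the statement.

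\textbf{Using resource non-generation at the boundary.} This is the cleanest route to the ratio, and I expect it to be the main obstacle. The twirled output of $\N_{F_c}\otimes\N_{F_c}$, and also of free product-type channels $\N_{F_c}\otimes\M$, must be free isotropic channels, so their $\Phi$-weight is at most $F_c$. Combined with universality just above $F_c$, a continuity and derivative argument should force the output fidelity to equal exactly $F_c$ at the point $(F_c,F_c)$. It should also force
\begin{align}
    \frac{\partial F_{out}}{\partial F_i} \ge 1 \quad \text{there}.
\end{align}
Writing out these conditions, I would hope to get two things. First, $c_{PP}F_c^2+c_{P,\mbb{I}-P}F_c(1-F_c)+\dots=F_c\cdot(\text{total})$. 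Second, the derivative condition, which, since all coefficients are nonnegative, is compatible with this equality only if the off-diagonal $c$ terms vanish and $e_{\mbb{I}-P,\mbb{I}-P}(1-F_c)^2=c_{PP}F_c^2\,(1-F_c)/F_c$, which is the required ratio. The delicate point is that strictness is only guaranteed for $F>F_c$. Turning that into equality and derivative conditions at the boundary will likely need the free-channel constraint on both sides. It may also need the conditions from Thm.~\ref{thm:general-rt-suff-condition}, affinity or $R_s=R_g$, to guarantee that the free isotropic channel is exactly $\N_{F_c}$. I would try that first, and fall back on comparing against the optimal protocol of Thm.~\ref{thm:general-rt-suff-condition} if needed.
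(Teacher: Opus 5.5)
Your plan works. It shares the paper's skeleton: twirl into the isotropic sector, use that $\N_{F_c}^{\otimes 2}$ is free, then exploit universality on the edge $F_2=F_c$. It differs from the paper in two useful ways.

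First, the paper simply posits a five-parameter form in which only the $(\mbb{I}-P)\otimes(\mbb{I}-P)$ sector carries a $\Lambda$ component. Your eight coefficients plus the $F_1\to 1$ limit actually justify this. Phrase that step as a limit, though, not as strictness at $F_1=1$, which would demand $\langle\N_{out},\Phi\rangle>1$. With $N,E$ the unnormalized $\Phi$- and $\Lambda$-weights, universality reads $N(1-F_1)>F_1E$ for $F_1\geq F_2$. Letting $F_1\to 1^-$ gives $F_2e_{PP}+(1-F_2)e_{P,\mbb{I}-P}\leq 0$, so both coefficients vanish. Positivity of $c_{PP}$ only follows at the very end, and $c_{PP}\leq 1$ is automatic, so no renormalization is needed.

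Second, the paper bounds $1-\langle\N_{out},\Phi\rangle$ along the whole edge $F_2=F_c$ via $(1-F_1)/F_1<(1-F_c)/F_c$, whereas you linearize at the corner. The two are equivalent: after clearing denominators, the edge condition says the affine function $L(x)=N(x,F_c)-x(1-F_c)e_{\mbb{I}-P,\mbb{I}-P}$ is nonnegative on $(F_c,1)$.

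The step you flag as delicate is routine:
\begin{itemize}
\item Continuity gives $L(F_c)\geq 0$, and resource non-generation on $\N_{F_c}^{\otimes 2}$ gives $L(F_c)\leq 0$. Hence $L(F_c)=0$, and the slope of $L$ is nonnegative.
\item Substituting the first fact into the second yields $c_{\mbb{I}-P,P}+\frac{1-F_c}{F_c}c_{\mbb{I}-P,\mbb{I}-P}\leq 0$, exactly as you hoped.
\item The other edge kills $c_{P,\mbb{I}-P}$, and $L(F_c)=0$ then reads $F_c\,c_{PP}=(1-F_c)e_{\mbb{I}-P,\mbb{I}-P}$.
\end{itemize}
You do not need affinity or $R_s=R_g$, only freeness of $\N_{F_c}^{\otimes 2}$, which the paper also uses.

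One small correction: a generalized twirling does not preserve $\langle\cdot,\Phi\rangle$ in general. It only gives $\tr(PJ_\N)\geq\langle\N,\Phi\rangle$, but that is the direction your argument needs.
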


By leveraging the uniqueness theorem, we can show that the sufficient conditions stated in Thm. \ref{thm:general-rt-suff-condition} turn out to be almost necessary.
\begin{theorem}\label{thm:general-rt-nec-condition}
    In a convex resource theory with a generalized twirling, if 
    \begin{enumerate}
        \item the projective robustness is strictly sub-multiplicative for some isotropic channel, or
        \item $R_g(\Phi)<R_s(\Phi)<\infty$,
    \end{enumerate}
    then $\Theta_0$ in Eq.~\eqref{eq:general-supermap} is not RNG, implying that universal distillation is not possible.
\end{theorem}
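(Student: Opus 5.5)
The plan is to show, in each of the two cases, that the supermap $\Theta_0$ of Eq.~\eqref{eq:general-supermap} sends some free channel on $A\otimes A'$ to a channel that is not proportional to a free channel. The uniqueness result, Thm.~\ref{thm:uniqueness}, then rules out every other universal protocol. Indeed, suppose some RNG $\Theta_0'$ were universal. Since $\mc{T}$ is free, $\mc{T}\circ\Theta_0'\circ\mc{T}^{\otimes 2}$ would also be RNG. By Thm.~\ref{thm:uniqueness} it equals $C\Theta_0$, so $\Theta_0$ would be RNG, which is a contradiction. The work therefore reduces to exhibiting the bad free input.

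Because $\Theta_0$ only depends on the inputs through $\mc{T}\otimes\mc{T}$-type data, I will restrict attention to $\mc{T}\otimes\mc{T}$-invariant free channels on $AA'$. These have the form
\begin{align}
\M_2 = a_1\Phi\otimes\Phi + a_2\Phi\otimes\Lambda + a_3\Lambda\otimes\Phi + a_4\Lambda\otimes\Lambda .
\end{align}
Their image is
\begin{align}
\Theta_0[\M_2] = a_1\Phi + \tfrac{F_c}{1-F_c}a_4\Lambda ,
\end{align}
which is proportional to the isotropic channel $\N_G$ with $G/(1-G) = a_1(1-F_c)/(a_4F_c)$. The free isotropic channels are exactly those with $G\le F_c$; this holds by convexity together with the fact that the twirl of a free channel is free, possibly restricted to an interval $[G_{\min},F_c]$. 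So $\Theta_0$ is RNG on such inputs if and only if
\begin{align}
a_1/a_4 \le F_c^2/(1-F_c)^2
\end{align}
for every free invariant $\M_2$, together with the corresponding lower bound if one exists.

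\textbf{Case 1.} Suppose $\Omega(\N_F^{\otimes 2})<\Omega(\N_F)^2$ for some $F$. As in the proof of the preceding proposition, twirling lets me take the optimizer for $\Omega(\N_F)$ to be an isotropic free channel. This gives
\begin{align}
\Omega(\N_F) = \frac{F}{1-F}\,\frac{1-G_{\min}}{G_{\min}}\,\frac{G_{\max}}{1-G_{\max}}\cdot(\text{suitable form}),
\end{align}
and in the simplest situation $\Omega(\N_F)=\frac{F}{1-F}\frac{1-F_c}{F_c}$. The optimizer for $\Omega(\N_F^{\otimes 2})$ can likewise be taken $\mc{T}\otimes\mc{T}$-invariant. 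Computing $\lambda^*$ and $\mu^*$ as in that proof, the dominant ratios are $F^2/a_1$ and $a_4/(1-F)^2$, so
\begin{align}
\Omega(\N_F^{\otimes2}) \ge \frac{F^2}{(1-F)^2}\,\frac{a_4}{a_1}.
\end{align}
Strict submultiplicativity then forces $a_1/a_4 > F_c^2/(1-F_c)^2$ for the optimizer $\M_2$. By the criterion above, $\Theta_0[\M_2]$ is proportional to $\N_G$ with $G>F_c$, which is not free.

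\textbf{Case 2.} Suppose $R_g(\Phi)<R_s(\Phi)<\infty$. I would use the decomposition from the generalized robustness, $\Phi+s\M = (1+s)\mc{F}$ with $\mc{F}$ free, $\M$ arbitrary, and $1+s=R_g(\Phi)$. Twirling makes $\mc{F}$ isotropic with $G\le F_c$. I would then build the free product $\mc{F}\otimes\mc{F}'$, where $\mc{F}'$ is the free isotropic channel at $F_c$. The identity I want to exploit is $R_s(\Phi)=1/F_c$ type bookkeeping, i.e. $R_s$ is governed by the isotropic coefficient $F_c$ on $\Phi$ relative to free-cone decompositions. The gap between $R_g$ and $R_s$ means a free channel has a larger $\Phi$-versus-$\Lambda$ weight ratio in some product or cone combination than $F_c/(1-F_c)$ allows. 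From this I aim to produce a free invariant $\M_2$ with $a_1/a_4 > F_c^2/(1-F_c)^2$.

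I expect this translation to be the main obstacle: it requires relating $R_s$ and $R_g$ to the extremal ratios $a_1/a_4$. My first attempt is to write $R_s(\Phi)=1+s_s$ and $R_g(\Phi)=1+s_g$ through isotropic decompositions, $\Phi = (1+s)\N_{G}-s\N_{G'}$. Here $G'$ ranges over free isotropic channels for $R_s$, and over arbitrary isotropic channels for $R_g$, with $G'=0$ corresponding to $\Lambda$. This gives $R_g = (1-0)/(F_c-0)$ scaled, and $R_s=(1-G_{\min})/(F_c-G_{\min})$, so $R_g<R_s$ says exactly that $G_{\min}>0$, i.e. $\Lambda$ is not free. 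I would then show that $\Theta_0$ violates the lower-bound side. Take the free input $\N_{G_{\min}}\otimes\N_{G_{\min}}$. Its image has
\begin{align}
\frac{a_1}{a_4} = \frac{G_{\min}^2}{(1-G_{\min})^2},
\end{align}
so the image is proportional to $\N_G$ with
\begin{align}
\frac{G}{1-G} = \frac{G_{\min}^2}{(1-G_{\min})^2}\cdot\frac{1-F_c}{F_c} < \frac{G_{\min}}{1-G_{\min}},
\end{align}
whenever $G_{\min}<F_c$. Hence $G<G_{\min}$, and the output is not free. Finiteness of $R_s$ guarantees that $G_{\min}<F_c$.
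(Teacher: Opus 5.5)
Your proof is correct. Case~2 is essentially the paper's own argument. The paper also reduces to free isotropic fidelities $[F_{\min},F_c]$ with $0<F_{\min}<F_c$ (via its lemma on $R_s=R_g$), feeds $\N_{F_{\min}}^{\otimes 2}$ into $\Theta_0$, and checks that the output ratio drops strictly below $F_{\min}/(1-F_{\min})$. Your explicit computation $R_s(\Phi)=(1-G_{\min})/(F_c-G_{\min})$ versus $R_g(\Phi)=1/F_c$ is just a more transparent version of that lemma. The exploratory $\mc{F}\otimes\mc{F}'$ paragraph plays no role and can be deleted.

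Case~1 is where you genuinely differ: you work on the primal side, and the paper works on the dual side. The paper takes the dual pair $\mc{A}^*\propto\mc{P}\otimes\mc{P}$, $\mc{B}^*\propto(d^2\Delta-\mc{P})^{\otimes 2}$ from the proof of Thm.~\ref{thm:general-rt-suff-condition}. This pair would certify $\Omega(\N_F^{\otimes 2})\geq\Omega(\N_F)^2$ by weak duality, so strict sub-multiplicativity makes it infeasible. The infeasibility witness is a free $\M$ with $\tr[(P\otimes P)J_\M]>\frac{F_c^2}{(1-F_c)^2}\tr[(\mbb{I}-P)^{\otimes 2}J_\M]$. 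You instead take a primal feasible $\M_2$ with $\gamma<\Omega(\N_F)^2$ and bound $\gamma\geq\frac{F^2}{a_1}\cdot\frac{a_4}{(1-F)^2}$, which produces exactly the same witness inequality.

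Both routes rest on the same estimate, namely pairing with $P\otimes P$ and $(\mbb{I}-P)^{\otimes 2}$. Yours is self-contained, whereas the paper reuses dual optimizers that were derived under the opposite (multiplicativity) hypothesis. You also do not need to twirl $\M_2$, or assume $\mc{T}\otimes\mc{T}$ preserves bipartite free channels. Pairing $J_{\N_F^{\otimes 2}}\leq\lambda J_{\M_2}\leq\gamma J_{\N_F^{\otimes 2}}$ with those two positive operators gives the bound for any feasible $\M_2$.

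Two clean-ups are worth making:
\begin{enumerate}
\item Replace ``the optimizer'' by ``any feasible $\M_2$ with $\gamma<\Omega(\N_F)^2$'', since the infimum need not be attained.
\item Drop the garbled $G_{\min},G_{\max}$ formula. For every $F\in(F_c,1)$ one has $\Omega(\N_F)=\frac{F(1-F_c)}{(1-F)F_c}$ regardless of $G_{\min}$, because the optimum over free $\N_G$ sits at $G=F_c$.
\end{enumerate}
Your explicit reduction of ``universal distillation is impossible'' to Thm.~\ref{thm:uniqueness} is a step the paper leaves implicit.
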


The proofs of these two theorems can be found in the Supplementary Material.

\section{Discussion}
We studied the problem of universal probabilistic distillation of quantum resources. Despite recent progress that strongly constrained the possibility of universal entanglement distillation and universal state purification, we show that universal improvement of channel fidelities is possible within the context of resource theory of communication. More specifically, we demonstrated an entanglement-assisted universal probabilistic coding strategy for non-i.i.d. channels. Conditioned on success, our protocol yields an output channel with higher worst-case fidelity than all given channels, which can be arbitrarily given as long as they satisfy a weak condition that their fidelities are greater than $1/d^2$. The universality of our protocol makes it very desirable for practical implementations: We are free to blindly apply the protocol to any set of given channels which we do not have knowledge of, so long as their fidelities are not prohibitively low. Our protocol is shown to be optimal for depolarizing channels and essentially the unique protocol that is universal. We then adapt our protocol for fully classical channels. Similar to the protocol for quantum channels, our shared-randomness-assisted protocol yields an output classical channel whose worst-case correct transmission probability is higher than all given classical channels, as long as their worst-case correct transmission probability is greater than $1/d$. Finally, we provide general theorems illuminating the conditions under which a quantum resource can be universally distilled by a resource-non-generating operation.

\bibliographystyle{unsrt}
\bibliography{refs}

\section*{Acknowledgement}
The authors thank Bartosz Regula for the valuable initial discussions at QIP2025 that led us to the resource theory of communication. The authors also thank Allen Zang, Ian George, and Paul G. Kwiat for useful discussions along the way. ChatGPT-5.5 Instant was used in the proof of Proposition \ref{prop:optimal-prob} to solve a feasibility problem. The authors assume full responsibility for the final manuscript. This work is supported by the U.S. Department of Energy Office of Science National Quantum Information Science Research Centers. Results in this paper are a part of X.C.'s Ph.D. dissertation.

\clearpage
\onecolumngrid

\begin{center}
    \mbox{\Large \textbf{Supplementary Material}}
\end{center}

\section{The Superchannel Formalism and Dynamical Resource Theories}
In this section, we review concepts of supermaps that will be necessary for our discussion. For more details, we refer readers to Refs. \cite{gour2019comparison,burniston2020necessary}.

Let $\mc{B}(\mc{H})$ denote the space of all bounded operators acting on a finite dimensional Hilbert space. The space of density operators $\mc{D}(\mc{H})$ is a subspace of $\mc{B}(\mc{H})$ that is positive semi-definite and trace one. The set of linear maps from $\mc{B}(\mc{H}^{A_0})$ to $\mc{B}(\mc{H}^{A_1})$ is denoted as $\mc{L}^{A_0\to A_1}$ or simply as $\mc{L}^A$, and such linear maps are often also referred to as superoperators. The set of quantum channels from $A_0$ to $A_1$ consists of superoperators that are completely positive (CP) and trace-preserving (TP) and forms a subspace of $\mc{L}^A$. For every superoperator $\N^A$, we can define the corresponding (unnormalized) Choi operator 
\begin{align}
    \widetilde{J}_\N^A = \mathrm{id}^{A_0}\otimes\N^{A_0'\to A_1}\left(\sum_{ij}\op{ii}{jj}\right) = \sum_{ij}\op{i}{j}\otimes\mc{N}(\op{i}{j}).
\end{align}
Turns out, this defines an isomorphism between $\mc{L}^A$ and $\mc{B}(\mc{H}^{A})=\mc{B}(\mc{H}^{A_0A_1})$, where the inverse map is given by 
\begin{align}
    \mc{N}^A(\rho) = \tr_{A_0}\left[\widetilde{J}_{\mc{N}}^{A}\left((\rho^T)^{A_0}\otimes\mathbb{I}^{A_1}\right)\right].
\end{align}
This is referred to as the Choi-Jami{\l}kowski isomorphism. We emphasize that, in what follows, ``Choi operator" refers to the unnormalized version, denoted by $\widetilde{J}_\N$, while $J_\N$ denotes specifically the normalized Choi state.

In this language, supermaps are linear maps from superoperators $\mc{L}^A$ to superoperators $\mc{L}^B$. We can analogously define a Choi operator of a supermap as follows. Recall that the canonical orthonormal basis for $\mc{L}^A$ are maps of the form
\begin{align}
    \mc{E}_{ij,kl}^{A}(\rho) = \bra{i}\rho\ket{j}\ket{k}\bra{l}.
\end{align}
The (unnormalized) Choi operator of $\Theta$ is then defined as
\begin{align}
    \wt{J}_\Theta^{AB} = \sum_{ij,kl} \wt{J}_{\mc{E}_{ij,kl}}^A\otimes \wt{J}_{\Theta[\mc{E}_{ij,kl}]}^B.
\end{align}
Similar to the identity for the Choi operators of linear maps, we have the following identity for the Choi operators of supermaps
\begin{align}
    \wt{J}_{\Theta[\N]} = \tr_{A}\left[\wt{J}_\Theta^{AB}\left((\wt{J}_\N^T)^A\otimes\mathbb{I}^B\right)\right].
\end{align} 
From this expression, we note that the Choi operator $\wt{J}_\Theta$ can be interpreted as the Choi operator of the superoperator $\Delta_\Theta^{A\to B}$ defined by
\begin{align}
    \wt{\Delta}_\Theta^{A\to B}(\wt{J}_\N^A) = \wt{J}_{\Theta[\N]}^B,
\end{align}
so that $\wt{\Delta}_\Theta$ can be thought of as the superoperator corresponding to $\Theta$ that acts on the Choi operator $\wt{J}_\N$ rather than the superoperator $\N$. Or, we can consider the normalized version
\begin{align}
    \Delta_\Theta^{A\to B}(J_\N^A) = J_{\Theta[\N]}^B,
\end{align}

We now define a superchannel:
\begin{definition}[\cite{gour2019comparison}]
    \begin{enumerate}
        \item[{}]
        \item A supermap $\Theta$ is CP-preserving (CPP) if $\Theta[\N]$ is CP for every $\N$ that is CP.
        \item A supermap $\Theta$ is completely CPP if $\1^R\otimes\Theta^{A\to B}$ is CPP for every dynamical system $R=R_0\to R_1$.
        \item A supermap $\Theta$ is TP-preserving (TPP) if $\Theta[\N]$ is TP for every $\N$ that is TP.
        \item A superchannel is a completely CPP and TPP supermap.
    \end{enumerate}
\end{definition}
The following theorem can be established.
\begin{theorem}[\cite{gour2019comparison}, Thm. 1]
    A supermap $\Theta$ is a completely CPP if and only if $\Delta_\Theta$ is CP.
\end{theorem}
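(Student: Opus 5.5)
The plan is to work entirely at the level of Choi operators. The key tool is the correspondence $\wt{\Delta}_\Theta(\wt{J}_\N)=\wt{J}_{\Theta[\N]}$. It converts the supermap $\Theta$ into an ordinary linear map $\wt{\Delta}_\Theta:\mc{B}(\mc{H}^{A_0A_1})\to\mc{B}(\mc{H}^{B_0B_1})$. Normalization plays no role, since CP is invariant under positive rescaling, so I will use the unnormalized $\wt{\Delta}_\Theta$ throughout.

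Two elementary facts will be used repeatedly:
\begin{enumerate}
    \item A superoperator $\N^A$ is CP if and only if $\wt{J}_\N^A\geq 0$ (Choi's theorem).
    \item Every positive semidefinite operator on $A_0A_1$ is the Choi operator of some CP map $A_0\to A_1$, because the Choi--Jamio{\l}kowski correspondence is a bijection.
\end{enumerate}
With these, ``$\Theta[\N]$ is CP for every CP $\N$'' is exactly the statement that $\wt{\Delta}_\Theta$ maps positive operators to positive operators, i.e., that $\wt{\Delta}_\Theta$ is a positive map.

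The next step is to identify the extended supermap at the Choi level. For a dynamical system $R=R_0\to R_1$, I claim $\wt{\Delta}_{\1^R\otimes\Theta}=\mathrm{id}^{R_0R_1}\otimes\wt{\Delta}_\Theta$, up to the canonical reordering $R_0A_0R_1A_1\cong R_0R_1A_0A_1$ of tensor factors. To verify this, I will check it on product superoperators $\mc{E}^R\otimes\N^A$, for which $\wt{J}_{\mc{E}\otimes\N}=\wt{J}_\mc{E}\otimes\wt{J}_\N$ after reordering. The general case then follows by linearity, since the products $\mc{E}^R_{ij,kl}\otimes\mc{E}^A_{i'j',k'l'}$ span $\mc{L}^{RA}$. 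Combined with the two facts above, the extended statement becomes: $\1^R\otimes\Theta$ is CPP if and only if $\mathrm{id}^{R_0R_1}\otimes\wt{\Delta}_\Theta$ is a positive map.

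Both directions now follow.
\begin{itemize}
    \item[($\Leftarrow$)] If $\wt{\Delta}_\Theta$ is CP, then $\mathrm{id}^{R_0R_1}\otimes\wt{\Delta}_\Theta$ is positive for every $R$. Hence $\1^R\otimes\Theta$ is CPP for every $R$, so $\Theta$ is completely CPP.
    \item[($\Rightarrow$)] Choose $R_0\cong A_0$ and $R_1\cong A_1$, and let $\ket{\Gamma}$ be the unnormalized maximally entangled vector between $R_0R_1$ and $A_0A_1$. The operator $\op{\Gamma}{\Gamma}$ is positive, so by fact 2 (after reordering) it is the Choi operator of a CP map $\M^{R_0A_0\to R_1A_1}$. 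Complete CPP gives that $(\1^R\otimes\Theta)[\M]$ is CP. Its Choi operator is $(\mathrm{id}\otimes\wt{\Delta}_\Theta)(\op{\Gamma}{\Gamma})$, which is therefore positive. This operator is exactly the Choi operator of $\wt{\Delta}_\Theta$, so Choi's theorem applied to $\wt{\Delta}_\Theta$ shows that $\wt{\Delta}_\Theta$ is CP.
\end{itemize}

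I expect the main obstacle to be bookkeeping rather than any conceptual step. The identity $\wt{J}_{\Theta[\N]}=\tr_A[\wt{J}_\Theta((\wt{J}_\N^T)^A\otimes\mathbb{I}^B)]$ contains a transpose, and the factors $R_0,A_0,R_1,A_1$ must be permuted. I will fix one convention for the isomorphism $\mc{L}^{RA}\cong\mc{B}(\mc{H}^{R_0A_0R_1A_1})$ and check that the product decomposition is compatible with it. I will also confirm that the operator playing the role of the Choi operator of $\wt{\Delta}_\Theta$ differs from $\wt{J}_\Theta$ only by a unitary swap of factors and no partial transpose, since a partial transpose would destroy positivity. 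This is where I would be most careful.
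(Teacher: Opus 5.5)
Your proposal is correct and matches the standard argument behind this result; the paper does not reprove it and simply cites Gour's Theorem 1. The argument is the same: on Choi operators, $\1^R\otimes\Theta$ acts as $\mathrm{id}^{R_0R_1}\otimes\wt{\Delta}_\Theta$, so complete CP-preservation is positivity of all these extensions, and taking $R\cong A$ with the maximally entangled operator recovers the Choi operator of $\wt{\Delta}_\Theta$. Your transpose worry also resolves cleanly: $\wt{J}_{\mc{E}_{ij,kl}}=\op{i}{j}\otimes\op{k}{l}$, so $\wt{J}_\Theta$ is literally the Choi operator of $\wt{\Delta}_\Theta$ with no partial transpose, as the paper itself remarks just before the theorem.
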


\begin{definition}[\cite{burniston2020necessary}]
    \begin{enumerate}
        \item[{}]
        \item A supermap $\Theta$ is CPTNI-preserving if $\Theta$ is CPP and 
        \begin{align}
            \tr[\Theta[\N](\rho)] \leq 1
        \end{align}
        for all CPTP $\N$ and all density operators $\rho$.
        \item A supermap $\Theta$ is completely CPTNI-preserving if $\1^R\otimes\Theta^{A\to B}$ is CPTNI-preserving for every dynamical system $R=R_0\to R_1$.
        \item A superinstrument $\{\Theta_x\}_x$ is a set of completely CPTNI-preserving supermaps such that $\sum_x \Theta_x$ is a superchannel.
    \end{enumerate}
\end{definition}
We remark here that it is crucial to require each $\Theta_x$ to be \textit{completely} CPTNI-preserving rather than simply CPTNI-preserving. This is because only completely CPTNI-preserving supermaps can sum to a superchannel (see Ref. \cite{burniston2020necessary}, Thm 4.1).

Finally, we reiterate the following definition of an inner product of two CP maps in terms of the Choi states:
\begin{align}
    \langle\mc{N},\mc{M}\rangle = \tr(J_\mc{N}J_\mc{M}).
\end{align}
The adjoint supermap of $\Theta$ relative to this inner product, denoted by $\Theta^\dagger$, is defined by:
\begin{align}
    \langle\Theta^\dagger(\N),\M\rangle = \langle\N,\Theta(\M)\rangle.
\end{align}

\section{Projective Robustness of a Dynamical Resource}
We will make heavy use of the projective robustness measure introduced in Ref. \cite{regula2022probabilistic,regula2022tight} for static resource theories, i.e., resource theories of quantum states. We will provide a brief review here and generalize the concept to dynamical resource theories, i.e., resource theories of quantum channels.

In a static resource theory with free states $\mbb{F}_S$, the projective robustness is defined as 
\begin{align}
    \Omega(\rho) = \inf \{\gamma: \rho\leq\lambda\sigma\leq\gamma\rho:\sigma\in\mathbb{F}_S\}.
\end{align}
Given the set of free states $\mathbb{F}_S$, the largest possible set of free operations is the set of resource non-generating (RNG) operations, which is the set of operations that satisfy the minimal requirement of a free operation. In other words, $\mathrm{RNG} =\{\mc{O}:\mc{O}(\rho)\in\mathbb{F}_S\ \forall\rho\in\mathbb{F}_S\}$. The projective robustness is useful because it provides a necessary condition for the probabilistic transformation of quantum resources using RNG. Specifically, in any resource theory with free states $\mathbb{F}_S$ and free operations $\mathbb{F}_C\subseteq\mathrm{RNG}$, if the transformation $\rho\mapsto\rho'$ can be accomplished probabilistically, then 
\begin{align}
    \Omega(\rho)\geq\Omega(\rho').
\end{align}
Incidentally, this becomes a necessary and sufficient condition when the free states form an affine subspace and $\mathbb{F}_C=\mathrm{RNG}$, i.e.,
\begin{align}
    \mathbb{F}_S = \mathrm{aff}(\mathbb{F}_S)\cap\mc{D}
\end{align}
where $\mathrm{aff}(\mathbb{F}_S)$ is the affine hull of $\mathbb{F}_S$, and $\mc{D}$ is the set of density operators. In other words, the transformation $\rho\mapsto\rho'$ can be achieved probabilistically using an RNG operation if and only if $\Omega(\rho)\geq\Omega(\rho')$.

For our purposes, we would like to extend the necessary condition to dynamical resource theories. We define the projective robustness of a dynamical resource as
\begin{align}
    \Omega(\mc{N}) = \inf\{\gamma:\N\leq_{c.p.}\lambda\mc{M}\leq_{c.p.}\gamma\N,\ \mc{M}\in\mathbb{F}_C\}
\end{align}
where $\leq_{c.p.}$ denotes completely positive ordering, in the sense that $\mc{A}\leq_{c.p.}\mc{B}$ if $\mc{B}-\mc{A}$ is a completely positive map. Equivalently, this can be phrased in terms of the Choi states:
\begin{align}
    \Omega(\mc{N}) = \inf\{\gamma:J_\N\leq\lambda J_\mc{M}\leq \gamma J_\N,\ \mc{M}\in\mathbb{F}_C\}.
\end{align}
Most of the results in Ref. \cite{regula2022tight} can be generalized to dynamical resource theories. For our case, we will only require the following results. The proofs are completely analogous to the ones in Ref. \cite{regula2022tight} and are reproduced here to be self-contained.
\begin{theorem}[cf. Ref. \cite{regula2022tight}, Thm. 4]\label{thm:proj-rob-nec-condition}
    If there exists a probabilistic free transformation $\mc{N}\mapsto\mc{N}'$, then $\Omega(\mc{N})\geq\Omega(\mc{N}')$.
\end{theorem}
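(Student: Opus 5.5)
The plan is to adapt the monotonicity argument of Ref.~\cite{regula2022tight} to Choi operators of channels. A probabilistic free transformation $\N\mapsto\N'$ means there is a free (resource-non-generating) completely CPTNI-preserving supermap $\Theta_0$, i.e.\ a branch of a free superinstrument, with $\Theta_0[\N]=p\,\N'$ for some $p>0$. The RNG property gives $\Theta_0[\M]=q_\M\,\M'$ for every $\M\in\mbb{F}_C$, with $q_\M\ge 0$ and $\M'\in\mbb{F}_C$. The two facts I will use are that $\Theta_0$ is linear and that it preserves the completely positive ordering. The latter holds because $\Delta_{\Theta_0}$ is CP (by the characterization of completely CPP supermaps quoted from Ref.~\cite{gour2019comparison}, Thm.~1). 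So $\mc{A}\le_{c.p.}\mc{B}$, i.e.\ $J_\mc{B}-J_\mc{A}\ge0$, implies $J_{\Theta_0[\mc{B}]}-J_{\Theta_0[\mc{A}]}=\Delta_{\Theta_0}(J_\mc{B}-J_\mc{A})\ge 0$.

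With these in hand the argument is direct. Take any feasible triple $(\gamma,\lambda,\M)$ for $\Omega(\N)$, so that $\N\le_{c.p.}\lambda\M\le_{c.p.}\gamma\N$ with $\M\in\mbb{F}_C$. Applying $\Theta_0$ and using linearity and order preservation gives
\begin{align}
p\,\N' \le_{c.p.} \lambda q_\M\,\M' \le_{c.p.} \gamma p\,\N'.
\end{align}
Dividing by $p>0$ yields $\N'\le_{c.p.}(\lambda q_\M/p)\M'\le_{c.p.}\gamma\N'$. Since $\M'$ is free, $(\gamma,\lambda q_\M/p,\M')$ is feasible for $\Omega(\N')$, so $\Omega(\N')\le\gamma$. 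Taking the infimum over feasible $\gamma$ gives $\Omega(\N')\le\Omega(\N)$. If $\Omega(\N)=\infty$ there is nothing to prove.

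The main point needing care is the degenerate case $q_\M=0$, where $\Theta_0[\M]$ vanishes and is not proportional to a normalized free channel. In that case the displayed chain gives $p\N'\le_{c.p.}0$. Hence $J_{\N'}\le 0$ while $J_{\N'}$ is a positive operator of trace one, a contradiction. So $q_\M>0$ automatically, and $\M'=\Theta_0[\M]/q_\M$ is a genuine free channel. Apart from this check, the only subtlety is making sure the definition of ``probabilistic free transformation'' supplies exactly the RNG and complete-CP-preservation properties used above. I would state this explicitly at the start, taking the free superinstrument elements to be RNG in the sense above.
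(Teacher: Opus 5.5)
Your proposal is correct and follows essentially the same route as the paper. Both apply $\Theta_0$ to the sandwich $\N\le_{c.p.}\lambda\M\le_{c.p.}\gamma\N$, use the RNG property to identify the normalized $\Theta_0[\M]$ as a free channel, and rescale to get a feasible point for $\Omega(\N')$ with the same $\gamma$; your justification of order preservation through complete positivity of $\Delta_{\Theta_0}$, and your check that $\Theta_0[\M]\neq 0$ so the normalization is defined, fill in details the paper leaves implicit.
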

\begin{proof}
    Let $\mc{M}$ be a free channel such that $\N\leq_{c.p.}\lambda\mc{M}\leq_{c.p.}\gamma\N$, and let $\Theta=\{\Theta_0,\Theta_1\}$ be a free probabilistic transformation such that $\Theta_0[\N]=p\N'$, or equivalently $\mc{N}'=\displaystyle\frac{\Theta_0[\N]}{\tr(J_{\Theta_0[\N]})}$. Since $\Theta_0[\mc{N}]\leq_{c.p.}\lambda\Theta_0[\mc{M}]$, we have
    \begin{align}
        \frac{\Theta_0[\N]}{\tr J_{\Theta_0[\N]}} \leq_{c.p.} \lambda\frac{\tr J_{\Theta_0[\mc{M}]}}{\tr J_{\Theta_0[\mc{N}]}}\frac{\Theta_0[\mc{M}]}{\tr J_{\Theta_0[\mc{M}]}}.
    \end{align}
    Similarly,
    \begin{align}
        \lambda\frac{\Theta_0[\mc{M}]}{\tr J_{\Theta_0[\mc{M}]}} \leq_{c.p.} \gamma\frac{\tr J_{\Theta_0[\mc{N}]}}{\tr J_{\Theta_0[\mc{M}]}}\frac{\Theta_0[\mc{N}]}{\tr J_{\Theta_0[\mc{N}]}}.
    \end{align}
    Since $\Theta$ is a free superchannel and $\mc{M}$ is a free channel, $\displaystyle\frac{\Theta_0[\mc{M}]}{\tr J_{\Theta_0}[\mc{M}]}\in\mathbb{F}_C$. So, from the fact that 
    \begin{align}
        \mc{N}' \leq_{c.p.} \lambda\frac{\tr J_{\Theta_0[\mc{M}]}}{\tr J_{\Theta_0[\mc{N}]}}\frac{\Theta_0[\mc{M}]}{\tr J_{\Theta_0[\mc{M}]}} \leq_{c.p.} \gamma \mc{N}'
    \end{align}
    we can conclude $\Omega(\N')\leq\gamma$. This is true for any $\gamma$ and free channel $\mc{M}$ such that $\mc{N}\leq_{c.p.}\lambda\mc{M}\leq_{c.p.}\gamma\mc{N}$, which means $\Omega(\mc{N}')\leq\Omega(\mc{N})$.
\end{proof}
\begin{remark}
    Similar to Ref. \cite{regula2022tight}, we can in fact prove this for when the transformation can only be done in the limit. That is, there exists a sequence of free superchannels $\Theta^{(n)}=\{\Theta_0^{(n)},\Theta_1^{(n)}\}$ such that $\lim_{n\to\infty} \displaystyle\frac{\Theta_0^{(n)}[\mc{N}]}{J_{\Theta^{(n)}_0[\mc{N}]}} = \N'$, but we will refrain from this discussion as it will be unnecessary for our case. 
\end{remark}
This result imply the following corollary on the achievable fidelity of probabilistic resource distillation. 
\begin{proposition}\label{prop:fidelity-bound}
    If there exists a free probabilistic transformation $\mc{N}\to\N'$ such that $\langle\mc{N}',\Phi\rangle = F$ for some $\N'$, then 
    \begin{align}
        F \leq \frac{\Omega(\N)}{\Omega(\N)+(1-F_{\mbb{F}_C}(\Phi))/F_{\mbb{F}_C}(\Phi)}.
    \end{align}
    where $F_{\mbb{F}_C}(\Phi) = \max_{\mc{M}\in\mbb{F}_C}\langle \mc{M},\Phi\rangle$ is the largest overlap between free channels and $\Phi$.
\end{proposition}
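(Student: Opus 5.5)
The plan is to apply Theorem~\ref{thm:proj-rob-nec-condition}, so that $\Omega(\N')\le\Omega(\N)$, and then to prove a lower bound on $\Omega(\N')$ in terms of $F=\langle\N',\Phi\rangle$ alone. Throughout, write $F_c\coloneqq F_{\mbb{F}_C}(\Phi)$. The target inequality is
\begin{align}
    \Omega(\N') \geq \frac{F}{1-F}\cdot\frac{1-F_c}{F_c}.
\end{align}
It is equivalent to $F\le \Omega(\N')/\bigl(\Omega(\N')+(1-F_c)/F_c\bigr)$, because the map $\Omega\mapsto \Omega/(\Omega+c)$ is increasing for $c>0$. Monotonicity of this map also lets us replace $\Omega(\N')$ by the larger quantity $\Omega(\N)$, which gives the claim.

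To prove the lower bound, take any feasible pair $(\lambda,\M)$ for $\Omega(\N')$, so that $J_{\N'}\le\lambda J_\M\le\gamma J_{\N'}$ with $\M\in\mbb{F}_C$. The key observation is that $\langle\cdot,\Phi\rangle=\tr(J_\Phi\,\cdot)$ is a positive functional on Choi operators, since $J_\Phi\ge0$. Applying it to the left inequality gives $F\le\lambda\langle\M,\Phi\rangle\le\lambda F_c$.

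For the complementary bound I would use the functional $\tr[(\mbb{I}-J_\Phi)\,\cdot\,]$. Normalized Choi states have unit trace, so this gives $1-F\le\lambda(1-\langle\M,\Phi\rangle)$ from the left inequality, and $\lambda(1-\langle\M,\Phi\rangle)\le\gamma(1-F)$ from the right inequality. Pairing the right inequality with $J_\Phi$ instead yields $\lambda\langle\M,\Phi\rangle\le\gamma F$.

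Combining these, $\gamma\ge \lambda(1-\langle\M,\Phi\rangle)/(1-F)$, while $\lambda\ge F/\langle\M,\Phi\rangle$. Hence
\begin{align}
    \gamma\geq\frac{F}{1-F}\cdot\frac{1-\langle\M,\Phi\rangle}{\langle\M,\Phi\rangle}\geq\frac{F}{1-F}\cdot\frac{1-F_c}{F_c},
\end{align}
where the second step holds because $t\mapsto(1-t)/t$ is decreasing and $\langle\M,\Phi\rangle\le F_c$. Taking the infimum over feasible $(\lambda,\M)$ gives the desired bound on $\Omega(\N')$.

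The main obstacle is the positivity of $\mbb{I}-J_\Phi$. This is immediate if $J_\Phi$ is a projector, i.e.\ $\Phi$ has a pure Choi state, as for the identity channel, whose Choi state is $\phi_+$. In general I would instead note that $J_\Phi\le\mbb{I}$, since $J_\Phi$ is a state, so $\mbb{I}-J_\Phi\ge0$ holds for any Choi state. The remaining point is that $\tr[(\mbb{I}-J_\Phi)J_{\N'}]=1-F$, which uses only $\tr J_{\N'}=1$. The argument also needs the degenerate cases $F=1$ or $\langle\M,\Phi\rangle=0$, which I would handle by noting that the bound is then trivial or the pair is infeasible.
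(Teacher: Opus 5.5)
Your proposal is correct and follows essentially the same route as the paper: invoke Theorem~\ref{thm:proj-rob-nec-condition} to get $\Omega(\N)\geq\Omega(\N')$, then pair the sandwich $J_{\N'}\leq\lambda J_\M\leq\gamma J_{\N'}$ with $J_\Phi$ and with $\mathbb{I}-J_\Phi$ to obtain $\Omega(\N')\geq\frac{F}{1-F}\cdot\frac{1-F_{\mathbb{F}_C}(\Phi)}{F_{\mathbb{F}_C}(\Phi)}$. The paper's functional $\langle\cdot,d^2\Delta-\Phi\rangle$ is exactly your $\tr[(\mathbb{I}-J_\Phi)\,\cdot\,]$, since $J_\Delta=\mathbb{I}/d^2$; your version is slightly more careful in that it takes the infimum over feasible pairs rather than assuming an optimal $\M$ attains $\Omega(\N')$, and it addresses the degenerate cases.
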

\begin{proof}
    Let $\mc{M}$ be a free channel such that $\Omega(\N')=\gamma$ and $\mc{N}'\leq\lambda\mc{M}\leq\gamma\mc{N}'$. Then,
    \begin{align}
        &F = \langle\mc{N}',\Phi\rangle \leq \lambda\langle\mc{M},\Phi\rangle \leq \lambda F_{\mbb{F}_C}(\Phi), \\
        &1-F_{\mbb{F}_C}(\Phi) \leq \langle\mc{M},d^2\Delta-\Phi\rangle \leq \frac{\gamma}{\lambda}\langle\N',d^2\Delta-\Phi\rangle = \frac{\gamma}{\lambda}(1-F),
    \end{align}
    where again $\Delta$ is the completely depolarizing channel. Multiplying the two inequalities, we conclude 
    \begin{align}
        \Omega(\N) \geq \Omega(\N') = \gamma \geq \frac{F(1-F_{\mbb{F}_C}(\Phi))}{(1-F)F_{\mbb{F}_C}(\Phi)}.
    \end{align}
    This proves the claim upon rearranging the terms.
\end{proof}

\begin{proposition}\label{prop:submultiplicativity}
    The projective robustness is sub-multiplicative. That is, $\Omega(\mc{N}_1\otimes\mc{N}_2)\leq\Omega(\mc{N}_1)\Omega(\mc{N}_2)$.
\end{proposition}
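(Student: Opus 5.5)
The plan is to take near-optimal feasible points for $\Omega(\N_1)$ and $\Omega(\N_2)$ separately and tensor them together into a feasible point for $\Omega(\N_1\otimes\N_2)$, working throughout with the Choi-state formulation $\Omega(\N)=\inf\{\gamma: J_\N\leq\lambda J_\M\leq\gamma J_\N,\ \M\in\mbb{F}_C\}$.

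Fix $\epsilon>0$ and, for $i=1,2$, choose free channels $\M_i$ and scalars $\lambda_i,\gamma_i$ with
\begin{align}
J_{\N_i}\leq\lambda_iJ_{\M_i}\leq\gamma_iJ_{\N_i},\qquad \gamma_i\leq\Omega(\N_i)+\epsilon .
\end{align}
If either infimum is infinite, the claim is trivial. Next, use $J_{\N_1\otimes\N_2}=J_{\N_1}\otimes J_{\N_2}$, up to reordering the tensor factors, and likewise for $\M_1\otimes\M_2$.

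The key step is the elementary operator fact that $0\leq A\leq B$ and $0\leq C\leq D$ imply $A\otimes C\leq B\otimes D$. This follows from the decomposition
\begin{align}
B\otimes D-A\otimes C=(B-A)\otimes D+A\otimes(D-C),
\end{align}
in which both terms are tensor products of positive semidefinite operators. All the Choi operators involved are positive, so the hypotheses hold. Applying this fact twice gives
\begin{align}
J_{\N_1}\otimes J_{\N_2}\leq\lambda_1\lambda_2\,J_{\M_1}\otimes J_{\M_2}\leq\gamma_1\gamma_2\,J_{\N_1}\otimes J_{\N_2}.
\end{align}
This is exactly the statement that the pair $(\M_1\otimes\M_2,\ \lambda=\lambda_1\lambda_2)$ is feasible for $\Omega(\N_1\otimes\N_2)$ with value $\gamma_1\gamma_2$.

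The one genuine assumption, and the main point requiring care, is that $\M_1\otimes\M_2$ is free on the composite system $A\otimes A'$. In other words, we need $\mbb{F}_C(A)\otimes\mbb{F}_C(A')\subseteq\mbb{F}_C(A\otimes A')$, the standard tensor-product axiom for resource theories. I would state this explicitly as an assumption on the theory. It holds in the resource theory of communication, since a tensor product of replacement channels is again a replacement channel, and it is consistent with the structural hypotheses used in the preceding proposition. Given this, we obtain $\Omega(\N_1\otimes\N_2)\leq(\Omega(\N_1)+\epsilon)(\Omega(\N_2)+\epsilon)$, and letting $\epsilon\to0$ gives the claim.
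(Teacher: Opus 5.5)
Your proposal is correct and follows the paper's route: tensoring feasible points $(\M_1,\lambda_1,\gamma_1)$ and $(\M_2,\lambda_2,\gamma_2)$ gives a feasible point $(\M_1\otimes\M_2,\lambda_1\lambda_2)$ for $\Omega(\N_1\otimes\N_2)$ with value $\gamma_1\gamma_2$. Your extra details fill in steps the paper leaves implicit: the decomposition proving $A\otimes C\leq B\otimes D$, the $\epsilon$-approximation of the infima, and the explicit assumption that $\mbb{F}_C(A)\otimes\mbb{F}_C(A')\subseteq\mbb{F}_C(A\otimes A')$.
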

\begin{proof}
    For any feasible $\mc{M}_1$ and $\mc{M}_2$ such that $\N_1\leq_{c.p.}\lambda_1\mc{M}_1\leq_{c.p.}\gamma_1\mc{N}_1$ and $\N_2\leq_{c.p.}\lambda_2\mc{M}_2\leq_{c.p.}\gamma_2\mc{N}_2$, we have $\mc{N}_1\otimes\mc{N}_2\leq_{c.p.}\lambda_1\lambda_2\mc{M}_1\otimes\mc{M}_2\leq_{c.p.}\gamma_1\gamma_2\mc{N}_1\otimes\mc{N}_2$. In other words, $\mc{M}_1\otimes\mc{M}_2$ is a feasible solution and attains $\gamma_1\gamma_2$. Therefore, $\Omega(\mc{N}_1\otimes\mc{N}_2)\leq\Omega(\mc{N}_1)\Omega(\mc{N}_2)$.
\end{proof}

\begin{theorem}[cf. Ref. \cite{regula2022tight}, Prop. 18]
    In any convex resource theory, the probability of transforming $\N$ into $\N'$ with RNG superinstruments is upper bounded by $\Xi(\N,V_\mathbb{F}(\N')^{-1})$, where
    \begin{align}
        \Xi(\N,V_\mathbb{F}(\N')^{-1}) = \max \quad& \tr ~WJ_\N \\
        \mathrm{s.t.} \quad& 0 \leq W \leq Z \leq d\sigma\otimes\mathbb{I},\ \sigma\in\mc{D} \\
        &\tr~WJ_\N = \tr~ZJ_\N \\
        & \tr~WJ_\mc{M} \leq V_{\mathbb{F}_C(\N')} \cdot \tr~ZJ_\mc{M} \quad\forall \mc{M}\in\mathbb{F}_C
    \end{align}
    and $V_{\mathbb{F}_C}(\N) = \max_{\mc{M}\in\mathbb{F}_C} \tr(\Pi_{J_\mc{N}}J_\mc{M})$ with $\Pi_{J_\N}$ being the projection onto the support of $J_\N$.
\end{theorem}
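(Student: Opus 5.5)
We adapt the proof of Ref.~\cite{regula2022tight}, Prop.~18, to the dynamical setting, working throughout with Choi operators. Let $\{\Theta_0,\Theta_1\}$ be an RNG superinstrument with $\Theta_0[\N]=p\N'$. Since $\Theta_0$ is completely CPTNI-preserving, its action on Choi states is a CP map $\Delta_{\Theta_0}$. Since $\Theta_0+\Theta_1$ is a superchannel, we have $\Delta_{\Theta_0}\le_{c.p.}\Delta_{\Theta_0+\Theta_1}$, and the latter is TP-preserving. The plan is to dualize: let $Q$ be a suitable test operator on the output side, and take $W$ and $Z$ to be $\Delta_{\Theta_0}^\dagger(Q)$ and $\Delta_{\Theta_0}^\dagger(\mathbb{I})$, suitably normalized. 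We then check that $(W,Z)$ is feasible for $\Xi$ and that $\tr WJ_\N=p$. The factor $d$ and the Choi normalization of $J$ only fix constants.

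\textbf{Step 1: choice of $Q$.} We take $Q=\Pi_{J_{\N'}}$, the projector onto the support of $J_{\N'}$. Then $\tr[\Pi_{J_{\N'}}J_{\Theta_0[\N]}]=p\,\tr J_{\N'}=p$, which gives $\tr WJ_\N=p=\tr ZJ_\N$, where we use $\tr J_{\Theta_0[\N]}=p$. We have $0\le W\le Z$ because $0\le\Pi\le\mathbb{I}$ and $\Delta^\dagger_{\Theta_0}$ is positive.

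\textbf{Step 2: the bound $Z\le d\sigma\otimes\mathbb{I}$.} Here we use the structure of superchannels. For a TP-preserving supermap, the dual of $\Delta_{\Theta_0+\Theta_1}$ sends $\mathbb{I}^{B}$ (the trace functional on TP Choi states, suitably scaled) to an operator of the form $d\sigma^{A_0}\otimes\mathbb{I}^{A_1}$ for some state $\sigma$. This follows from the known characterization of superchannel Choi operators ($\tr_{B_1}J_\Theta = J^{A B_0}$ marginal conditions, Ref.~\cite{gour2019comparison}). Sub-ordering by $\Theta_0$ then gives $Z\le d\sigma\otimes\mathbb{I}$.

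\textbf{Step 3: the constraint over free channels.} For $\M\in\mbb{F}_C$, the RNG property says that $\Theta_0[\M]=q\M''$ with $\M''$ free and $q=\tr ZJ_\M$. Then $\tr WJ_\M=q\,\tr[\Pi_{J_{\N'}}J_{\M''}]\le q\,V_{\mathbb{F}_C}(\N')$ by the definition of $V$. If $q=0$, both sides vanish. Hence $(W,Z)$ is feasible and $p\le\Xi$.

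The main obstacle we expect is Step 2. We must pin down the correct normalization relating normalized Choi states, the dimension factor $d$, and the dual characterization of TP-preservation, namely that the adjoint of a superchannel maps the identity to $d\sigma\otimes\mathbb{I}$. We would first derive this from the decomposition \eqref{eq:SC-structure}: the pre-processing $\mc{E}$ induces the state $\sigma$ on $A_0$ via its Choi marginal. Convexity is not needed in the argument above, but it is needed in order for the resulting $\Xi$ to be a conic program.
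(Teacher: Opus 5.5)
Your proof is correct and follows the paper's argument: you use the same witnesses $W=\Delta_{\Theta_0}^\dagger(\Pi_{J_{\N'}})$ and $Z=\Delta_{\Theta_0}^\dagger(\mathbb{I})$, the same computation $\tr WJ_\N=\tr ZJ_\N=p$, and the same use of the RNG property with the definition of $V_{\mathbb{F}_C}(\N')$ for the free-channel constraint. The only difference is how you justify $Z\le d\sigma\otimes\mathbb{I}$ in Step 2. The paper deduces it from $\tr[ZJ_\mc{O}]\le 1$ for all CPTP $\mc{O}$, which tacitly needs the SDP duality $\max\{\tr ZJ:\ J\ge 0,\ \tr_{A_1}J=\mathbb{I}/d\}=\min\{\tr K/d:\ K\otimes\mathbb{I}\ge Z\}$. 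You instead use that TP-preservation forces $(\Delta_{\Theta_0}+\Delta_{\Theta_1})^\dagger(\mathbb{I})=d\sigma\otimes\mathbb{I}$ exactly, together with $\Delta_{\Theta_1}^\dagger(\mathbb{I})\ge 0$; this is more direct and gives $\sigma\in\mc{D}$ immediately, whereas the paper's proof only writes $0\le\sigma\le\mathbb{I}$.
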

\begin{proof}
    Assume there exists a completely CPTNI-preserving supermap $\Theta_0$ such that $\Theta_0[\N] = p\N'$ for some $p\in(0,1]$. Then,
    \begin{align}
        &p = \tr J_{\Theta_0[\N]} = \tr\Delta_{\Theta_0}(J_\N) = \tr\left[\Delta_{\Theta_0}^\dagger(\mbb{I})J_\N\right], \\
        &p = \tr\left(\Pi_{J_{\N'}} J_{\Theta_0[\N]}\right) = \tr\left[\Pi_{J_{\N'}}\Delta_{\Theta_0}(J_\N)\right] = \tr\left[\Delta_{\Theta_0}^\dagger(\Pi_{J_{\N'}})J_\N\right].
    \end{align}
    where $\Delta_{\Theta_0}^\dagger$ is the dual of $\Delta_{\Theta_0}$. Since $\Theta_0$ is completely CPP, $\Delta_{\Theta_0}$ is completely positive, and therefore $\Delta_{\Theta_0}^\dagger$ is completely positive. Furthermore, for any CPTP superoperator $\mc{O}$, 
    \begin{align}
        \tr\left[\Delta_{\Theta_0}^\dagger(\mbb{I})J_\mc{O}\right] = \tr\left[\Delta_{\Theta_0}(J_\mc{O})\right] = \tr J_{\Theta_0[\mc{O}]} \leq \tr J_\mc{O},
    \end{align}
    which means that $\tr\left\{\left[\mbb{I}-\Delta_{\Theta_0}^\dagger(\mbb{I})\right]J_\mc{O}\right\} \geq 0$ for all $J_\mc{O}\geq0$ with such that $\tr_{A_1}J_\mc{O}^A=\mbb{I}/d$. This implies that $\Delta_{\Theta_0}^\dagger(\mbb{I})\leq d\sigma\otimes\mbb{I}$ with $0\leq\sigma\leq\mbb{I}$ must hold. We thus conclude $\Delta_{\Theta_0}^\dagger(\Pi_{J_{\N'}}) \leq \Delta_{\Theta_0}^\dagger(\mathbb{I}) \leq d\sigma\otimes\mbb{I}$. Now, for any $\mc{M}\in\mbb{F}_C$, 
    \begin{align}
        \tr\left[\Delta_{\Theta_0}^\dagger(\Pi_{J_{\N'}})J_{\mc{M}}\right] &= \tr\left[\Pi_{J_{\N'}}\Delta_{\Theta_0}(J_\mc{M})\right] \\
        &= \tr\Delta_{\Theta_0}(J_\mc{M}) \cdot \tr\left[\Pi_{J_{\N'}}\frac{\Delta_{\Theta_0}(J_\mc{M})}{\tr\Delta_{\Theta_0}(J_\mc{M})}\right] \\
        &\leq V_{\mbb{F}_C}(\N') \cdot \tr\Delta_{\Theta_0}(J_\mc{M}) \\
        &= V_{\mbb{F}_C}(\N') \cdot \tr\left[\Delta_{\Theta_0}^\dagger(\mbb{I})J_\mc{M}\right].
    \end{align}
    Setting $W = \Delta_{\Theta_0}^\dagger(\Pi_{J_{\N'}})$ and $Z = \Delta_{\Theta_0}^\dagger(\mbb{I})$ as optimization variables proves the theorem.
\end{proof}
\begin{proposition}\label{prop:resource-distill-prob-ub}
    In any convex resource theory, the probability of transforming $\N$ into a pure target channel $\Phi$ with fidelity $F$, i.e., $J_\Phi$ is pure, $\Theta_0[\N]=p\N'$ and $\langle\Phi,\N'\rangle\geq F$ where $\Theta=\{\Theta_0,\Theta_1\}$ is an RNG superinstrument is upper bounded by
    \begin{align}
        \Xi_\epsilon(\N,V_{\mbb{F}_C}(\N')^{-1}) = \max \quad& \tr(ZJ_\N) \\
        \text{s.t.} \quad& 0 \leq W \leq Z \leq d\sigma\otimes\mbb{I}, \ \sigma\in\mc{D} \\
        & \tr(WJ_\N) = F\tr(ZJ_\N) \\
        & \tr(WJ_{\mc{M}}) \leq V_{\mathbb{F}_C}(\Phi)\cdot\tr(ZJ_\mc{M}) \quad\forall\mc{M}\in\mathbb{F}_C.
    \end{align}
\end{proposition}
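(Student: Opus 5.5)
The plan is to adapt the proof of the preceding theorem (cf. Ref.~\cite{regula2022tight}, Prop.~18) and replace the exact-target condition by a fidelity condition. Suppose $\{\Theta_0,\Theta_1\}$ is an RNG superinstrument with $\Theta_0[\N]=p\N'$ and $\langle\Phi,\N'\rangle\geq F$. As in that proof, introduce the dual map $\Delta_{\Theta_0}^\dagger$, which is completely positive because $\Theta_0$ is completely CPP. Then set
\begin{align}
Z=\Delta_{\Theta_0}^\dagger(\mbb{I}), \qquad W=\Delta_{\Theta_0}^\dagger(J_\Phi).
\end{align}
Since $J_\Phi$ is pure, it is a rank-one projector, so $0\leq J_\Phi\leq\mbb{I}$. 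Complete positivity then gives $0\leq W\leq Z$. The bound $Z\leq d\sigma\otimes\mbb{I}$ is the trace-non-increasing argument already given: $\tr[ZJ_\mc{O}]=\tr J_{\Theta_0[\mc{O}]}\leq 1$ for every CPTP $\mc{O}$, and one then dualizes over Choi states with marginal $\mbb{I}/d$.

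The objective and the fidelity constraint come next. First, $p=\tr J_{\Theta_0[\N]}=\tr[\Delta_{\Theta_0}(J_\N)]=\tr(ZJ_\N)$, so the success probability is exactly the objective. Second,
\begin{align}
\tr(WJ_\N)=\tr[J_\Phi\Delta_{\Theta_0}(J_\N)]=p\langle\Phi,\N'\rangle\geq F\,\tr(ZJ_\N).
\end{align}
This is an inequality rather than the stated equality. To obtain equality, I would replace $W$ by a convex combination $W_t=tW+(1-t)\,\tr$-rescaled version of $W$. The simplest choice is $W_t=tW$ with $t\in[F/\langle\Phi,\N'\rangle,1]$. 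This preserves $0\leq W_t\leq Z$ and only weakens the free-channel constraint, so equality can be enforced without loss of generality.

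The main step is the constraint over free channels. For any $\mc{M}\in\mbb{F}_C$, RNG implies $\Theta_0[\mc{M}]=q\,\mc{M}'$ with $\mc{M}'\in\mbb{F}_C$ and $q=\tr(ZJ_\mc{M})$. Then
\begin{align}
\tr(WJ_\mc{M})=q\,\tr(J_\Phi J_{\mc{M}'})\leq V_{\mbb{F}_C}(\Phi)\,\tr(ZJ_\mc{M}),
\end{align}
because $\Pi_{J_\Phi}=J_\Phi$ for a pure target, so $V_{\mbb{F}_C}(\Phi)=\max_{\mc{M}}\langle\Phi,\mc{M}\rangle$. The case $q=0$ is trivial, since then $W\leq Z$ forces both sides to vanish. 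Purity is therefore exactly what allows $J_\Phi$ to play the role of the support projector from the previous theorem. Collecting these facts shows that $(W,Z,\sigma)$ is feasible and achieves value $p$, so $p\leq\Xi_\epsilon$.

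The delicate point is the equality constraint $\tr(WJ_\N)=F\tr(ZJ_\N)$. If the scaling argument above is awkward, I would instead read the program with ``$\geq$'' in that constraint, which gives a relaxation that yields the same upper bound. Convexity of the theory is not actually needed for this bound; it matters only if one also wants to show that the bound is achievable.
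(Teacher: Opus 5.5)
Your proof is correct and follows the paper's argument essentially step for step. It uses the same feasible point $W=\Delta_{\Theta_0}^\dagger(J_\Phi)$, $Z=\Delta_{\Theta_0}^\dagger(\mbb{I})$, the same use of purity to identify $\Pi_{J_\Phi}$ with $J_\Phi$ in the free-channel constraint, and the same rescaling of $W$ to turn $\tr(WJ_\N)\geq F\tr(ZJ_\N)$ into an equality; for that last step you need exactly $t=F/\langle\Phi,\N'\rangle$, not an arbitrary $t$ in your interval. Your remark that convexity is never used also matches the paper, whose proof does not invoke it either.
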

\begin{proof}
    We first note that we can relax the condition $\tr(WJ_\N) = F\tr(ZJ_\N)$ to $\tr(WJ_N) \geq F\tr(ZJ_\N)$. This is because if there exist $W$ and $Z$ such that $\tr(WJ_\N)>F\tr(ZJ_\N)$, then we can rescale $W$ to $W'$ such that $\tr(W'J_\N)=F\tr(ZJ_\N)$. This new choice of $W'$ and $Z$ is still feasible, and attains the same objective value.

    For any CPTNI-preserving supermap $\Theta_0$ such that $\Theta_0[\N] = p\N'$ where $\langle\Phi,\N'\rangle\geq F$, we have
    \begin{align}
        & p = \tr(J_{\Theta_0[\N]}) =  \tr(\mbb{I}\Delta_{\Theta_0}(J_\N)) = \tr\left(\Delta_{\Theta_0}^\dagger(\mbb{I})J_\N\right), \\
        &\tr\left(\Delta_{\Theta_0}^\dagger(J_\Phi)J_\N\right) = \tr\left(J_\Phi\Delta_{\Theta_0}(J_\N)\right) = \tr\left(J_\Phi J_{\Theta_0[\N]}\right) = \langle\Phi,p\N'\rangle \geq pF = \tr(\Delta_{\Theta_0}^\dagger(\mbb{I}) J_\N)F  
    \end{align}
    Moreover,
    \begin{align}
        \tr\left(\Delta_{\Theta_0}^\dagger(\Phi)J_\mc{M}\right) &= \tr(\Delta_{\Theta_0}(J_\mc{M}))\frac{\tr J_\Phi \Delta_{\Theta_0}(J_\mc{M})}{\tr(\Delta_{\Theta_0}(J_\mc{M}))} \\
        &= \tr(\Delta_{\Theta_0}(J_\mc{M}))\frac{\tr J_\Phi J_{\Theta_0[\mc{M}]}}{\tr(J_{\Theta_0[\mc{M}]})} \\
        &\leq \tr(\Delta_{\Theta_0}(J_\mc{M})) V_{\mbb{F}_C}(\Phi) \\
        &= V_{\mbb{F}_C}(\Phi) \cdot \tr\left(\Delta_{\Theta_0}^\dagger(\mbb{I})J_\mc{M}\right).
    \end{align}
    Therefore, $W=\Delta_{\Theta_0}^\dagger(\Phi)$ and $Z=\Delta_{\Theta_0}^\dagger(\mbb{I})$ are feasible solutions. Optimizing over such $W$ and $Z$ yields the upper bound.
\end{proof}

\section{Equivalence to Probabilistic Distillation of Quantum Channels}
In this section, we will prove the claim made in the main text that $\Theta_0$ can be chosen without loss of generality such that
\begin{align}
    \tr\Theta_0[\mc{N}_1\otimes\cdots\otimes\mc{N}_n](\rho)=p(\Theta_0;\mc{N}_1,\cdots,\mc{N}_n) \quad \forall\,\rho.
\end{align}
\begin{proof}
    Suppose $\Theta_0$ is a supermap such that 
    \begin{align}
        \inf_\psi F\left(\frac{\mc{O}^B(\psi^{RB_0})}{\tr\mc{O}^B(\psi^{RB_0})},\psi^{RB_0}\right) > F(\mc{N}_i) \quad\forall i.
    \end{align}
    where $\mc{O} = \Theta_0[\N_1\otimes\cdots\otimes\N_n]$ is a CPTNI map and not necessarily a sub-normalized quantum channel. Suppose we perform channel twirling on $\mc{O}$, then
    \begin{align}
        \mc{T}^{B\to B}[\mc{O}^B](\psi^{RB_0}) = p\mc{D}_F^B(\psi^{RB_0}),
    \end{align}
    where $p = \tr\left(\mc{O}^B(\phi_+^{RB_0})\right)$ and $pF = \tr\left(\mc{O}^B(\phi_+^{RB_0})\phi_+^{RB_1}\right)$. This can be derived by invoking the Choi-Jami{\l}kowski isomorphism, where channel twirling becomes $(U\otimes U^*)$-twirling on the Choi operator $J_\mc{O}$. So,
    \begin{align}
        F\left(\frac{\mc{T}^{B\to B}[\mc{O}^B](\psi^{RB_0})}{\tr~\mc{T}^{B\to B}[\mc{O}^B](\psi^{RB_0})},\psi^{RB_0}\right) &= F(\mc{D}_F^B(\psi^{RB_0}),\psi^{RB_0}) \\
        &= F\left(F\psi^{RB_0}+(1-F)\frac{d\psi^R\otimes\mbb{I}^{B_0}-\psi^{RB_0}}{d^2-1},\psi^{RB_0}\right) \\
        &= F + (1-F)\frac{d\tr(\psi^R)^2-1}{d^2-1} \\
        &\geq F.
    \end{align}
    But we also have
    \begin{align}
        F\left(\frac{\mc{O}^B(\phi_+^{RB_0})}{\tr~\mc{O}^B(\phi_+^{RB_0})},\phi_+^{RB_0}\right) = F.
    \end{align}
    Therefore, 
    \begin{align}
        \inf_\psi F\left(\frac{\mc{T}^{B\to B}[\mc{O}^B](\psi^{RB_0})}{\tr~\mc{T}^{B\to B}[\mc{O}^B](\psi^{RB_0})},\psi^{RB_0}\right) &\geq F\left(\frac{\mc{O}^B(\phi_+^{RB_0})}{\tr~\mc{O}^B(\phi_+^{RB_0})},\phi_+^{RB_0}\right) \\ &\geq \inf_\psi F\left(\frac{\mc{O}^B(\psi^{RB_0})}{\tr\mc{O}^B(\psi^{RB_0})},\psi^{RB_0}\right) .
    \end{align}
    This means that performing channel twirling cannot decrease the worst-case fidelity. Therefore, we can without loss of generality take $J_{\Theta_0}$ to be invariant under $U\otimes U^*$ twirling on $B$, so that it has the form
    \begin{align}
        J_{\Theta_0}=X^{A^{\otimes n}}\otimes\phi_+^B + Y^{A^{\otimes n}}\otimes\mbb{I}^B
    \end{align}
    The action of such a supermap is 
    \begin{align}
        \Theta_0[\mc{N}_1\otimes\cdots\otimes\mc{N}_n] = \tr \left[\left(J_{\mc{N}_1}\otimes\cdots\otimes J_{\mc{N}_n}\right)X^T\right]\mathrm{id} + \tr \left[\left(J_{\mc{N}_1}\otimes\cdots\otimes J_{\mc{N}_n}\right)Y^T\right]\Delta,
    \end{align}
    where $\Delta(\rho)=\mbb{I}/d$ is the completely depolarizing channel. Such a supermap satisfies that $\tr\Theta_0[\mc{N}_1\otimes\cdots\otimes\mc{N}_n](\rho)$ only depends on the channels $\mc{N}_i$ and is fixed for all $\rho$.
\end{proof}

\section{Proof of Proposition \ref{prop:optimality}}
Proposition \ref{prop:optimality} can be proven by computing the projective robustness of a generalized Pauli channel and invoking Propositions \ref{prop:fidelity-bound} and \ref{prop:submultiplicativity}. In the resource theory of communication, the free channels are replacement channels. Therefore, the projective robustness of a generalized Pauli channel is 
\begin{align}
    \Omega(\mc{N}_i)=\inf \left\{\gamma:J_{\mc{N}_i}\leq\lambda \frac{\mathbb{I}}{d}\otimes\sigma\leq\gamma J_{\mc{N}_i},\ \sigma\in\mc{D}\right\}.
\end{align}
Since $J_{\mc{N}_i}$ is invariant under generalized Pauli twirling $\mc{T}_P(\cdot) = \sum_{x,z} \frac{1}{d^2}U_{x,z}\otimes U_{x,z}^*(\cdot)U_{x,z}^\dagger\otimes U_{x,z}^T$, we can without loss of generality take $\sigma$ to be the maximally mixed state. Therefore, 
\begin{align}
     \Omega(\mc{N}_i) &= \inf\{\gamma:J_{\mc{N}_i}\leq\lambda\mathbb{I}/d^2\leq\gamma J_{\mc{N}_i}\}, \\
     &= \frac{p_{max}^{(i)}}{p_{min}^{(i)}}
\end{align}
Moreover, 
\begin{align}
    F_{\mbb{F}_C}(\mathrm{id}) = \max_{\mc{M}\in\mbb{F}_C}\langle\mc{M},\mathrm{id}\rangle = \max_{\mc{M}\in\mbb{F}_C}\tr\left(\left(\frac{\mbb{I}}{d}\otimes\sigma\right)\phi_+\right) = \frac{1}{d^2}.
\end{align}
Using Prop. \ref{prop:fidelity-bound}, $\langle\N_{out},\mathrm{id}\rangle$, which is precisely the entanglement fidelity of $\N_{out}$, must satisfy
\begin{align}
    F_e(\N_{out}) &\leq \frac{\Omega\left(\bigotimes_i\mc{N}_i\right)}{\Omega\left(\bigotimes_i\mc{N}_i\right)+ \frac{1-1/d^2}{1/d^2}} \\
    &\leq \frac{\prod_i \Omega(\N_i)}{\prod_i\Omega(\N_i)+(d^2-1)} \\
    &= \frac{\prod_i p_{max}^{(i)}} {\prod_i p_{max}^{(i)}+(d^2-1)\prod_i p_{min}^{(i)}}.
\end{align}
Finally, $F(\N_{out})\leq F_e(\N_{out})$. This establishes the claim.

\section{Proof of Prop. \ref{prop:optimal-prob}}
As discussed previously, the largest probability of converting a channel $\N^{AA'}$ to $\N'^B$ such that $\langle\mathrm{id},\N'\rangle=F'$ using NS super-instruments is
\begin{align}
    \max \quad& \tr(ZJ_\N) \\
    \text{subject to} \quad& 0 \leq W \leq Z \leq d\sigma\otimes\mbb{I} \\
    & \tr(WJ_\N) = F'\tr(ZJ_\N) \\
    & \tr\left[W(\mbb{I}^{A_0A_0'}\otimes\sigma^{A_1A_1'})\right] \leq V_{\mathbb{F}_C}(\mathrm{id})\cdot\tr\left[Z(\mbb{I}^{A_0A_0'}\otimes\sigma^{A_1A_1'})\right] \quad\forall\sigma\in\mc{D}.
\end{align}
The last constraint is equivalent to $\tr_{A_0A_0'}W \leq \frac{1}{d^2}\tr_{A_0A_0'} Z$, where $d$ is the dimension of the input and output of the identity channel. Since we are interested in the case where $\N^{AA'}=\mc{D}_{F_1}^A\otimes\mc{D}_{F_2}^{A'}$, the Choi operator is invariant under twirling on $A_0A_1$ and on $A_0'A_1'$, and so we can without loss of generality take $W$ and $Z$ to be of the form
\begin{align}
    &W = w_1 \phi_+^A \otimes \phi_+^{A'} + w_2 \phi_+^A \otimes (\phi_+^\perp)^{A'} + w_3 (\phi_+^\perp)^A \otimes \phi_+^{A'} + w_4 (\phi_+^\perp)^A \otimes (\phi_+^\perp)^{A'} \\
    &Z = z_1 \phi_+^A \otimes \phi_+^{A'} + z_2 \phi_+^A \otimes (\phi_+^\perp)^{A'} + z_3 (\phi_+^\perp)^A \otimes \phi_+^{A'} + z_4 (\phi_+^\perp)^A \otimes (\phi_+^\perp)^{A'}
\end{align}
The SDP thus reduces to a linear program:
\begin{align}
    \max \quad& \vec{f}\cdot\vec{z} \\
    \text{subject to} \quad& 0 \leq \vec{w} \leq \vec{z} \leq 1 \\
    & \vec{f}\cdot\vec{w} = \frac{F_1F_2}{F_1F_2+\frac{1}{d^2-1}(1-F_1)(1-F_2)} \vec{f}\cdot\vec{z} \\
    & \vec{a}\cdot\vec{w} \leq \frac{1}{d^2} \vec{a}\cdot\vec{z}
\end{align}
where $\vec{f} = (F_1F_2,F_1(1-F_2),(1-F_1)F_2,(1-F_1)(1-F_2))$ and $\vec{a} = (1,d^2-1,d^2-1,(d^2-1)^2)$. The dual of the linear program is
\begin{align}
    \min \quad& \vec{t}\cdot\vec{1} \\
    \text{subject to} \quad& \vec{f} + \vec{s} - \vec{t} - \mu\frac{F_1F_2}{F_1F_2+\frac{1}{d^2-1}(1-F_1)(1-F_2)}\vec{f} + \frac{\eta}{d^2}\vec{a} = 0 \\
    & \vec{r} - \vec{s} + \mu\vec{f} - \eta\vec{a} = 0 \\
    & \vec{r},\vec{s},\vec{t} \geq0 \\
    &\eta \geq 0.
\end{align}
One can verify that the following is a feasible solution for the dual program when $F_1,F_2>1/d^2$.
\begin{align}
    &\vec{r} = 
    \begin{pmatrix}
      0 \\
      \frac{\eta(d^2-1)^2}{d^2}\left[1-\frac{F_1(1-F_2)}{(d^2-1)F_1F_2}\right]-\frac{pF_1(1-F_2)}{F_1F_2} \\
      \frac{\eta(d^2-1)^2}{d^2}\left[1-\frac{(1-F_1)F_2}{(d^2-1)F_1F_2}\right]-\frac{p(1-F_1)F_2}{F_1F_2} \\
      \frac{\eta(d^2-1)^2}{d^2}\left[1-\frac{p}{d^2F_1F_2}\right]-\frac{p(1-F_1)(1-F_2)}{F_1F_2}
    \end{pmatrix},
    \quad \vec{s} = 
    \begin{pmatrix}
        \mu F_1F_2 - \eta \\
        \frac{\eta(d^2-1)}{d^2}\left[\frac{(d^2-1)F_1(1-F_2)}{(1-F_1)(1-F_2)}-1\right] \\
        \frac{\eta(d^2-1)}{d^2}\left[\frac{(d^2-1)(1-F_1)F_2}{(1-F_1)(1-F_2)}-1\right] \\
        0
    \end{pmatrix},
    \quad \vec{t} = 
    \begin{pmatrix}
        p \\ 0 \\ 0 \\ 0
    \end{pmatrix},
    \\
    &\mu = \frac{p}{F_1F_2}\left(1+\frac{(d^2-1)^2\eta}{d^2(1-F_1)(1-F_2)}\right), \\
    &\eta = \max\left\{\frac{d^2pF_1(1-F_2)}{(d^2-1)[(d^2-1)F_1F_2-F_1(1-F_2)]}, \frac{d^2p(1-F_1)F_2}{(d^2-1)[(d^2-1)F_1F_2-(1-F_1)F_2]},  \frac{d^2p(1-F_1)(1-F_2)}{(d^2-1)[(d^2-1)F_1F_2-p]}\right\},
\end{align}
where 
\begin{align}
    p = F_1F_2+\frac{1}{d^2-1}(1-F_1)(1-F_2).
\end{align}
This proves that the success probability is upper bounded by $\vec{t}\cdot\vec{1} = F_1F_2+\frac{1}{d^2-1}(1-F_1)(1-F_2)$.

\section{Proof of Proposition \ref{prop:classical-optimality}}
This is again proven by using the projective robustness. The projective robustness of a classical channel $\mc{P}=\sum_{x,y} \mc{P}(y|x)\op{y}{x}W\op{x}{y}$ is
\begin{align}
    \Omega(\mc{P}) = \inf\left\{\gamma: J_{\mc{P}} \leq \lambda(\mbb{I}/d)\otimes\sigma \leq \gamma J_{\mc{P}},\ \sigma=\sum_y\sigma_y\state{y},\ \sigma_y\geq0,\ \sum_y \sigma_y=1\right\}.
\end{align}
where the Choi operator $J_{\mc{P}}=\sum_{x,y} \mc{P}(y|x)\state{x}\otimes\state{y}$. For each $y$, the constraint requires that 
\begin{align}
    \mc{P}(y|x) \leq \lambda\sigma_y/d \leq \gamma\mc{P}(y|x) \quad\forall x
\end{align}
So the optimal choice of $\sigma_y$ is such that $\lambda\sigma_y/d=\max_x \mc{P}(y|x)$, and the optimal $\gamma$ is $\displaystyle\frac{\max_x \mc{P}(y|x)}{\min_x \mc{P}(y|x)}$. Since the constraint must hold for each $y$, we conclude that the optimal $\gamma$ is
\begin{align}
    \Omega(\mc{P}) = \max_y \frac{\max_x \mc{P}(y|x)}{\min_x \mc{P}(y|x)}.
\end{align}
Now, consider an output channel $\mc{P}_{out}$ with $F(\mc{P}_{out})=\min_x\mc{P}(x|x)=F$. For every $y$, we have $\max_x \mc{P}_{out}(y|x) \geq \mc{P}_{out}(y|y) \geq F$. On the other hand, since $\mc{P}_{out}(x|x) \geq F$ for every $x$, there exists $y^*$ such that $\mc{P}_{out}(y^*|x) \leq \frac{1-F}{d-1}$. Therefore, 
\begin{align}
    \Omega(\mc{P}_{out}) &= \max_y \frac{\max_x \mc{P}_{out}(y|x)}{\min_x \mc{P}_{out}(y|x)} \\
    &\geq \frac{\max_x \mc{P}_{out}(y^*|x)}{\min_x \mc{P}_{out}(y^*|x)} \\
    &\geq \frac{F}{1-F}(d-1).
\end{align}
Thm. \ref{thm:proj-rob-nec-condition} thus requires $\Omega(\mc{P})\geq\Omega(\mc{P}_{out})\geq\frac{F}{1-F}(d-1)$. Rearranging the terms proves the result.

\section{Proof of Theorem \ref{thm:general-rt-suff-condition}}
To prove Thm \ref{thm:general-rt-suff-condition}, we will first require a simple lemma:
\begin{lemma}\label{lemma:Rg=Rs}
    In any convex resource theory with a free generalized channel twirling, $R_s(\Phi)=R_g(\Phi)$ if and only if
    \begin{align}
        \{F:\N_F=F\Phi+(1-F)\Lambda \in \mbb{F}_C\} = [0,F_c] = [0,R_g(\Phi)^{-1}].
    \end{align}
\end{lemma}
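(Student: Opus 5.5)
We need to prove Lemma \ref{lemma:Rg=Rs}: in a convex theory with free twirling, $R_s(\Phi)=R_g(\Phi)$ iff the set of $F$ with $\N_F$ free is exactly $[0,F_c]$ and $F_c=R_g(\Phi)^{-1}$.

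The plan is to reduce both robustness problems to one-parameter problems along the isotropic line. We use two facts about $\mc{T}$. First, $\mc{T}$ is free and linear, and it fixes $\Phi$; indeed $\mc{T}[\Phi]=\tr(PJ_\Phi)\Phi+\tr[(\mbb{I}-P)J_\Phi]\Lambda=\Phi$ because $\tr(J_\Phi P)=1$. Second, $\mc{T}$ maps every CPTP map to an isotropic channel $\N_{\tr(PJ_\M)}$. Take any feasible decomposition $(\Phi+s\M)/(1+s)=\M'\in\mbb{F}_C$ and apply $\mc{T}$. This gives $(\Phi+s\N_{G})/(1+s)=\N_{F'}$ with $\N_{F'}$ free. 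Here $G=\tr(PJ_\M)\in[0,1]$, and $\N_G$ is free whenever $\M$ is. Comparing coefficients of $\Phi$ yields $F'=(1+sG)/(1+s)$.

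\textbf{Computing $R_g$.} For $R_g$ the disturbance is arbitrary, so the optimum takes $G=0$, i.e.\ $\M=\Lambda$. We then need $F'=1/(1+s)$, so $R_g(\Phi)=1/F^*$ with $F^*=\max\{F:\N_F\in\mbb{F}_C\}$. This maximum exists as long as the free set is closed. We also identify $F^*=F_c$. On one hand $\langle\Phi,\N_F\rangle=F$, using $\tr(J_\Phi J_\Lambda)=0$ and the assumption that $J_\Phi$ is pure, so $\tr J_\Phi^2=1$. On the other hand, twirling any free $\M$ preserves $\langle\Phi,\M\rangle$, so $F_c=F^*$. I expect the purity point to be the delicate one. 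If $J_\Phi$ is not pure, I would instead identify $F_c$ through $\tr(PJ_\M)$ and argue directly that $\langle\Phi,\mc{T}[\M]\rangle=\langle\Phi,\M\rangle$ via self-adjointness of the twirl on the relevant span. This gives $F_c=R_g(\Phi)^{-1}$.

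\textbf{Computing $R_s$.} By convexity the free isotropic parameters form an interval $[F_{\min},F_c]$. For $R_s$ the disturbance must be free, so $G\in[F_{\min},F_c]$ and we need $(1+sG)/(1+s)\in[F_{\min},F_c]$. Solving $(1+sG)/(1+s)\le F_c$ gives $s\ge (1-F_c)/(F_c-G)$, which is minimized at $G=F_{\min}$. Hence
\begin{align}
R_s(\Phi)=1+\frac{1-F_c}{F_c-F_{\min}}=\frac{1-F_{\min}}{F_c-F_{\min}}.
\end{align}
If $F_{\min}=F_c$, the interval is a single point and $R_s=\infty$.

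\textbf{Comparing the two.} We have $R_s=R_g=1/F_c$ iff $(1-F_{\min})F_c=F_c-F_{\min}$. This is equivalent to $F_{\min}(1-F_c)=0$, and since $F_c<1$ (because $\Phi$ is resourceful), it holds iff $F_{\min}=0$. That is exactly the statement that the free isotropic set is $[0,F_c]$, which proves both directions.

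The main obstacle is the justification that optimizing over isotropic decompositions is without loss of generality in both directions. Twirling gives one inequality. The reverse is immediate, since isotropic channels are themselves admissible disturbances: $\Lambda$ is CPTP, and $\N_{F_{\min}}$ is free.
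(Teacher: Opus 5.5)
Your proof is correct and is essentially the paper's argument. Both twirl onto the isotropic line to obtain $R_g(\Phi)=1/F_c$ and reduce $R_s(\Phi)$ to a one-parameter problem. The difference is in the ``only if'' direction: the paper argues that the twirled optimizer $\M^*$ forces $F=F_c$ and $\Lambda\in\mbb{F}_C$, whereas you solve the one-parameter problem outright, $R_s(\Phi)=(1-F_{\min})/(F_c-F_{\min})$. Your formula also justifies the paper's later assertion, in the proof of Thm.~\ref{thm:general-rt-nec-condition}, that $R_s(\Phi)=\infty$ when $F_{\min}=F_c$.

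One correction: twirling does not preserve $\langle\Phi,\M\rangle$ in general, since that needs $\tr(PJ_\M)=\tr(J_\Phi J_\M)$ (e.g.\ $P=J_\Phi$). For pure $J_\Phi$ you only have $\langle\Phi,\mc{T}[\M]\rangle=\tr(PJ_\M)\geq\langle\Phi,\M\rangle$, because $P\geq J_\Phi$. Together with $\langle\Phi,\N_F\rangle=F$, this inequality still yields $F_c=\max\{F:\N_F\in\mbb{F}_C\}$. For mixed $J_\Phi$, however, one has $F_c<R_g(\Phi)^{-1}$ whenever $R_g(\Phi)<\infty$. So your proposed fallback cannot work, and the purity assumption (tacit in the paper) is genuinely needed.
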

\begin{proof}
    In any resource theory with a free generalized twirling, for any feasible $\M\in\mbb{F}_C$ such that $\Phi\leq_{c.p.}\lambda\M$, it holds that $\Phi\leq_{c.p.}\lambda\mc{T}[\M]=\lambda(F\Phi+(1-F)\Lambda)$ for some $F$. Clearly, the minimal $\lambda$ is attained for the largest $F$ such that $F\Phi+(1-F)\Lambda$ is free. It is not hard to see that the largest such $F$ is $F_c = \max_{\M\in\mbb{F}_C}\langle\Phi,\M\rangle$. Therefore, $R_g(\Phi)=1/F_c$.

    Let us now prove the only if direction. If $R_s(\Phi)=R_g(\Phi)$, then there exist an optimal  $\M^*$ such that $\Phi\leq_{\mbb{F}_C}R_g(\Phi)\M^*$. From this, we conclude that $\Phi \leq_{\mbb{F}_C} R_g(\Phi)\mc{T}[\M^*] \allowbreak= R_g(\Phi)(F\Phi+(1-F)\Lambda) = (F\Phi+(1-F)\Lambda)/F_c$ for some $F\leq F_c$ (since $F_c$ is the largest $F$ such that the isotropic channel $\N_F$ is free). This is only possible if $F=F_c$ and $\Lambda$ is free. By convexity, every isotropic channel $\N_F$ with $F\in[0,F_c]$ is free.

    Conversely, if free isotropic channels consists of $\N_F$ with $F\in[0,F_c]$, then we solve $R_s(\Phi)=\min_{\M\in\mbb{F}_C}\{\lambda:\Phi\leq_{\mbb{F}_C}\lambda\M\}$ by taking, without loss of generality, $M$ to be isotropic channels. This yields $R_s(\Phi)=1/F_c$ and proves the lemma.
\end{proof}

\begin{proof}[Proof of Thm. \ref{thm:general-rt-suff-condition}]
    We first prove $\Theta$ is RNG for when the resource theory satisfies conditions 1 and 2(b). Because the isotropic channel is invariant under twirling, the optimal $\M^*$ in the optimization program for $\Omega(\N_F)$ can be chosen to be an isotropic channel. Then, it is not hard to see that the optimal $\lambda^*\M^*$ is
    \begin{align}
        \lambda^*\mc{M}^* = F\Phi+\frac{F}{F_c}(1-F_c)\Lambda = \frac{F}{F_c}\N_{F_c},
    \end{align}
    and $\Omega(\N_F)=\frac{F}{1-F}\frac{1-F_c}{F_c}$. If $\Omega(\N_F\otimes\N_F)=\Omega(\N_F)^2$, then we can take the optimal solution for $\Omega(\N_F^{\otimes 2})$ to be $\mc{M}^*\otimes\mc{M}^*$. Consider the dual program of $\Omega(\N_F\otimes\N_F)$ 
    \begin{align}
        \sup\{\langle\mc{A}, \N_F\otimes\N_F\rangle:\  \langle\mc{B},\N_F\otimes\N_F\rangle=1,\ \langle\mc{B}-\mc{A},\M\rangle\geq0 \ \forall\M\in\mathbb{F}_C\} \label{eq:dual}
    \end{align}
    with dual variables $\mc{A}$ and $\mc{B}$, both of which are CP. Notice that if $\mc{A}^*$ and $\mc{B}^*$ are optimal solutions, then $\mc{T}^\dagger\otimes\mc{T}^\dagger[\mc{A}^*]$ and $\mc{T}^\dagger\otimes\mc{T}^\dagger[\mc{B}^*]$ are also optimal. This is because $\N_F$ is invariant under channel twirling and
    \begin{align}
        \langle\mc{T}^\dagger\otimes\mc{T}^\dagger[\mc{B}^*]-\mc{T}^\dagger\otimes\mc{T}^\dagger[\mc{A}^*],\M \rangle = \langle\mc{B}^*-\mc{A}^*,\mc{T}\otimes\mc{T}[\M]\rangle \geq 0 \qquad\forall\M\in\mathbb{F}_C.
    \end{align}
    It can be readily verified that
    \begin{align}
        \mc{T}^\dagger[\N] = \langle\N,\Phi\rangle\mc{P}+\langle\N,\Lambda\rangle(d^2\Delta-\mc{P}),
    \end{align}
    where $\Delta$ is the completely depolarizing channel and $\mc{P}$ is the CP map whose corresponding Choi operator is $P$. This means that we can take $\mc{A}^*$ and $\mc{B}^*$ to be of the form $a_1\mc{P}\otimes\mc{P} + a_2\mc{P}\otimes(d^2\Delta-\mc{P}) + a_3(d^2\Delta-\mc{P})\otimes\mc{P} + a_4(d^2\Delta-\mc{P})\otimes(d^2\Delta-\mc{P})$. By complementary slackness, the optimal dual variables must satisfy 
    \begin{align}
        &\langle\mc{A}^*,\ \M^*\otimes\M^*-\N_F\otimes\N_F\rangle = 0, \\
        &\langle\mc{B}^*,\ \Omega(\N_F\otimes\N_F)\N_F\otimes\N_F-\M^*\otimes\M^*\rangle = 0.
    \end{align}
    By direct calculation, we see that $\M^*\otimes\M^*-\N_F\otimes\N_F$ has nonzero terms of $\Phi\otimes\Lambda$, $\Lambda\otimes\phi$, and $\Lambda\otimes\Lambda$, while $\Omega(\N_F\otimes\N_F)\N_F\otimes\N_F-\M^*\otimes\M^*$ has nonzero terms of $\Phi\otimes\Phi$, $\Phi\otimes\Lambda$, and $\Lambda\otimes\Phi$. This implies $\mc{A}^*$ must necessarily be of the form $x\mc{P}\otimes\mc{P}$, and similarly $\mc{B}^*$ must have the form $y(d^2\Delta-\mc{P})\otimes(d^2\Delta-\mc{P})$. Plugging $\mc{A}^*$ and $\mc{B}^*$ back into the conditions in Eq.~\eqref{eq:dual}, namely $\langle\mc{B},~\N_F\otimes\N_F\rangle=1$ and $\langle\mc{B}^*-\mc{A}^*,\M\rangle \geq 0$ for all $\M\in\mbb{F}_C$, which holds in particular for $\M=\N_{F_c}\otimes\N_{F_c}$, we conclude that $y = \frac{1}{(1-F)^2}$ and $x\leq\frac{(1-F_c)^2}{F_c^2(1-F)^2}$. On the other hand, strong duality clearly holds for $\Omega(\N_F\otimes\N_F)$. This is because we can choose $\M^*\otimes\M^*$ and sufficiently large $\lambda$ and $\gamma$, whence $\N_F\otimes\N_F < \lambda \M^*\otimes\M^* < \gamma\N_F\otimes\N_F$. Therefore, the dual optimal is precisely $\left(\frac{F}{1-F}\frac{1-F_c}{F_c}\right)^2$, and so $x$ must be $\frac{(1-F_c)^2}{F_c^2(1-F)^2}$. To summarize, the optimal solutions are
    \begin{align}
        &\mc{A}^* = \frac{(1-F_c)^2}{F_c^2(1-F)^2} \mc{P}\otimes\mc{P} \label{eq:optimal-A}\\
        &\mc{B}^* = \frac{1}{(1-F)^2}(d^2\Delta-\mc{P})\otimes(d^2\Delta-\mc{P}) \label{eq:optimal-B},
    \end{align}
    and they must satisfy $\langle\mc{B}^*-\mc{A}^*,\M\rangle \geq 0\ \forall\M\in\mathbb{F}_C$. In other words,
    \begin{align}
        \langle(d^2\Delta-\mc{P})\otimes(d^2\Delta-\mc{P}),\M\rangle \geq \frac{(1-F_c)^2}{F_c^2}\langle\mc{P}\otimes\mc{P},\M\rangle \quad\forall\M\in\mbb{F}_C. \label{eq:eq1}
    \end{align}
    Equivalently,
    \begin{align}
        \tr[(\mathbb{I}-P)\otimes(\mathbb{I}-P)J_\M] \geq \frac{(1-F_c)^2}{F_c^2}\tr(P\otimes P J_\M) \quad\forall\M\in\mathbb{F}_C.\label{eq:eq2}
    \end{align}     
    Now, we construct $\Theta_0$ as
    \begin{align}
        \Theta_0[\N] = C\left\{\tr[(P\otimes P) J_\N]\Phi + \frac{F_c}{1-F_c}\tr[(\mathbb{I}-P)\otimes(\mathbb{I}-P) J_\N]\Lambda\right\}.
    \end{align}
    This supermap is RNG because for any $\M\in\mathbb{F}_C$, 
    \begin{align}
        \Theta_0[\M] \propto A\Phi + (1-F_c)\Lambda \qquad \text{with}\quad A \leq F_c,
    \end{align}
    which is a free channel.

    Now let us consider the case where conditions 1 and 2(a) are met. When the resource theory is affine, the dual program of $\Omega(\N_F\otimes\N_F)$ becomes
    \begin{align}
        \sup\{\langle\mc{A}, \N_F\otimes\N_F\rangle:\ \tr\, \langle\mc{B},\N_F\otimes\N_F\rangle=1,\ \langle\mc{B}-\mc{A},\M\rangle = 0 \ \forall\M\in\mathbb{F}_C\}.
    \end{align}
    The proof thus proceeds analogously, except that we can immediately conclude that $x=\frac{(1-F_c)^2}{F_c^2(1-F)^2}$ from the fact that $\langle\mc{B}^*-\mc{A}^*,\M\rangle=0$, and the inequalities in Eqs.~\eqref{eq:eq1} and \eqref{eq:eq2} become equalities. We can thus conclude
    \begin{align}
        \Theta_0[\M] \propto F_c\Phi + (1-F_c)\Lambda \in \mbb{F}_C
    \end{align}

    It only remains to prove that $\Theta$ is universal for all states with $\langle\N,\Phi\rangle > F_c$. Since $J_\Phi\in\mathrm{supp}(P)$, we have $\tr(PJ_\N) \geq \tr(J_\Phi J_\N) = \langle\Phi,\N\rangle$. For any $\N_1$ and $\N_2$ such that $F_i \coloneqq \langle\N_i,\Phi\rangle > F_c$, we have
    \begin{align}
        \langle\N_{out},\Phi\rangle &= \frac{\tr PJ_{\N_1}\tr PJ_{\N_2}}{\tr PJ_{\N_1}\tr PJ_{\N_2} + \frac{F_c}{1-F_c}\tr((\mbb{I}-P)J_{\N_1})\tr((\mbb{I}-P)J_{\N_2})} \\
        &>\frac{F_1F_2}{F_1F_2 + \frac{F_c}{1-F_c}(1-F_1)(1-F_2)} \\
        &>\frac{F_1F_c}{F_1F_c+\frac{F_c}{1-F_c}(1-F_1)(1-F_c)} \\
        &= F_1,
    \end{align}
    where we have used the fact that $\frac{xy}{xy+c(1-x)(1-y)}$ is monotonically increasing in $x$ and $y$ for $c>0$ and $x,y\in[0,1]$. By the same argument, we also conclude that $\langle\N_{out},\Phi\rangle>F_2$. This demonstrates universality.
\end{proof}

\section{Proof of Thm. \ref{thm:uniqueness}}
\begin{proof}
    Note that an superinstrument is universal for all channels with $\langle\N,\Phi\rangle>F_c$ if and only if it is universal for all isotropic channels $\N_F$ with $F>F_c$. The latter is further equivalent to the superinstrument being universal for all isotropic channels $\N_F$ with $F\geq F_c$ (see Thm. S6 in the Supplementary Material of Ref.\cite{zang2025no}). Because we only need to focus on isotrpoic channels, we can without loss of generality take $\Theta_0$ to be of the form
    \begin{align}
        \Theta_0 = &a_1\tr[J_\N(P\otimes P)]\Phi + a_2\tr[J_\N(P\otimes(\mathbb{I}-P))]\Phi + a_3\tr[J_\N((\mathbb{I}-P)\otimes P)]\Phi \notag\\
        &+ a_4\tr[J_\N(\mathbb{I}-P)\otimes(\mathbb{I}-P)]\Phi + a_5\tr[J_\N((\mathbb{I}-P)\otimes(\mathbb{I}-P))]\Lambda
    \end{align}
    Since $\Theta_0$ must map the free isotropic channel $F_c\Phi+(1-F_c)\Lambda$ to a free channel, we deduce
    \begin{align}
        \frac{F_c^2a_1 + F_c(1-F_c)a_2 + (1-F_c)F_ca_3 + (1-F_c)^2a_4}{a_5(1-F_c)^2} \leq \frac{F_c}{1-F_c},
    \end{align}
    which implies 
    \begin{align}
        a_1 + \frac{1-F_c}{F_c}a_2 + \frac{1-F_c}{F_c}a_3 + \left(\frac{1-F_c}{F_c}\right)^2a_4 \leq \frac{1-F_c}{F_c}a_5. \label{eq:free-instrument-condition}
    \end{align}
    Now, suppose the two given channels are isotropic channels $\N_{F_1}$ and $\N_{F_2}$, then such a $\Theta_0$ yields an output channel such that
    \begin{align}
        \langle\N_{out},\Phi\rangle = \frac{a_1F_1F_2+a_2F_1(1-F_2)+a_3(1-F_1)F_2+a_4(1-F_1)(1-F_2)}{a_1F_1F_2+a_2F_1(1-F_2)+a_3(1-F_1)F_2+a_4(1-F_1)(1-F_2)+a_5(1-F_1)(1-F_2)}.
    \end{align}
    In particular, when $F_2=F_c$, 
    \begin{align}
        &1-\langle\N_{out},\Phi\rangle \notag\\
        &=\frac{a_5(1-F_1)(1-F_c)}{a_1F_1F_c+a_2F_1(1-F_c)+a_3(1-F_1)F_c+a_4(1-F_1)(1-F_c)+a_5(1-F_1)(1-F_c)} \\
        &= \frac{a_5(1-F_1)(1-F_c)}{F_1F_c\left[a_1 + \frac{1-F_c}{F_c}a_2 + \frac{1-F_1}{F_1}a_3 + \frac{(1-F_1)(1-F_c)}{F_1F_c}a_4\right]+a_5(1-F_1)(1-F_c)} \\
        &\geq \frac{a_5(1-F_1)(1-F_c)}{F_1F_c\left[a_1 + \frac{1-F_c}{F_c}a_2 + \frac{1-F_c}{F_c}a_3 + \frac{(1-F_c)^2}{F_c^2}a_4\right]+a_5(1-F_1)(1-F_c)} \\
        &\geq \frac{a_5(1-F_1)(1-F_c)}{a_5F_1(1-F_c)+a_5(1-F_1)(1-F_c)} = 1-F_1
    \end{align}
    where the first inequality is strict for $F_1>F_c$ if one of $a_3$ and $a_4$ is nonzero, and the second inequality is strict if Ineq.~\eqref{eq:free-instrument-condition} is strict. Therefore, if $\Theta$ is universal, then it must hold that $a_3=a_4=0$ and that Ineq.~\eqref{eq:free-instrument-condition} is an equality. Following the same argument for the case where $F_1=F_c$, we can conclude that $a_2$ must be zero as well. This concludes the proof.
\end{proof}

\section{Proof of Thm. \ref{thm:general-rt-nec-condition}}
\begin{proof}
    If there exists an isotropic channel $\N_F$ whose projective robustness is sub-multiplicative, it must be that $\mc{A}^*$ and $\mc{B}^*$ given in Eqs.~\eqref{eq:optimal-A} and \eqref{eq:optimal-B} are not feasible, which means that there exist $\M\in\mbb{F}_C$ such that
    \begin{align}
        \langle\mc{A}^*,\M\rangle - \langle\mc{B}^*,\M\rangle = \tr[(P\otimes P)J_\M] - \frac{F_c^2}{(1-F_c)^2}\tr[((\mbb{I}-P)\otimes(\mbb{I}-P))J_\M] > 0, 
    \end{align}
    For such an $\M$, $\Theta_0$ [Eq.~\eqref{eq:general-supermap}] will map $\M$ into an isotropic channel $F\Phi+(1-F)\Lambda$ with $F>F_c$, which is not free.

    By Lemma \ref{lemma:Rg=Rs}, if $R_g<R_s$, then $\{F:\N_F\in\mbb{F}_C\}=[F_{min},F_c]$ for some $F_{min}>0$. Furthermore, if $F_{min}=F_c$, then solving $R_s(\Phi)$ yields $R_s(\Phi)=\infty$. Therefore, $\{F:\N_F\in\mbb{F}_C\}=[F_{min},F_c]$ with $F_{min}\in(0,F_c)$. Then, 
    \begin{align}
        \Theta_0(\N_{F_{min}}^{\otimes 2}) = F_{min}^2\Phi + \frac{F_c}{1-F_c}(1-F_{min})^2\Lambda
    \end{align}
    Now,
    \begin{align}
        \frac{F_{min}^2}{\frac{F_c}{1-F_c}(1-F_{min})^2} < \frac{F_{min}^2}{\frac{F_{min}}{1-F_{min}}(1-F_{min})^2} = \frac{F_{min}}{1-F_{min}}.
    \end{align}
    This means that $\Theta_0(\N_{F_{min}}^{\otimes 2})$ is proportional to a channel that is not free. Therefore, $\Theta_0$ is not RNG.
\end{proof}

\end{document}